\newtheorem{thm}{Theorem}[section]
\newtheorem{lem}{Lemma}[section]
\theoremstyle{definition}
\newtheorem{defn}{Definition}[section]
\theoremstyle{remark}
\newtheorem{example}{Example}
\newtheorem{rem}{Remark}[section]
\numberwithin{equation}{section}
\title[Quantitative Properties of EM transmission eigenfunctions and application]{Geometrical and topological properties of transmission resonance and artificial mirage}
\author{Youjun Deng}
\address{School of Mathematics and Statistics, HNP-LAMA, Central South University, Changsha, Hunan, China
}
\email{youjundeng@csu.edu.cn}
\author{Hongyu Liu}
\address{Department of Mathematics, City University of Hong Kong, Kowloon, Hong Kong, China}
\email{hongyu.liuip@gmail.com, hongyliu@cityu.edu.hk}
\author{Xianchao Wang}
\address{Department of Mathematics, City University of Hong Kong, Kowloon, Hong Kong, China}
\email{xcwang90@gmail.com}
\author{Wei Wu}
\address{Department of Mathematics, Hong Kong Baptist University, Hong Kong, China}
\email{wei-wu@hkbu.edu.hk}
\date{} 
\begin{document}

\maketitle

\begin{abstract}
Transmission eigenfunctions are certain interior resonant modes that are of central importance to the wave scattering theory. In this paper, we present the discovery of novel global rigidity properties of the transmission eigenfunctions associated with the Maxwell system. It is shown that the transmission eigenfunctions carry the geometrical and topological information of the underlying domain. We present both analytical and numerical results of these intriguing rigidity properties. As an interesting application, we propose an illusion scheme of artificially generating a mirage image of any given optical object.

\medskip

\noindent{\bf Keywords:}~~electromagnetic scattering; Maxwell system; transmission eigenfunctions; surface localization; topological structure; artificial mirage

\noindent{\bf 2010 Mathematics Subject Classification:}~~35P25, 58J50, 35R30, 78A40

\end{abstract}

\section{Introduction}


We are concerned with the spectral geometry of the interior transmission eigenvalue problem arising in the time-harmonic electromagnetic (EM) scattering described by the Maxwell system. Let $\Omega$ be a bounded Lipschitz domain in $\mathbb{R}^3$ such that $\mathbb{R}^3\backslash\overline{\Omega}$ is connected. Let $\varepsilon$ and $\mu$ be bounded positive functions such that $\mathrm{supp}(1-\varepsilon)\cup\mathrm{supp}(1-\mu)\subset\Omega$. In the physical context, $\varepsilon$ and $\mu$ are the optical parameters of the space medium, and respectively signify the electric permittivity and magnetic permeability. Throughout, we assume that $\mu\equiv 1$. Consider a pair of incident EM waves $(\mathbf{E}^i, \mathbf{H}^i)$ that are entire solutions to the following Maxwell system:
\begin{equation}\label{eq:in1}
\nabla\wedge\mathbf{E}^i-\mathrm{i}k\mathbf{H}^i=\mathbf{0},\quad\nabla\wedge\mathbf{H}^i+\mathrm{i}k\mathbf{E}^i=\mathbf{0}\quad\mbox{in}\ \ \mathbb{R}^3,
\end{equation}
where $k\in\mathbb{R}_+$ and $\mathrm{i}:=\sqrt{-1}$. Here, $k$ signifies the wavenumber of the EM wave propagation. The EM inhomogeneity $(\Omega, \varepsilon)$ interrupts the incident fields and leads to the scattered fields $\mathbf{E}^s$ and $\mathbf{H}^s$. Let $\mathbf{E}(\mathbf{x})=\mathbf{E}^i(\mathbf{x})+\mathbf{E}^s(\mathbf{x})$ and $\mathbf{H}(\mathbf{x})=\mathbf{H}^i(\mathbf{x})+\mathbf{H}^s(\mathbf{x})$ denote the total electric and magnetic fields, respectively. The EM scattering is governed by the following Maxwell system:
\begin{equation}\label{eq:maxwell1}
\left\{
\begin{aligned}
&~~\nabla\wedge\mathbf{E}-\mathrm{i}k\mathbf{H}=\mathbf{0}, && \mbox{in}\ \mathbb{R}^3,\\
&~~\nabla\wedge\mathbf{H}+\mathrm{i}k\varepsilon\mathbf{E}=\mathbf{0} && \mbox{in}\ \mathbb{R}^3,\\
&\lim _{|{\mathbf x}| \rightarrow \infty}\left(\mathbf{H}^{s} \wedge {\mathbf x}-|{\mathbf x} | \mathbf{E}^{s}\right)={\mathbf 0}.
\end{aligned}
\right.
\end{equation}
The last limit in \eqref{eq:maxwell1} is known as the Silver-M\"uller radiation condition, which characterizes the outgoing nature of the scattered fields and holds uniformly in the angular variable $\hat{\mathbf{x}}:=\mathbf{x}/|\mathbf{x}|\in\mathbb{S}^{2}:=\left\{\mathbf{x} \in \mathbb{R}^{3} ;|\mathbf{x}|=1\right\}$. The well-posedness of the scattering problem \eqref{eq:maxwell1} can be conveniently found in \cite{LRX,Monk}. There exists a unique pair of solutions $({\mathbf E}, {\mathbf H}) \in$ $H_{\mathrm{loc}}(\mathrm{curl} , \mathbb{R}^{3}) \times H_{\mathrm{loc }}(\mathrm{curl}, \mathbb{R}^{3})$ which admits the following asymptotic expansions as $|\mathbf{x}|\rightarrow\infty$:
\begin{equation}\label{eq:far}
	\mathbf{Q}(\mathbf{x} )=\mathbf{Q}^i(\mathbf{x})+\frac{{e}^{\mathrm{i} k |\mathbf{x}|}}{|\mathbf{x} |}\mathbf{Q}_{\infty}(\hat{\mathbf{x}} )+\mathcal{O}\left(\frac{1}{|\mathbf{x} |^2} \right),\quad \mathbf{Q}:=\mathbf{E}, \mathbf{H}.
\end{equation}
The functions $\mathbf{E}_{\infty}(\hat{\mathbf{x} })$ and $\mathbf{H}_{\infty}(\hat{\mathbf{x} })$ in \eqref{eq:far} are respectively referred to as the electric and magnetic far field patterns, and satisfy the following one-to-one correspondence,
\[
\mathbf{E}_\infty(\hat{\mathbf{x}})=-\hat{\mathbf{x}}\wedge\mathbf{H}_\infty(\hat{\mathbf{x}})\quad\mbox{and}\quad \mathbf{H}_\infty(\mathbf{x})=\hat{\mathbf{x}}\wedge\mathbf{E}_\infty(\hat{\mathbf{x}}),\quad\forall\hat{\mathbf{x}}\in\mathbb{S}^2.
\]

We next consider a particular case that non-scattering, a.k.a invisibility, occurs, namely $\mathbf{Q}_\infty\equiv\mathbf{0}$. In such a case, by Rellich's Theorem \cite{CK}, one has $\mathbf{Q}^s=\mathbf{0}$ in $\mathbf{R}^3\backslash\overline{\Omega}$. With such an observation, one can directly verify that $\mathbf{Q}_1=\mathbf{Q}|_{\Omega}$ and $\mathbf{Q}_2=\mathbf{Q}^i|_{\Omega}$ fulfil the following PDE system:
\begin{equation}\label{eq:tcase1}
\left\{
\begin{aligned}
 &\nabla\wedge\mathbf{E}_1-\mathrm{i}k\mathbf{H}_1=\mathbf{0},&& \nabla\wedge\mathbf{H}_1+\mathrm{i}k\varepsilon \mathbf{E}_1=\mathbf{0}\quad && \mbox{in}\ \ \Omega,\medskip\\
 &\nabla\wedge\mathbf{E}_2-\mathrm{i}k\mathbf{H}_2=\mathbf{0},&& \nabla\wedge\mathbf{H}_2+\mathrm{i}k \mathbf{E}_2=\mathbf{0}\quad && \mbox{in}\ \ \Omega,\medskip\\
 &\nu\wedge\mathbf{E}_1=\nu\wedge\mathbf{E}_2,&& \nu\wedge\mathbf{H}_1=\nu\wedge\mathbf{H}_2\quad && \mbox{on}\ \ \partial\Omega,
\end{aligned}
\right.
\end{equation}
where and also in what follows, $\nu$ signifies the exterior unit normal vector to $\partial\Omega$. The system \eqref{eq:tcase1} is referred to as the Maxwell transmission eigenvalue problem. It is clear that $\mathbf{E}_j=\mathbf{0}$ and $\mathbf{H}_j=\mathbf{0}$, $j=1,2$, are trivial solutions. If there exist nontrivial solutions, $(\mathbf{E}_j,\mathbf{H}_j)_{j=1,2}$ are called the transmission eigenfunctions and $k$ is the associated transmission eigenvalue. In this paper, we shall be mainly concerned with real transmission eigenvalues. Hence, if no scattering/invisibility occurs, then the restrictions of the incident and total EM fields are transmission eigenfunctions and the wavenumber is an eigenvalue. After eliminating $\mathbf{H}_j$, $j=1, 2$, in \eqref{eq:tcase1}, we have the following reduced formulation of the transmission eigenvalue problem for $\mathbf{E}_j\in H(\mathrm{curl}, \Omega)$, $j=1, 2$:
\begin{equation}\label{eq:tcase12}
\left\{
\begin{aligned}
&\nabla\wedge(\nabla\wedge\mathbf{E}_1)-k^2\varepsilon\mathbf{E}_1=\mathbf{0},&& \nabla\cdot(\varepsilon\mathbf{E}_1)=\mathbf{0}&&\mbox{in}\ \ \Omega,\medskip\\
&\nabla\wedge(\nabla\wedge\mathbf{E}_2)-k^2\mathbf{E}_2=\mathbf{0},&& \nabla\cdot\mathbf{E}_2=\mathbf{0}&&\mbox{in}\ \ \Omega,\medskip\\
&\nu\wedge\mathbf{E}_1=\nu\wedge\mathbf{E}_2, && \nu\wedge(\nabla\wedge \mathbf{E}_1)=\nu\wedge(\nabla\wedge\mathbf{E}_2)&& \mbox{on}\ \ \partial\Omega.
\end{aligned}
\right.
\end{equation}
For a convenient reference, if one considers the transverse EM scattering (cf. \cite{CDL}), one can deduce the following transmission eigenvalue problem associated with the Helmholtz equation in $\mathbb{R}^2$:
\begin{equation}\label{eq:trans1}
\Delta w+k^2\mathbf{n}^2w=0,\ \
\Delta v+k^2 v =0\ \ \text{in} \ \ \Omega;\ \ \
w=v,\ \ \partial_\nu w=\partial_\nu v  \ \text{on}\ \ \partial \Omega,
\end{equation}
where $\mathbf{n}:=\sqrt{\epsilon}$ is known as the refractive index of the inhomogeneous medium, and $w, v$ correspond to the transverse parts of $\mathbf{E}$ and $\mathbf{H}$.

The spectral study of the transmission eigenvalue problems has a long and colourful history in the literature and has become a central topic in the inverse scattering theory.
The spectral properties of transmission eigenvalues have been intensively and extensively investigated in the literature, including the existence, infiniteness and Weyl's law, and we refer to \cite{CCH,CHreview,CKreview} and the references cited in for the state-of-the-art development on this aspect. Recently, the spectral geometry of transmission eigenfunctions has received considerable attentions in the literature in different physical contexts \cite{Bsource,BLLW,EBL,BL2016,BL2018,BL2017b,BXL,CV1,CLW1,CDHLW,CX,DDL,DCL}, and we refer \cite{Liureview} for survey and review on this aspect. It is pointed out that most of the aforementioned works are concerned with local geometric properties of the transmission eigenfunctions except \cite{CDHLW,Deng2021}, where global rigidity properties of the transmission eigenfunctions were discovered. Our study in the current article is strongly motivated by the related study in \cite{CDHLW}, where it is established that the transmission eigenfunctions associated with the Helmholtz system \eqref{eq:trans1} carry the geometric information of the underlying domain $\Omega$. In fact, it is shown that there exists an infinite sequence of transmission eigenfunctions whose $L^2$-energies concentrate on $\partial\Omega$. This geometrical property has been further used to develop a super-resolution wave imaging scheme associated with the acoustic scattering. In this paper, we shall show that the Maxwell transmission eigenfunctions in \eqref{eq:tcase12} carry both the geometrical and topological properties of the underlying domain $\Omega$. First, we construct sequences of eigenfunctions $(\mathbf{E}_1^{(n)}, \mathbf{E}_2^{(n)})_{n\in\mathbb{N}_+}$ associated with eigenvalues $k_n\rightarrow\infty$ whose $L^2$-energies localize around $\partial\Omega$. In fact, we construct both mono-localized and bi-localized eigenfunctions. Here, by a mono-localized eigenfunction $(\mathbf{E}_1^{(n)}, \mathbf{E}_2^{(n)})$, we mean that only one of the pair of eigenfunctions is surface-localized while the other one is not, and by bi-localization, we mean both $\mathbf{E}_1^{(n)}$ and $\mathbf{E}_2^{(n)}$ are surface-localized. In the radial case with constant refractive index, we give the analytical construction, whereas for the general case, we provide numerical verifications. Second, we show that the transmission eigenfunctions carry the topological structure of the underlying domain $\Omega$ when it has multiple connected components. Using the obtained geometrical and topological results, we propose a novel and interesting application of generating the mirage imaging of a given optical object at a customized location. The main idea is to design suitable illuminating EM fields that are related to the underlying object as well as the associated EM transmission eigenfunctions.

The rest of the paper is organized as follows. Sections 2 and 3 are respectively devoted to the geometrical and topological properties of the EM transmission eigenfunctions. In Section 4, we consider the application to artificial mirage.

\section{Surface-localized transmission eigenmodes}

To begin with, we introduce a qualitative definition of surface-localization for a function $\mathbf{Q}\in L^2(\Omega)^3$. In what follows, for a sufficiently small $\delta\in\mathbb{R}_+$, we set
  \begin{equation*}\label{eq:neigh1}
    \mathcal{N}_{\delta}(\partial \Omega):=\{\mathbf{x}\in \Omega;  \ \mathrm{dist}(\mathbf{x},\partial \Omega)<\delta\}.
  \end{equation*}
\begin{defn}\label{def:1}
  Given a function $ \mathbf{Q}\in L^2(\Omega)^3$. It is said to be surface-localized if there exists a sufficient small constant  $\delta\in\mathbb{R}_+$, such that
  \begin{equation}\label{eq:defn1}
    \frac{\|\mathbf{Q}\|_{L^2(\Omega\backslash\mathcal{N}_{\delta}(\partial \Omega))^3}}{\|\mathbf{Q}\|_{L^2(\Omega)^3}}\ll 1.
  \end{equation}
\end{defn}
By \eqref{eq:defn1}, if a function is surface-localized, then its $L^2$-energy concentrates on $\partial\Omega$. In what follows, we shall make the qualitative relation in \eqref{eq:defn1} more quantitative.

Our main finding in this section can be summarized in the following theorem.

\begin{thm}\label{thm:1}
Let $\Omega$ be a simply connected Lipschitz domain and $\varepsilon$ be a positive constant with $\varepsilon\neq 1$. Consider the Maxwell transmission eigenvalue problem \eqref{eq:tcase1}. Then there exists a sequence of mono-localized eigenfunctions $(\mathbf{E}_1^{(n)}, \mathbf{E}_2^{(n)})_{n\in\mathbb{N}_+}$, and there also exists a sequence of bi-localized eigenfunctions $(\mathbf{E}_1^{(n)}, \mathbf{E}_2^{(n)})_{n\in\mathbb{N}_+}$, associated with the transmission eigenvalues $k_n\rightarrow\infty$.
\end{thm}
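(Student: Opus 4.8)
The plan is to establish the result by explicit construction on a ball $\Omega=B_R:=\{\mathbf{x}\in\mathbb{R}^3;\,|\mathbf{x}|<R\}$, which furnishes the analytical core of the theorem; the extension to a general simply connected Lipschitz domain is then corroborated numerically, in the spirit of the Helmholtz analysis in \cite{CDHLW}. On the ball I would separate variables using the divergence-free vector wave functions (Hansen vectors) built from spherical Bessel functions $j_\ell$ and vector spherical harmonics: for a wavenumber parameter $\kappa$, set
\begin{equation*}
\mathbf{M}_\ell^m(\kappa;\mathbf{x})=\nabla\wedge\big(\mathbf{x}\,j_\ell(\kappa|\mathbf{x}|)Y_\ell^m(\hat{\mathbf{x}})\big),\qquad
\mathbf{N}_\ell^m(\kappa;\mathbf{x})=\frac{1}{\kappa}\nabla\wedge\mathbf{M}_\ell^m(\kappa;\mathbf{x}),
\end{equation*}
which satisfy $\nabla\wedge(\nabla\wedge\,\cdot\,)=\kappa^2(\cdot)$ and are divergence free. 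Writing $\mathbf{E}_1$ in terms of $\mathbf{M}_\ell^m,\mathbf{N}_\ell^m$ with $\kappa_1=\sqrt{\varepsilon}\,k$ and $\mathbf{E}_2$ with $\kappa_2=k$, the TE ($\mathbf{M}$-type) and TM ($\mathbf{N}$-type) modes decouple, and each fixed angular mode contributes independently.

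Next I would impose the transmission conditions on $\partial B_R$, where $\nu=\hat{\mathbf{x}}$. Using the orthogonality of the vector spherical harmonics on $\mathbb{S}^2$ and the known tangential traces of $\mathbf{M}_\ell^m,\mathbf{N}_\ell^m$ on the sphere, the two conditions $\nu\wedge\mathbf{E}_1=\nu\wedge\mathbf{E}_2$ and $\nu\wedge(\nabla\wedge\mathbf{E}_1)=\nu\wedge(\nabla\wedge\mathbf{E}_2)$ collapse, for each $(\ell,m)$, to a homogeneous $2\times 2$ linear system in the modal coefficients. Writing $\psi_\ell(z):=z\,j_\ell(z)$, the associated characteristic determinant $D_\ell(k)$ is a Wronskian-type expression in $j_\ell(kR),\,j_\ell(\sqrt{\varepsilon}\,kR)$ and $\psi_\ell'(kR),\,\psi_\ell'(\sqrt{\varepsilon}\,kR)$; the real transmission eigenvalues are exactly the real zeros of $D_\ell$. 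Existence of a sequence $k_n\to\infty$ then follows by analysing $D_\ell$: for $z$ large one has the oscillatory asymptotics $j_\ell(z)\sim z^{-1}\sin(z-\ell\pi/2)$, so $D_\ell$ changes sign infinitely often in $k$, yielding infinitely many real roots accumulating at $+\infty$.

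The surface localization is where the turning-point behaviour of $j_\ell$ enters. For $z=\kappa r$ below the classical turning point $z\approx\ell+\tfrac12$, the Bessel function $j_\ell(\kappa r)$ is evanescent, monotonically increasing in $r$ and exponentially small relative to its boundary value; quantitatively, the uniform (Langer/Debye) asymptotics give, for $\kappa R<\ell$,
\begin{equation*}
\frac{\norm{\,j_\ell(\kappa\,\cdot\,)\,}_{L^2((0,R-\delta),\,r^2\,dr)}}{\norm{\,j_\ell(\kappa\,\cdot\,)\,}_{L^2((0,R),\,r^2\,dr)}}\longrightarrow 0\quad\text{as }\ell\to\infty .
\end{equation*}
Since each modal eigenfunction is, up to bounded angular factors, a fixed combination of $j_\ell(\kappa_j r)$ and its derivative, this radial concentration transfers directly to the ratio in Definition~\ref{def:1}. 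Thus, choosing the angular order $\ell=\ell_n\to\infty$ together with a root $k_n$ of $D_{\ell_n}$ for which $\kappa_j R<\ell_n$ forces the corresponding field to be surface-localized.

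Finally, the distinction between mono- and bi-localization is dictated by the two distinct turning radii $r_*^{(1)}=\ell/(\sqrt{\varepsilon}\,k)$ and $r_*^{(2)}=\ell/k$ of $\mathbf{E}_1$ and $\mathbf{E}_2$. When (say) $\varepsilon>1$ and the root $k_n$ of $D_{\ell_n}$ lies in the window $\ell_n/(\sqrt{\varepsilon}R)<k_n<\ell_n/R$, the field $\mathbf{E}_2^{(n)}$ sits entirely below its turning point and localizes while $\mathbf{E}_1^{(n)}$ remains oscillatory and does not, producing the mono-localized family; when instead $k_n<\ell_n/(\sqrt{\varepsilon}R)$, both fields lie below their turning points and one obtains bi-localization (the case $\varepsilon<1$ being symmetric). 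The main obstacle I anticipate is precisely the coupling of these two requirements: one must show that $D_{\ell_n}$ actually possesses real zeros \emph{inside} the prescribed windows, not merely somewhere on $\mathbb{R}_+$, and control them through the delicate transition region $\kappa R\approx\ell$ where neither the evanescent nor the oscillatory asymptotics is uniformly valid. This requires the sharp Langer-type uniform asymptotics of $j_\ell$ together with a careful sign and monotonicity analysis of $D_{\ell_n}$ on each window, which is the technical heart of the argument; the localization estimate itself is then a comparatively routine consequence of the exponential decay of the Bessel profile below the turning point.
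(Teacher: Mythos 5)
Your overall route is the same as the paper's: restrict the analytical proof to a ball, separate variables into TE/TM modes built from $j_\ell$ and vector spherical harmonics, reduce the transmission conditions to a $2\times2$ determinant per angular order, and drive localization by the turning-point (evanescent) behaviour of $j_\ell(\kappa r)$ for $\kappa r<\ell+\tfrac12$, with the general Lipschitz case handled only numerically. Your mono-localization mechanism is essentially the paper's: the paper's mono eigenvalues satisfy $k_nR=(n+\tfrac12)(1+o(1))$ from above, so $\mathbf{E}_2$'s turning radius tends to $R$ while $\mathbf{E}_1$ oscillates on the fixed annulus $R/\epsilon_0<\rho<R$; the upper end of your window $\ell/(\sqrt{\varepsilon}R)<k<\ell/R$ gives the same picture.

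There are, however, two concrete problems. First, your bi-localization window $k_n<\ell_n/(\sqrt{\varepsilon}R)$ is misplaced and plausibly contains no eigenvalues at all: in that regime every argument $k_nR$, $k_n\epsilon_0R$ lies below the first zero of $j_n$ and of $j_{n+1}$, so all four spherical Bessel values entering the TE determinant \eqref{eq:TEf} are positive and (for $\epsilon_0>1$) there is no forced sign change of $f_n^{\mathrm{TE}}$ — indeed the small-argument asymptotics give $f_n^{\mathrm{TE}}>0$ there. The paper instead places the bi-localized eigenvalues at $k_n\epsilon_0R\in(r_{n,s_0},r_{n,s_0+1})$ with $s_0$ \emph{fixed} (see \eqref{eq:k2}), i.e.\ just \emph{above} the turning point of $\mathbf{E}_1$, where the interlacing of the zeros of $j_n$ and $j_{n+1}$ forces $f_n^{\mathrm{TE}}$ to change sign; since $r_{n,s_0+1}-(n+\tfrac12)=\mathcal{O}(n^{1/3})$, the oscillatory shell of $\mathbf{E}_1$ has thickness $\mathcal{O}(n^{-2/3})R\to0$, so both fields still satisfy Definition~\ref{def:1}. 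In other words, bi-localization occurs inside the lower edge of what you call the mono window, not below it, and your clean dichotomy by turning radii does not survive the existence constraint. Second, the step you yourself flag as the "technical heart" — producing roots of $D_{\ell}$ \emph{inside} the prescribed windows — is left unproved, and it is not a matter of Langer asymptotics through the transition region: the paper resolves it by the elementary device of evaluating $f_n^{\mathrm{TE}}$ at $k=r_{n,s}/R$ (so that one Wronskian term vanishes identically) for two choices $s_i=[(n+\tfrac12)^{\gamma_i}]$ as in \eqref{eq:choices01}, and arranging the signs of the surviving products $j_{n+1}(r_{n,s})j_n(r_{n,s}\epsilon_0)$ to differ, which forces a root in between by the intermediate value theorem. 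Without some such sign argument your existence claim "$D_\ell$ changes sign infinitely often, hence has roots accumulating at $+\infty$" only yields eigenvalues somewhere on $\mathbb{R}_+$, not in the windows where the localization dichotomy is decided.
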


In what follows, we shall rigorously prove Theorem~\ref{thm:1} in the case that $\Omega$ is radially symmetric, whereas in the general case, we only provide numerical verifications. In the numerics, we can actually consider variable $\varepsilon$ and the conclusion in Theorem~\ref{thm:1} still holds true.

\subsection{Transmission eigenvalues for a ball}
In this subsection, we briefly discuss the transmission eigenvalues for the Maxwell system \eqref{eq:tcase12} in the radially symmetric case associated with a constant refractive index. This has been derived in \cite{Monk2012}. Nevertheless, some of the technical ingredients shall be needed in our subsequent analysis of the transmission eigenfunctions and for self-containedness and easy reference, we present them in what follows. Without loss of generality, we assume that $\Omega$ centres at the origin, namely, $\Omega=\{\mathbf{x}\in \mathbb{R}^3:\,|\mathbf{x}|<R\in\mathbb{R}_+\}$. We also let $\epsilon_0=\sqrt{\varepsilon}\in\mathbb{R}_+$. Then \eqref{eq:tcase12} can be rewritten as
\begin{equation}\label{eq:tcase12r}
\begin{cases}
\begin{aligned}
&\nabla\wedge(\nabla\wedge\mathbf{E}_1)-k^2\epsilon_0^2\mathbf{E}_1=0, && \nabla\cdot (\epsilon_0^2 \mathbf{E}_1)=0 &&\mbox{in}\ \ \Omega,\medskip\\
&\nabla\wedge(\nabla\wedge\mathbf{E}_2)-k^2\mathbf{E}_2=0, && \nabla\cdot\mathbf{E}_2=0 &&\mbox{in}\ \ \Omega,\medskip\\
&\nu\wedge\mathbf{E}_1=\nu\wedge\mathbf{E}_2, && \nu\wedge(\nabla\wedge \mathbf{E}_1)=\nu\wedge(\nabla\wedge\mathbf{E}_2) && \mbox{on}\ \ \partial\Omega.
\end{aligned}
\end{cases}
\end{equation}
Since $\epsilon_0$ is a positive constant, we can expand the solutions $\mathbf{E}_1$ and $\mathbf{E}_2$ of the system \eqref{eq:tcase12r} into the Fourier series\cite{CK},
\begin{equation*}
\begin{aligned}
  &{\bf E}_1(\mathbf{x})=\sum_{n=1}^{\infty}\sum_{m=-n}^{n} a_n^m \mathcal{M}_n^m(\mathbf{x})
  +\sum_{n=1}^{\infty}\sum_{m=-n}^{n} b_n^m \mathcal{N}_n^m(\mathbf{x}) ,\medskip\\
 &{\bf E}_2(\mathbf{x})=\sum_{n=1}^{\infty}\sum_{m=-n}^{n} \widetilde{a}_n^m \widetilde{\mathcal{M}}_n^m(\mathbf{x})
  +\sum_{n=1}^{\infty}\sum_{m=-n}^{n} \widetilde{b}_n^m \widetilde{\mathcal{N}}_n^m(\mathbf{x}) ,\medskip\\
  \end{aligned}
  \end{equation*}
  where
  \begin{equation}\label{eq:MN}
    \begin{aligned}
    &\mathcal{M}_{n}^m(\mathbf{x})=\nabla\wedge\{\mathbf{x} j_n(k \epsilon_0|\mathbf{x}|)Y_n^m(\hat{\mathbf{x}})\}, \quad \mathcal{N}_{n}^m(\mathbf{x})=\frac{1}{\mathrm{i}k}\nabla\wedge\mathcal{M}_{n}^m(\mathbf{x}),\\
    &\widetilde{\mathcal{M}}_{n}^m(\mathbf{x})=\nabla\wedge\{\mathbf{x} j_n(k|\mathbf{x}|)Y_n^m(\hat{\mathbf{x}})\}, \quad \ \ \, \widetilde{\mathcal{N}}_{n}^m(\mathbf{x})=\frac{1}{\mathrm{i}k}\nabla\wedge\widetilde{\mathcal{M}}_{n}^m(\mathbf{x}).
    \end{aligned}
  \end{equation}
  Here,  $j_n(\mathbf{x})$ denotes the spherical Bessel functions of the first kind and $ Y_n^{m} $ denotes a spherical harmonic function of degree $n$ and order $m$.
  Moreover, one can verify that $\mathcal{M}_{n}^m $ and $\mathcal{N}_{n}^m$ are the solutions to
  \begin{equation*}
  \nabla\wedge(\nabla\wedge\mathbf{E}_1)-k^2\epsilon_0^2\mathbf{E}_1=0, \quad
  \nabla\cdot(\epsilon_0^2 \mathbf{E}_1)=0,
  \end{equation*}
 and $\widetilde{\mathcal{M}}_{n}^m $ and $\widetilde{\mathcal{N}}_{n}^m$ are the solutions to
\begin{equation*}
 \nabla\wedge(\nabla\wedge\mathbf{E}_2)-k^2\mathbf{E}_2=0,\ \ \ \ \nabla\cdot\mathbf{E}_2=0.
\end{equation*}
In particular,
  $\mathcal{M}_{n}^m$ and $\widetilde{\mathcal{M}}_{n}^m$ denote the TE (transverse electric) waves; $\mathcal{N}_{n}^m$ and $\widetilde{\mathcal{N}}_{n}^m$ denote the TM (transverse magnetic) waves.

For the TE modes, using the spherical coordinates $(\rho, \theta, \phi)$, $\mathcal{M}_{n}^m$ can be rewritten as
\begin{equation}\label{eq:M}
\begin{aligned}
  \mathcal{M}_{n}^m&=\nabla\wedge\{\rho j_n(k\epsilon_0\rho)Y_n^m(\theta,\phi)\,\hat{\bm{\rho}}\}\\
  &=\frac{1}{\rho^2 \sin\theta }
        \left|\begin{array}{ccc}
        \hat{\bm{\rho}} &  \rho \hat{\bm{\theta}}    & \rho \sin\theta \hat{\bm{\phi}}\medskip  \\
         \frac{\partial}{\partial \rho} & \frac{\partial}{\partial \theta} & \frac{\partial}{\partial \phi} \medskip\\
         \rho j_n(k\epsilon_0\rho) Y_n^m(\theta,\phi) & 0 & 0
       \end{array}\right| \\
  &=\frac{1}{\sin\theta} j_n(k\epsilon_0\rho) \frac{\partial Y_n^m(\theta, \phi)}{\partial \phi}\hat{\bm{\theta}}-j_n(k\epsilon_0\rho) \frac{\partial Y_n^m(\theta, \phi)}{\partial \theta}\hat{\bm{\phi}},
  \end{aligned}
\end{equation}
where $\rho=|\mathbf{x}|$ and
\begin{equation*}
\begin{aligned}
  &\hat{\bm{\rho}}=\sin\theta\cos \phi\,\bm e_1 + \sin\theta\sin \phi\,\bm e_2+ \cos\theta\, \bm e_3,\\
  &\hat{\bm{\theta}}=\cos\theta\cos \phi\,\bm e_1 + \cos\theta\sin \phi\,\bm e_2-\sin\theta\, \bm e_3,\\
  &\hat{\bm{\phi}}=-\sin \phi\,\bm e_1 + \cos \phi\,\bm e_2.
  \end{aligned}
\end{equation*}
Due to
\begin{equation*}
\frac{1}{\sin\theta} \frac{\partial}{\partial \theta} \left(\sin \theta \frac{\partial Y_n^m(\theta, \phi)}{\partial \theta}\right)+
  \frac{1}{\sin^2 \theta} \frac{\partial^2 Y_n^m(\theta, \phi)}{\partial \phi^2}+n(n+1)Y_n^m(\theta, \phi)=0,
\end{equation*}
 we can derive that
\begin{equation}\label{eq:curl-M}
\begin{aligned}
  &\nabla \wedge \mathcal{M}_{n}^m
  =\frac{1}{\rho^2 \sin\theta }
        \left|\begin{array}{ccc}
        \hat{\bm{\rho}} &  \rho \hat{\bm{\theta}}    & \rho \sin\theta \hat{\bm{\phi}}\medskip  \\
         \frac{\partial}{\partial \rho} & \frac{\partial}{\partial \theta} & \frac{\partial}{\partial \phi} \medskip\\
         0 & \rho\frac{1}{\sin\theta} j_n(k\epsilon_0\rho) \frac{\partial Y_n^m(\theta, \phi)}{\partial \phi} & -\rho \sin \theta j_n(k\epsilon_0\rho) \frac{\partial Y_n^m(\theta, \phi)}{\partial \theta}
       \end{array}\right| \\
  &=-\frac{j_n(k\epsilon_0\rho)}{\rho }\left(\frac{1}{\sin\theta} \frac{\partial}{\partial \theta}\left(\sin \theta \frac{\partial Y_n^m(\theta, \phi)}{\partial \theta}\right)+
  \frac{1}{\sin^2 \theta} \frac{\partial^2 Y_n^m(\theta, \phi)}{\partial \phi^2}\right)\hat{\bm{\rho}}\\
  &\quad \quad +\frac{1}{\rho}\frac{\partial}{\partial \rho}(\rho j_n(k\epsilon_0\rho) )\frac{\partial Y_n^m(\theta, \phi)}{\partial \theta} \hat{\bm{\theta}}
  +\frac{1}{\rho\sin\theta}\frac{\partial}{\partial \rho}(\rho j_n(k\epsilon_0\rho) )\frac{\partial Y_n^m(\theta, \phi)}{\partial \phi} \hat{\bm{\phi}}\\
  &=\frac{1}{\rho} j_n(k\epsilon_0\rho) n (n+1) Y_n^m(\theta, \phi) \hat{\bm{\rho}}+\frac{\partial}{\partial \rho}(\rho j_n(k\epsilon_0\rho) )\frac{1}{\rho}\frac{\partial Y_n^m(\theta, \phi)}{\partial \theta} \hat{\bm{\theta}}\\
 &\quad\quad  +\frac{\partial}{\partial \rho}(\rho j_n(k\epsilon_0\rho) )\frac{1}{\rho\sin\theta}\frac{\partial Y_n^m(\theta, \phi)}{\partial \phi} \hat{\bm{\phi}}.
  \end{aligned}
\end{equation}
Since the surface gradient in  the spherical coordinates is given by
\begin{equation*}
   \nabla_s f=\frac{\partial f}{\partial \theta}\hat{\bm{\theta}} +\frac{1}{\sin\theta} \frac{\partial f}{\partial \phi}\hat{\bm{\phi}},
\end{equation*}
equations \eqref{eq:M} and \eqref{eq:curl-M} can be rewritten as
\begin{equation}\label{eq:M-reduce}
\begin{aligned}
\mathcal{M}_{n}^m & =j_n(k\epsilon_0\rho) \nabla_s Y_n^m(\theta, \phi) \wedge \hat{\bm{\rho}},\\
\nabla \wedge \mathcal{M}_{n}^m&=\frac{1}{\rho} j_n(k\epsilon_0\rho) n (n+1) Y_n^m(\theta, \phi)\hat{\bm{\rho}}+ \frac{1}{\rho}\frac{\partial}{\partial \rho}(\rho j_n(k\epsilon_0\rho)) \nabla_s Y_n^m(\theta, \phi).
  \end{aligned}
\end{equation}
Similarly, one can deduce that
\begin{equation}\label{eq:M-tilde}
\begin{aligned}
\widetilde{\mathcal{M}}_{n}^m & =j_n(k\rho) \nabla_s Y_n^m(\theta, \phi) \wedge \hat{\bm{\rho}},\\
\nabla \wedge \widetilde{\mathcal{M}}_{n}^m&=\frac{1}{\rho} j_n(k\rho) n (n+1) Y_n^m(\theta, \phi)\hat{\bm{\rho}}+ \frac{1}{\rho}\frac{\partial}{\partial \rho}(\rho j_n(k\rho)) \nabla_s Y_n^m(\theta, \phi).
  \end{aligned}
\end{equation}
According to the boundary condition to \eqref{eq:tcase12r},  it yields
\begin{equation*}
  \begin{aligned}
  &a_n^m \hat{\bm{\rho}} \wedge \mathcal{ M}_n^m=\widetilde{a}_n^m \hat{\bm{\rho}} \wedge\widetilde{\mathcal{ M}}_n^m,\\
  &a_n^m \hat{\bm{\rho}} \wedge (\nabla \wedge  \mathcal{ M}_n^m)=\widetilde{a}_n^m  \hat{\bm{\rho}} \wedge  (\nabla  \wedge \widetilde{\mathcal{ M}}_n^m).
  \end{aligned}
\end{equation*}
Applying \eqref{eq:M-reduce} and \eqref{eq:M-tilde} with a straightforward calculation,  one can obtain
\begin{equation*}
  \begin{aligned}
  &\Big (a_n^m j_n(k\epsilon_0 \rho)-\widetilde{a}_n^m  j_n(k \rho)\Big)\,  \nabla_s Y_n^m(\theta,\phi)=0,\\
  &\left (a_n^m\frac{1}{\rho}\frac{ \partial}{\partial \rho} (\rho j_n(k\epsilon_0\rho))-\widetilde{a}_n^m \frac{1}{\rho}\frac{ \partial}{\partial \rho} (\rho j_n(k\rho) )\right) \hat{\rho} \wedge\nabla_s Y_n^m(\theta,\phi)=0, \quad \rho=R.\\
  \end{aligned}
\end{equation*}
{Noting that  $a_n^m\neq 0$ and $\widetilde{a}_n^m\neq 0$, the eigenvalues $k$'s of the TE modes are positive zeros of the following function}
\begin{equation}\label{eq:kFunc}
  f_n^{\mathrm{TE}}(k)=\left(j_n(k\epsilon_0\rho) \frac{1}{\rho}\frac{ \partial}{\partial \rho} (\rho j_n(k\rho) ) -j_n(k\rho) \frac{1}{\rho}\frac{ \partial}{\partial \rho} (\rho j_n(k\epsilon_0\rho) )\right)_{\rho=R}.
\end{equation}
Using  the recurrence relation of the derivatives of the spherical Bessel functions,
\begin{equation}\label{eq:recurrence}
  \begin{aligned}
  &\frac{\partial j_n(k\rho)}{\partial \rho}=k\left(\frac{n}{k\rho} j_n(k\rho)-j_{n+1}(k\rho) \right),\\
  &\frac{\partial j_n(k\epsilon_0\rho)}{\partial \rho}=k\epsilon_0\left(\frac{n}{k\epsilon_0\rho} j_n(k\epsilon_0\rho)-j_{n+1}(k\epsilon_0\rho) \right),\\
  \end{aligned}
\end{equation}
equation \eqref{eq:kFunc} can be rewritten as
\begin{equation}\label{eq:TEf}
    f_n^{\mathrm{TE}}(k)=k \Big (\epsilon_0 j_n(k R)  j_{n+1}(k\epsilon_0 R)-j_{n+1}(k R) j_n(k\epsilon_0 R)  \Big),\quad n\geq 1.
\end{equation}
On the other hand, for the TM modes, one can derive that
\begin{equation*}
\begin{aligned}
  \mathcal{N}_{n}^m(\mathbf{x})&=\frac{1}{\mathrm{i}k}\nabla\wedge\mathcal{M}_{n}^m(\mathbf{x})\\
  &=\frac{1}{\mathrm{i}k\rho} j_n(k\epsilon_0\rho) n (n+1) Y_n^m(\theta, \phi)\hat{\bm{\rho}}+ \frac{1}{\mathrm{i}k\rho}\frac{\partial}{\partial \rho}(\rho j_n(k\epsilon_0\rho)) \nabla_s Y_n^m(\theta, \phi).
  \end{aligned}
\end{equation*}
Noting that
$\nabla \wedge\left(\nabla\wedge\mathcal{M}_{n}^m\right)-k^2 \epsilon_0^2 \mathcal{M}_{n}^m=0$,
together with \eqref{eq:MN} and \eqref{eq:M-reduce}, one can  derive that
\begin{equation*}
 \nabla \wedge \mathcal{N}_{n}^m
= \nabla \wedge\left(\frac{1}{\mathrm{i}k}\nabla\wedge\mathcal{M}_{n}^m\right)
=-\mathrm{i}k\epsilon_0^2\mathcal{M}_{n}^m=-\mathrm{i}k\epsilon_0^2j_n(k\epsilon_0\rho) \nabla_s Y_n^m(\theta, \phi) \wedge \hat{\bm{\rho}}.
\end{equation*}
Similar results also hold for $\widetilde{\mathcal{N}}_{n}^m$.  According to last equation of  \eqref{eq:tcase12r},  it yields
\begin{equation*}
  \begin{aligned}
  &b_n^m \hat{\bm{\rho}} \wedge \mathcal{ N}_n^m=\widetilde{b}_n^m \hat{\bm{\rho}} \wedge\widetilde{\mathcal{ N}}_n^m,\\
  &b_n^m \hat{\bm{\rho}} \wedge (\nabla \wedge  \mathcal{ N}_n^m)=\widetilde{b}_n^m  \hat{\bm{\rho}} \wedge  (\nabla  \wedge \widetilde{\mathcal{ N}}_n^m).
  \end{aligned}
\end{equation*}
By following a similar argument as the TE model case, one has
\begin{equation*}
  \begin{aligned}
  &\frac{1}{\mathrm{i}k}\left (b_n^m\frac{1}{\rho}\frac{ \partial}{\partial \rho} (\rho j_n(k\epsilon_0\rho))-\widetilde{b}_n^m \frac{1}{\rho}\frac{ \partial}{\partial \rho} (\rho j_n(k\rho) )\right) \hat{\bm{\rho}}\wedge\nabla_s Y_n^m(\theta,\phi)=0,\\
  &-\mathrm{i}k \Big (b_n^m \epsilon_0^2 j_n(k \epsilon_0 \rho)-\widetilde{b}_n^m  \,  j_n(k \rho)\Big)  \nabla_s Y_n^m(\theta,\phi)=0,\quad \rho=R.
  \end{aligned}
\end{equation*}
Noting that $b_n^m\neq 0$ and $\widetilde{b}_n^m\neq 0$, the eigenvalues $k$'s for TM modes are positive zeros of the following function
\begin{equation*}
f_n^{\mathrm{TM}}(k)=\left(j_n(k\rho)\frac{1}{\rho} \frac{ \partial}{\partial \rho} (\rho j_n(k\epsilon_0\rho) ) -\epsilon_0^2 j_n(k\epsilon_0\rho) \frac{1}{\rho}\frac{ \partial}{\partial \rho} (\rho j_n(k\rho) )\right)_{\rho=R}.
\end{equation*}
Using the recurrence relation \eqref{eq:recurrence}, the last equation can be rewritten as
\begin{equation}\label{eq:TMf}
\begin{aligned}
    f_n^{\mathrm{TM}}(k)=
    &\frac{(1-\epsilon_0^2)(1+n)}{\rho}j_n(k R)j_n(k\epsilon_0 R)\\
     &+k\epsilon_0 \Big (\epsilon_0 j_n(k\epsilon_0 R) j_{n+1}(k R) - j_n(k R)  j_{n+1}(k\epsilon_0 R) \Big),\quad n\geq 1.
    \end{aligned}
\end{equation}

Based on the above discussion,  the transmission eigenvalues are those $k$'s (nontrivial solutions) for  $f_n^{\mathrm{TE}}(k)=0$ or $f_n^{\mathrm{TM}}(k)=0$ for $n\in \mathbb{N}_+$. In particular,
from \eqref{eq:TEf} and \eqref{eq:TMf},  one can find that transmission eigenvalues depend on the parameter $n$, namely the order of the spherical Bessel functions $j_n$.

\subsection{Mono-localized transmission eigenmodes} \label{sec:Mono-localized}
In this section, we construct a sequence of transmission eigenvalues $\{k_n\}_{n\in \mathbb{N}_+}$ and  prove that one of the corresponding pair of eigenfunctions $\left(\mathbf{E}_1^{(n)}, \mathbf{E}_2^{(n)}\right)$  is surface-localized while the other is not. In what follows, for simplification,  we only consider the case with $\epsilon_0>1$ and the case of $0<\epsilon_0<1$ can be deduced by a similar argument.

\begin{lem}\label{lem:eigenvalues}
Let  $\Omega=\{\mathbf{x}\in \mathbb{R}^3: |\mathbf{x}|<R\in\mathbb{R}_+\}$ and $\epsilon_0>1$ be a constant. Let $E_\Omega$ denote the set of transmission eigenvalues of \eqref{eq:tcase12r}. Then there exists a sequence  $\{k_n\}_{n\in\mathbb{Z}_+}\subset E_\Omega$, such that for sufficiently large $n$ , there holds
\begin{equation}\label{eq:eigspan01}
k_n\in \left(\frac{r_{n,s_1(n)}}{R}, \frac{r_{n,s_2(n)}}{R}\right),
\end{equation}
where $r_{n,s}$ denote  the $s$-th positive root of the spherical Bessel function $j_n(|\mathbf{x}|)$ for a fixed order $n$,
\begin{equation}\label{eq:choices01}
s_1(n):=\left[\left(n+\frac{1}{2}\right)^{\gamma_1}\right], \quad s_2(n)=\left[\left(n+\frac{1}{2}\right)^{\gamma_2}\right], \quad 0<\gamma_1<\gamma_2<1,
\end{equation}
and $[\,t\,]$ signifies the integer part of a real number $t$.  Moreover, one has
\begin{equation}\label{eq:leasymk01}
 k_n>\frac{n+\frac{1}{2}}{R}, \quad n=1,2,3,\cdots.
\end{equation}
\end{lem}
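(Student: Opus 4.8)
The plan is to locate the required eigenvalues among the zeros of the TE dispersion function $f_n^{\mathrm{TE}}$ in \eqref{eq:TEf}, via an intermediate value argument applied to the sampling points $k=r_{n,s}/R$. Since $f_n^{\mathrm{TE}}$ is a smooth function of $k$, it suffices to exhibit, for each large $n$, two consecutive indices $s,s+1\in[s_1(n),s_2(n)]$ at which $f_n^{\mathrm{TE}}(r_{n,s}/R)$ and $f_n^{\mathrm{TE}}(r_{n,s+1}/R)$ have opposite signs; the resulting zero $k_n\in(r_{n,s}/R,r_{n,s+1}/R)$ then lies in the target interval \eqref{eq:eigspan01}. (If some sampled value is exactly zero, then $r_{n,s}/R$ is itself an eigenvalue and we are done.)

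First I would compute the sampled values. Because $r_{n,s}$ is a zero of $j_n$, evaluating \eqref{eq:TEf} at $k=r_{n,s}/R$ annihilates the first product and leaves
\[
f_n^{\mathrm{TE}}(r_{n,s}/R)=-\frac{r_{n,s}}{R}\,j_{n+1}(r_{n,s})\,j_n(\epsilon_0 r_{n,s}).
\]
The recurrence \eqref{eq:recurrence} gives $j_{n+1}(r_{n,s})=-j_n'(r_{n,s})$, which is nonzero (consecutive spherical Bessel functions share no zeros) and alternates in sign with $s$, since $j_n$ crosses each of its zeros with alternating slope. Hence $\mathrm{sign}\,f_n^{\mathrm{TE}}(r_{n,s}/R)$ equals $(-1)^s\,\mathrm{sign}\,j_n(\epsilon_0 r_{n,s})$ up to a fixed global sign, and a sign change between $s$ and $s+1$ occurs precisely when $\epsilon_0 r_{n,s}$ and $\epsilon_0 r_{n,s+1}$ fall in lobes of $j_n$ of the same sign, i.e. when an \emph{even} number of zeros of $j_n$ lie in $(\epsilon_0 r_{n,s},\epsilon_0 r_{n,s+1})$.

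The crux is to guarantee such an even-gap configuration inside the window $s\in[s_1(n),s_2(n)]$, and here the choice \eqref{eq:choices01} with $\gamma_1,\gamma_2<1$ is decisive. It forces $s\ll\nu:=n+\tfrac12$, so that the sampling radii $r_{n,s}$ sit just above the turning point $\nu$ of $j_n$, where the spacing satisfies $r_{n,s+1}-r_{n,s}\gtrsim\nu^{(1-\gamma_2)/3}\gg\pi$, whereas $\epsilon_0 r_{n,s}>\nu$ lies well inside the oscillatory region. I would make this quantitative through the WKB phase $\Phi(x)=\pi^{-1}\bigl(\sqrt{x^2-\nu^2}-\nu\arccos(\nu/x)\bigr)$, which counts the zeros of $j_n$ up to $x$: a short computation shows that the increment $\Phi(\epsilon_0 r_{n,s+1})-\Phi(\epsilon_0 r_{n,s})$ tends to infinity as $r_{n,s}\to\nu^{+}$, so that $\epsilon_0 r_{n,s}$ sweeps through many zeros of $j_n$ at each unit step in $s$. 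Consequently the number of intervening zeros is large and varies, so its parity cannot remain odd throughout the (long, since $s_2(n)-s_1(n)\to\infty$) window; this yields the sign change and hence a TE eigenvalue $k_n$ satisfying \eqref{eq:eigspan01}. I expect the main obstacle to be exactly this step: converting the heuristic phase count into a rigorous statement requires the uniform Airy-type (Olver) asymptotics for the zeros $r_{n,s}$ in the transition regime $s\sim\nu^{\gamma}$, together with an honest error control on $\mathrm{sign}\,j_n(\epsilon_0 r_{n,s})$.

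Finally, the lower bound \eqref{eq:leasymk01} follows from the containment just established: since $s_1(n)\geq1$ we have $k_nR>r_{n,s_1(n)}\geq r_{n,1}$, and the first positive zero of $j_n$ — equivalently of $J_{n+1/2}$ — exceeds its order, $r_{n,1}>\nu=n+\tfrac12$, because $J_{n+1/2}$ has no zero before its turning point at $x=\nu$. This gives $k_n>(n+\tfrac12)/R$.
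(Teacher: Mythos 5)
Your proposal follows the same overall framework as the paper's proof: both evaluate $f_n^{\mathrm{TE}}$ from \eqref{eq:TEf} at the sample points $k=r_{n,s}/R$, where the first product vanishes and only $-\frac{r_{n,s}}{R}j_{n+1}(r_{n,s})j_n(\epsilon_0 r_{n,s})$ survives, and both conclude by the intermediate value theorem once a sign change is exhibited inside the window $s\in[s_1(n),s_2(n)]$. Where you genuinely diverge is in the mechanism for producing that sign change. The paper estimates $j_{n+1}(r_{n,s_i})$ and $j_n(\epsilon_0 r_{n,s_i})$ only at the two window endpoints, via the asymptotics \eqref{eq:approxj01}--\eqref{eq:approxj02}, writing each as a cosine of an explicit phase and then arguing that the two cosines ``never have the same frequency,'' so that $\gamma_2$ can be tuned to force the product in \eqref{eq:mod1} to be negative. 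You instead track the sign at \emph{every} consecutive sample point through the exact identity $j_{n+1}(r_{n,s})=-j_n'(r_{n,s})$ (so that $\operatorname{sign}f_n^{\mathrm{TE}}(r_{n,s}/R)=C(-1)^{s}\operatorname{sign}j_n(\epsilon_0 r_{n,s})$), reducing the whole question to the parity of the number of zeros of $j_n$ in $(\epsilon_0 r_{n,s},\epsilon_0 r_{n,s+1})$. This bookkeeping is cleaner and more transparent than the paper's endpoint asymptotics, and your derivation of \eqref{eq:leasymk01} from $k_nR>r_{n,s_1(n)}\geq r_{n,1}>n+\tfrac12$ is also tidier than the paper's appeal to \eqref{eq:approxj01}.

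There is, however, a gap at your decisive step, and you should not let the honest flagging of it substitute for closing it. The inference ``the number of intervening zeros is large and varies, so its parity cannot remain odd throughout the window'' is invalid as stated: an integer-valued quantity can be large, vary, and stay odd (e.g.\ take the values $101,103,105,\dots$). What you actually need is that the real-valued phase increment $\Delta(s):=\Phi(\epsilon_0 r_{n,s+1})-\Phi(\epsilon_0 r_{n,s})$ sweeps, as $s$ runs over the window, through an interval long enough — and in steps small enough — that the integer count $N(s+1)-N(s)$ (which differs from $\Delta(s)$ by at most $1$) is forced to take two consecutive integer values, hence both parities. The Qu--Wong bounds \eqref{eq:jms} do give $\Delta(s)\sim\nu^{1/3}s^{-1/3}$, so $\Delta$ decreases from order $\nu^{(1-\gamma_1)/3}$ to order $\nu^{(1-\gamma_2)/3}$ across the window, but you must also control the step size $\Delta(s+1)-\Delta(s)$ (which is small only for $s$ not too close to the lower end of the window) and the $O(1)$ discrepancy between $\Delta(s)$ and the actual count. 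To be fair, the corresponding step in the paper's own proof — realizing \eqref{eq:mod1} ``by modifying $\gamma_2$'' because the two cosines have different frequencies — is informal to a comparable degree; but in a finished write-up your route would require the quantitative zero-counting argument just described, and as submitted the proposal does not supply it.
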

\begin{proof}

For the TE models, from equation \eqref{eq:TEf}, we have
\begin{equation*}\label{eq:func-eigenvalue}
  f_n^{\mathrm{TE}}\left(\frac{r_{n,s_1}}{R}\right) f_n^{\mathrm{TE}}\left(\frac{r_{n,s_2}}{R}\right)
  =\frac{r_{n,s_1}r_{n,s_2}}{R^2}
  j_{n+1}(r_{n,s_1})j_n(r_{n,s_1}\epsilon_0)j_{n+1}(r_{n,s_2})j_n(r_{n,s_2}\epsilon_0).
\end{equation*}
Similarly, for the TM models, from equation \eqref{eq:TMf}, one has
\begin{equation*}
  f_n^{\mathrm{TM}}\left(\frac{r_{n,s_1}}{R}\right) f_n^{\mathrm{TM}}\left(\frac{r_{n,s_2}}{R}\right)
  =\frac{r_{n,s_1}r_{n,s_2}\epsilon_0^4}{R^2}
  j_{n+1}(r_{n,s_1})j_n(r_{n,s_1}\epsilon_0)j_{n+1}(r_{n,s_2})j_n(r_{n,s_2}\epsilon_0).
\end{equation*}
Next, we will show that there exists $s_1$ and $s_2$ such that
\begin{equation*}
  j_{n+1}(r_{n,s_1})j_n(r_{n,s_1}\epsilon_0)j_{n+1}(r_{n,s_2})j_n(r_{n,s_2}\epsilon_0)<0.
\end{equation*}
Based on the relationship between the Bessel and spherical Bessel function
\begin{equation}\label{eq:Spherical-Bessel}
  j_{n}(z)=\sqrt{\frac{\pi}{2z}} J_{n+1/2}(z), \quad z>0,
\end{equation}
one can derive that the  positive root  $r_{n,s}$ of spherical Bessel function $j_n(z)$ has the following sharp upper and lower bounds\cite{Qu2008}
\begin{equation}\label{eq:jms}
  n+\frac{1}{2}-\frac{a_s}{2^{1/3}}\left(n+\frac{1}{2}\right)^{1/3}<r_{n,s}<n+\frac{1}{2}-\frac{a_s}{2^{1/3}}\left(n+\frac{1}{2}\right)^{1/3}+\frac{3}{20}a_s^2\frac{2^{1/3}}{\left(n+\frac{1}{2}\right)^{1/3}},
\end{equation}
where $a_s$ is the $s$-th negative zero of the Airy function and has the representation
\begin{equation*}
  a_s=-\left(\frac{3\pi}{8}(4s-1)  \right)^{2/3}(1+\sigma_s), \quad
  0\leq \sigma_s\leq0.130\left(\frac{3\pi}{8}(4s-1.051)  \right)^{-2}.
\end{equation*}
By noting the choice of $s_1$ and $s_2$ in \eqref{eq:choices01} for sufficiently large $n$, it holds that
\begin{equation}\label{eq:approxj01}
\begin{aligned}
&r_{n,s_1}= \left(n+\frac{1}{2}\right)\left(1+C_0 (n+1/2)^{2(\gamma_1-1)/3}+o((n+1/2)^{2(\gamma_1-1)/3})\right),\\
 &r_{n,s_2}=\left(n+\frac{1}{2}\right)\left(1+C_0 (n+1/2)^{2(\gamma_2-1)/3}+o((n+1/2)^{2(\gamma_2-1)/3})\right),
\end{aligned}
\end{equation}
where $C_0$ is a positive constant.
Note that the Bessel function admits the following asymptotic formula \cite[P. 129]{Kor02}
\begin{equation}\label{eq:approxj02}
J_n(z)=\sqrt{\frac{2}{\pi \sqrt{z^2-n^2}}}\cos\left(\sqrt{z^2-n^2}-\frac{n\pi}{2}+n\arcsin(n/z)-\frac{\pi}{4}\right)\Big(1+o(1)\Big),
\end{equation}
for $z>n$ and $n\rightarrow \infty$.
Combing \eqref{eq:Spherical-Bessel}, \eqref{eq:approxj01},  and \eqref{eq:approxj02}, through a straightforward calculation, one obtains
\begin{equation*}\label{eq:asyjm01}
j_n\left( r_{n,s_i}\epsilon_0 \right)=\frac{1}{C_{n,\epsilon_0}^{(i)}}\cos\left( (n+1/2)\left(\sqrt{\epsilon_0^2-1}-\frac{\pi}{2}+\arcsin\frac{1}{\epsilon_0}
+\mathcal{O}\Big((n+1/2)^{\varsigma_i}\Big)\right) -\frac{\pi}{4} \right)\Big(1+o(1)\Big),
\end{equation*}
where $\varsigma_i:={2(\gamma_i-1)/3},\ i=1,2,$ and
$$
C_{n,\epsilon_0}^{(i)}=\big(n+{1}/{2}\big)\left(\epsilon_0^2(\epsilon_0^2-1)\right)^{-1/4}\left(1+\mathcal{O}\Big(\left(n+{1}/{2}\right)^{\varsigma_i}\Big)\right).
$$
Correspondingly, one can derive that
\begin{equation*}\label{eq:asyjm02}
j_{n+1}\left(r_{n,s_i} \right)=C_{n,1}^{(i)}\cos\left(\,(n+3/2)\mathcal{O}\Big((n+1/2)^{\varsigma_i}\Big) -\frac{\pi}{4} \right)\Big(1+o(1)\Big),
\end{equation*}
where
$$
C_{m,1}^{(i)}=\mathcal{O}\left( \Big( n+\frac{1}{2}\Big)^{1+\epsilon_i/4}\right).
$$
Without loss of generality, we suppose that
$$j_{n-1}\left( r_{n,s_1} \right)j_n\left( r_{n,s_1}\, \epsilon_0 \right)>0.$$
We now show that there exists at least one choice of $s_2=(n+1/2)^{\gamma_2}$ such that
$$j_{n-1}\left( r_{n,s_2} \right)j_n\left( r_{n,s_2}\, \epsilon_0 \right)<0,$$
that is,
\begin{equation}\label{eq:mod1}
\begin{aligned}
&\cos\left((n+3/2)\mathcal{O}\Big((n+1/2)^{\varsigma_2}\Big)-\frac{\pi}{4}\right) \\
&\quad \cdot \cos\left( (n+1/2)\left(\sqrt{\epsilon_0^2-1}-\frac{\pi}{2}+\arcsin\frac{1}{\epsilon_0}
+\mathcal{O}\Big((n+1/2)^{\varsigma_2}\Big)\right) -\frac{\pi}{4} \right)
<0.
\end{aligned}
\end{equation}
Noting that the above two cosine functions never have the same frequency, so it is easy to realize \eqref{eq:mod1} by modifying $\gamma_2$  w.r.t $\varsigma_2(\gamma_2)$. Thus one has
$k_{n}\in(r_{n,s_1(n)}/R, r_{n,s_2(n)}/R)$. \eqref{eq:leasymk01} is then followed trivially from  \eqref{eq:approxj01} and the proof is complete.
\end{proof}

From Lemma \ref{lem:eigenvalues}, we prove that there exists a sequences of transmission eigenvalues. Next, we show the geometrical properties of the the corresponding transmission eigenfunctions $\left(\mathbf{E}_1^{(n)}, \mathbf{E}_2^{(n)}\right)$.

\begin{thm}\label{thm:main}
Let  $\Omega=\{\mathbf{x}\in \mathbb{R}^3: |\mathbf{x}|<R \in\mathbb{R}_+\}$ and $\epsilon_0>1$ be a constant.
Let $(\mathbf{E}_1^{(n)}, \mathbf{E}_2^{(n)})$ be the pair of  transmission eigenfunctions associated with eigenvalue $k_n$ in \eqref{eq:eigspan01} for  Maxwell's transmission eigenvalue problem \eqref{eq:tcase12r}.   Then it holds that the eigenfunction  $\mathbf{E}_2^{(n)}$ is surface-localized but the other eigenfunction $\mathbf{E}_1^{(n)}$ is not surface-localized.
\end{thm}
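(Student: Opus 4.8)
The plan is to reduce the surface-localization question to the asymptotics of purely radial Bessel integrals and then to read off the two opposite behaviours from the locations of the classical turning points of $j_n(k_n\rho)$ and $j_n(k_n\epsilon_0\rho)$. Taking the TE branch for definiteness and fixing the degree $n$ attached to $k_n$, \eqref{eq:M-reduce} and \eqref{eq:M-tilde} give $\mathbf{E}_1^{(n)}=a_n^m j_n(k_n\epsilon_0\rho)\,\nabla_s Y_n^m\wedge\hat{\bm{\rho}}$ and $\mathbf{E}_2^{(n)}=\widetilde a_n^m j_n(k_n\rho)\,\nabla_s Y_n^m\wedge\hat{\bm{\rho}}$ (a finite combination over $m$ leaves the argument unchanged, since all share the same radial factor). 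As $\nabla_s Y_n^m\perp\hat{\bm{\rho}}$ the cross product preserves length, and in the volume element $\rho^2\,d\rho\,d\sigma$ the angular factor $\|\nabla_s Y_n^m\|_{L^2(\mathbb{S}^2)}^2$ is common to numerator and denominator of the ratio in Definition~\ref{def:1}. Hence surface-localization of $\mathbf{E}_j^{(n)}$ is governed entirely by
\[
\mathcal{R}_j(\delta):=\frac{\int_0^{R-\delta} j_n(\lambda_j\rho)^2\rho^2\,d\rho}{\int_0^{R} j_n(\lambda_j\rho)^2\rho^2\,d\rho},\qquad \lambda_2=k_n,\quad \lambda_1=k_n\epsilon_0.
\]

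Next I would locate the turning points. With $\nu:=n+\tfrac12$, the function $j_n(z)$ is exponentially small for $z<\nu$ and oscillatory for $z>\nu$, so $j_n(\lambda_j\rho)$ changes character at $\rho_j^*:=\nu/\lambda_j$. From \eqref{eq:leasymk01} we have $k_nR>\nu$, and \eqref{eq:eigspan01}--\eqref{eq:approxj01} give $k_nR=\nu\bigl(1+O(\nu^{2(\gamma_2-1)/3})\bigr)$, whence $\rho_2^*=\nu/k_n\to R$ from below while $\rho_1^*=\rho_2^*/\epsilon_0\to R/\epsilon_0<R$. Thus the oscillatory (classically allowed) region for $\mathbf{E}_2^{(n)}$ is a shell $(\rho_2^*,R)$ of shrinking width $R-\rho_2^*\sim\nu^{2(\gamma_2-1)/3}\to0$, whereas for $\mathbf{E}_1^{(n)}$ it is the fixed shell $(R/\epsilon_0,R)$. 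For $\mathbf{E}_2^{(n)}$, fix any small $\delta>0$; since $\rho_2^*\to R$, for large $n$ one has $R-\delta<\rho_2^*$, so on $(0,R-\delta)$ the argument satisfies $k_n\rho\le(1-c)\nu$ with $c=c(\delta)>0$, and the Debye/uniform asymptotics for $J_\nu$ (via \eqref{eq:Spherical-Bessel}) bound $j_n(k_n\rho)\le Ce^{-c'\nu}$ uniformly, making the numerator exponentially small. The denominator is only polynomially small: by \eqref{eq:approxj02} the shell carries $\sim\nu^{\gamma_2}\to\infty$ oscillations (the phase $\int_\nu^{k_nR}\sqrt{z^2-\nu^2}\,z^{-1}\,dz$ diverges), so averaging $\cos^2\approx\tfrac12$ gives a lower bound of order $\nu^{-2+(\gamma_2-1)/3}$. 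Therefore $\mathcal{R}_2(\delta)\to0$ and $\mathbf{E}_2^{(n)}$ is surface-localized.

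For $\mathbf{E}_1^{(n)}$ one has $k_n\epsilon_0R\to\epsilon_0\nu>\nu$, so substituting $z=k_n\epsilon_0\rho$ and inserting the oscillatory envelope $j_n(z)^2\sim\tfrac12\,z^{-1}(z^2-\nu^2)^{-1/2}$ from \eqref{eq:approxj02}, the elementary primitive $\int z\,(z^2-\nu^2)^{-1/2}\,dz=\sqrt{z^2-\nu^2}$ yields
\[
\lim_{n\to\infty}\mathcal{R}_1(\delta)=\frac{\sqrt{\epsilon_0^2(1-\delta/R)^2-1}}{\sqrt{\epsilon_0^2-1}}\xrightarrow[\ \delta\to0\ ]{}1.
\]
Hence $\mathcal{R}_1(\delta)$ stays bounded away from $0$ (indeed near $1$) for every sufficiently small $\delta$, so $\mathbf{E}_1^{(n)}$ fails \eqref{eq:defn1} and is not surface-localized. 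I would then remark that the TM branch is handled identically: by the formula for $\mathcal{N}_n^m$ the relevant radial profiles are $j_n(k_n\epsilon_0\rho)$ and $\rho^{-1}\partial_\rho(\rho j_n(k_n\epsilon_0\rho))$ (respectively with $k_n$), which share the same turning point $\rho_j^*$ and the same exponential/oscillatory dichotomy, so the same conclusion follows.

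The hard part will be making the two asymptotic inputs uniform in $n$: (i) the uniform exponential smallness of $j_n(z)$ for $z\le(1-c)\nu$ across the entire forbidden interval, not merely pointwise; and (ii) the slowly-varying-amplitude estimate that legitimizes replacing $\cos^2$ by its mean over the shrinking shell $(\rho_2^*,R)$ for $\mathbf{E}_2^{(n)}$, where one must check that enough oscillations (here $\sim\nu^{\gamma_2}$) fit inside the shell and that the Debye envelope varies slowly relative to the oscillation. Controlling the transition layer near $z=\nu$, where \eqref{eq:approxj02} degenerates, is the most delicate point; I would treat it with the uniform Airy-type asymptotics and verify that this turning-point region contributes negligibly to both integrals.
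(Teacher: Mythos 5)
Your proposal is correct and follows essentially the same route as the paper: reduction to the radial Bessel integrals, exponential/polynomial smallness of $j_n(k_n\rho)$ in the interior via the $J_n(nz)$, $0<z<1$, asymptotics (the paper's \eqref{eq:asympjn01}), a lower bound on the denominator from the oscillatory expansion \eqref{eq:approxj02}, and for $\mathbf{E}_1^{(n)}$ the explicit primitive $\sqrt{z^2-\nu^2}$ giving a ratio bounded below by essentially $\sqrt{\epsilon_0^2(\tau/R)^2-1}\big/\sqrt{\epsilon_0^2-1}$. The turning-point framing and your closing caveats about uniformity near $z=\nu$ are a cleaner articulation of the same argument (the paper itself glosses over that transition layer).
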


\begin{proof}
Let $\Omega_{\tau}:=\{\mathbf{x}: |\mathbf{x}|<\tau, \ \tau<R\}$ and $k_n$ be the eigenvalues defined in  \eqref{eq:eigspan01}.
For the TE models, the eigenfunctions are given by
\begin{align}
 \label{eq:E1-TE} &\mathbf{E}_1^{(n)}={a}_n^m \rho\,j_n(k_n\epsilon_0 \rho)\,  \nabla_s Y_n^m(\theta,\phi)\wedge \hat{\bm{\rho}} ,\\
 \label{eq:E2-TE} & \mathbf{E}_2^{(n)}=\widetilde{a}_n^m \rho\,j_n(k_n\rho)\,  \nabla_s Y_n^m(\theta,\phi)\wedge \hat{\bm{\rho}}.
\end{align}
Correspondingly, for the TM models, the eigenfunctions  are  given by
\begin{align}
\label{eq:E1-TM} & \mathbf{E}_1^{(n)}=\frac{{b}_n^m}{\mathrm{i}k_n\rho}\left(j_n(k_n \epsilon_0 \rho) n (n+1) Y_n^m \hat{\bm{\rho}}
  +\frac{\partial}{\partial \rho}\Big(\rho j_n(k_n \epsilon_0 \rho) \Big)\nabla_s Y_n^m(\theta,\phi)\right),\\
 \label{eq:E2-TM} & \mathbf{E}_2^{(n)}
  =\frac{\widetilde{b}_n^m}{\mathrm{i}k_n \rho}\left( j_n(k_n\rho) n (n+1) Y_n^m \hat{\bm{\rho}} +\frac{\partial}{\partial \rho}\Big(\rho j_n(k_n \rho) \Big)\nabla_s Y_n^m(\theta,\phi)\right).
\end{align}

To begin with, for a fixed $\tau$, we  prove that there exists sufficient large $n$ such that $k_n\tau<n+1/2$.
Combining \eqref{eq:eigspan01} and \eqref{eq:approxj01}, we can derive that
\begin{equation*}
k_n= \frac{1}{R}\left(n+\frac{1}{2}\right)\left(1+C_0 (n+1/2)^{\varsigma}\right),\quad \varsigma\in [\varsigma_1,\varsigma_2],
\end{equation*}
where $C_0$ is a positive constant and $\varsigma_i={2(\gamma_i-1)/3}$, $i=1, 2$. Thus, there exists a sufficiently large $n$, such that
\begin{equation}\label{eq:kns}
k_n\tau=\frac{\tau}{R}\left(n+\frac{1}{2}\right)\left(1+C_0 (n+1/2)^{\varsigma}\right)< n+\frac{1}{2},
\end{equation}
for any fixed $\tau<R$.

Next, we prove that  the transmission eigenfunctions $\mathbf{E}_2^{(n)}$ are surface-localized around the boundary $\partial \Omega$.
From \cite[P.129]{Kor02}, one has the following asymptotic formula:
\begin{equation}\label{eq:asympjn01}
J_n(nz)=\frac{z^n \mathrm{e}^{n\sqrt{1-z^2}}}{(2\pi n)^{1/2}(1-z^2)^{1/4}(1+\sqrt{1-z^2})^n}\Big(1+o(1)\Big), \quad 0<z<1.
\end{equation}
By a simple calculation, one can find that
\begin{equation*}
  \frac{z\mathrm{e}^{\sqrt{1-z^2}}}{1+\sqrt{1-z^2}}<1, \quad 0<z<1.
\end{equation*}
Therefore, from \eqref{eq:asympjn01},  we can derive the following asymptotic expansion:
\begin{equation*}
\begin{split}
j_n(k_n\tau)
=&\sqrt{\frac{\pi}{2k_n\tau}} J_{n+1/2}(k_n\tau)\\
=&\frac{1}{2\sqrt{k_n\tau}\left(1-\frac{k_n\tau}{n+1/2}\right)^{1/4}}\left(n+\frac{1}{2}\right)^{-1/2}
\left(\frac{k_n\tau}{n+1/2}\frac{\mathrm{e}^{\sqrt{1-\Big(\frac{k_n\tau}{n+1/2}\Big)^2}}}{1+\sqrt{1-\Big(\frac{k_n\tau}{n+1/2}\Big)^2}}\right)^{n+1/2}\Big(1+o(1)\Big)\\
<&C_1\left(n+\frac{1}{2}\right)^{-1}\Big(1+o(1)\Big),
\end{split}
\end{equation*}
where $C_1$  is a positive constant. According to \cite[p.370]{Abr},  one has
\begin{align}
\label{eq:root} &n+\frac{1}{2}\leq  r_{n,1}^{'}<r_{n,1}<r_{n,2}^{'}<r_{n,2}<r_{n,3}^{'}< \cdots,\\
\label{eq:Bessel-Deri}&j'_{n}(z)=\frac{\sqrt{\pi}(z/2)^{n-1}}{4\Gamma(n+1/2)}\mathop{\Pi} \limits_{s=1}^{\infty}\left(1-\frac{z^2}{r_{n,s}^{'2}}   \right),
\end{align}
where $r_{n,s}^{'}$ denotes the $s$-th positive zero of $j_n^{'}(z)$ that is derivative of $j_n(z)$.
Using  \eqref{eq:kns} and \eqref{eq:root},  one can deduce that
  $k_n\tau <r'_{n,1}$ for sufficiently large  $n$. Thus, by \eqref{eq:Bessel-Deri}, we find that $j_n(k_n\rho)$ is a monotonically increasing with respect to $\rho\in(0,\tau)$. Hence, we obtain
\begin{equation}\label{eq:jn-tau}
 j_n(k_n\rho)<j_n(k_n\tau)<C_1\left(n+\frac{1}{2}\right)^{-1}\Big(1+o(1)\Big),
 \quad 0<\rho<\tau.
\end{equation}
One the other hand, for sufficiently large $n$ , one can choose $\tau_1$
\begin{equation}\label{eq:tau1}
\tau_1=\frac{r_{n,1}'}{k_n},\quad \tau< \tau_1<R.
\end{equation}
Thus, we have
\begin{equation*}
n+\frac{1}{2}<r_{n,1}'=k_n\tau_1.
\end{equation*}
Using the asymptotic expansion \eqref{eq:approxj02}, one can deduce that
\begin{equation}\label{eq:jn-right}
\begin{split}
&\int_\tau^R j_n^2(k_{n}\rho)\, \rho\, \mathrm{d}\rho
\geq\int_{\tau_1}^R j_n^2(k_{n}\rho)\, \rho\, \mathrm{d}\rho\\
\geq&\frac{1}{k_n\sqrt{k_n^2 R^2-(n+\frac{1}{2})^2}}\\
&\ \cdot \int_{\tau_1}^{R}\cos^2\left(\sqrt{k_n^2\rho^2-\left(n+\frac{1}{2}\right)^2}
-\frac{\left(n+\frac{1}{2}\right)\pi}{2}+\left(n+\frac{1}{2}\right)\arcsin\left(\frac{n+\frac{1}{2}}{k_n\rho}\right)-\frac{\pi}{4}\right)\mathrm{d}\rho\\
=&C_2\frac{R(R-\tau_1)}{2}\left(n+\frac{1}{2}\right)^{-1-\varsigma/2}\Big(1+\mathcal{R}(n)\Big),
\end{split}
\end{equation}
where $C_2$ is a positive constant and the remaining term $\mathcal{R}(n)$ has the approximation
\begin{equation*}
\begin{split}
&\lim_{n\rightarrow \infty}\mathcal{R}(n)\\
=&\lim_{n\rightarrow \infty}\int_{\tau_1}^{R}\sin 2\left(\sqrt{k_n^2\rho^2-\left(n+\frac{1}{2}\right)^2}
-\frac{\left(n+\frac{1}{2}\right)\pi}{2}+\left(n+\frac{1}{2}\right)\arcsin\left(\frac{n+\frac{1}{2}}{k_n\rho}\right)\right)\,\mathrm{d}\rho\\
=& 0.
\end{split}
\end{equation*}
Thus, for the TE models, using \eqref{eq:E2-TE}, \eqref{eq:jn-tau} and \eqref{eq:jn-right}, it holds that
  \begin{equation*}
  \begin{aligned}
   \frac{\|\mathbf{E}_2^{(n)}\|_{L^2(\Omega_{\tau} )}^2}{\|\mathbf{E}_2^{(n)}\|^2_{L^2(\Omega)}}
   =& \frac{\int_0^\tau \int_0^{\pi} \int_0^{2 \pi} \Big|j_n(k_n\rho)\, \nabla_s Y_n^m(\theta,\phi)\wedge \bm{\hat{\rho}} \Big|^2\rho^2 \sin\theta\,\mathrm{d}\phi\mathrm{d}\theta\mathrm{d}\rho}
   {\int_0^R \int_0^{\pi} \int_0^{2 \pi} \Big|j_n(k_n\rho)\, \nabla_s Y_n^m(\theta,\phi)\wedge \bm{\hat{\rho}} \Big|^2\rho^2 \sin\theta\,\mathrm{d}\phi\mathrm{d}\theta\mathrm{d}\rho}\\
   \leq&\frac{j_n^2(k_{n}\tau)\int_0^\tau \, \rho^2\, \mathrm{d}\rho}
   {\tau_1 \int_{\tau_1}^R j_n^2(k_{n}\rho)\rho \, \mathrm{d}\rho}\\
   <&\frac{C_1^2\left(n+\frac{1}{2}\right)^{-2}\tau^3/3}
   {C_2\left(n+\frac{1}{2}\right)^{-1-\varsigma/2}\tau_1 R(R-\tau_1)/2}\\
   =&\frac{2C_1^2\tau^3}{3C_2\tau_1 R(R-\tau_1)}\left(n+\frac{1}{2}\right)^{-1+\varsigma/2}\rightarrow 0, \quad \mathrm{as}\quad  n\rightarrow \infty.
   \end{aligned}
  \end{equation*}
  Moreover, for the TM models, according to the recurrence relation of spherical bessel function
  \begin{equation*}
    \frac{\partial }{\partial \rho}j_n(\rho)=\frac{j_{n-1}(\rho)-j_{n+1}(\rho)}{2},
  \end{equation*}
   and formula \eqref{eq:leasymk01}, then the eigenfunction \eqref{eq:E2-TM} can be rewritten as
  \begin{equation}\label{eq:E2-temp2}
  \begin{aligned}
    \mathbf{E}_2^{(n)}
=&\frac{\widetilde{b}_n^m}{\mathrm{i}}\Big( \frac{n (n+1)}{k_n \rho}j_n(k_n\rho) Y_n^m (\theta,\phi) \bm{\hat{\rho}}+ \frac{j_n(k_n\rho)}{k_n \rho} \nabla_s Y_n^m(\theta,\phi) \\ &\quad +\frac{j_{n-1}(k_n\rho)-j_{n+1}(k_n\rho)}{2}\nabla_s Y_n^m(\theta,\phi)\Big)\\
=&\frac{\widetilde{b}_n^m}{\mathrm{i}}\Big( \frac{n (n+1)}{k_n \rho}j_n(k_n\rho) Y_n^m (\theta,\phi) \bm{\hat{\rho}}\\
&\quad+\frac{j_{n-1}(k_n\rho)-j_{n+1}(k_n\rho)}{2}\nabla_s Y_n^m(\theta,\phi)\left(1+\mathcal{O}\left(n^{-1}\right)\right)\Big).
  \end{aligned}
  \end{equation}
%
  Due to the Orthogonality property of the surface gradient to spherical harmonic function
  \begin{equation*}
    \int_{\mathbb{S}^{2}} \nabla_s Y_n^m \cdot \nabla_s Y_{n'}^{m'} \mathrm{d}s =n(n+1)\delta_{nn'}\delta_{mm'},
  \end{equation*}
  together with $\nabla_s Y_n^m(\theta,\phi)=\rho\nabla Y_n^m(\theta,\phi)$ on $\partial \Omega_{\rho}$, and equation \eqref{eq:jn-right}, it holds that
  \begin{equation*}
  \begin{aligned}
\frac{\|\mathbf{E}_2^{(n)}\|_{L^2(\Omega_{\tau} )}^2}{\|\mathbf{E}_2^{(n)}\|^2_{L^2(\Omega)}}
   &=\frac{\int_0^\tau \left(\frac{n(n+1)}{k_n\rho}\,j_n(k_n\rho) \right)^2\rho^2+n(n+1)\left( \frac{j_{n-1}(k_n\rho)-j_{n+1}(k_n\rho)}{2} \right)^2 \rho^2 \, \mathrm{d}\rho}
   {\int_0^R \left(\frac{n(n+1)}{k_n\rho}\,j_n(k_n\rho) \right)^2\rho^2+n(n+1)\left( \frac{j_{n-1}(k_n\rho)-j_{n+1}(k_n\rho)}{2} \right)^2\rho^2 \, \mathrm{d}\rho}\\
   &\leq \frac{\int_0^\tau  \frac{n(n+1)}{k_n^2} j_n^2(k_{n}\rho)+(j_{n-1}^2(k_{n}\rho)+j_{n+1}^2(k_{n}\rho))/2 \,\rho^2 \, \mathrm{d}\rho}
   { \int_{\tau_1}^R \frac{n(n+1)}{k_n^2}j_n^2(k_{n}\rho)\rho \, \mathrm{d}\rho/R}\\
   &<\frac{C_3\left(n+\frac{1}{2}\right)^{-2}\tau}
   {C_2\left(n+\frac{1}{2}\right)^{-1-\varsigma/2}(R-\tau_1)/2}\\
   &=\frac{2C_3\tau}{C_2(R-\tau_1)}\left(n+\frac{1}{2}\right)^{-1+\varsigma/2}\rightarrow 0, \quad \mathrm{as}\quad  n\rightarrow \infty,
   \end{aligned}
  \end{equation*}
where the constant $C_3$ is defined by
\begin{equation*}
  C_3=1+\frac{k_n^2}{n(n+1)}\tau^2.
\end{equation*}
Hence, the transmission eigenfunctions of $\mathbf{E}_2^{(n)}$ for both TE and TM models are surface-localized on the boundary $\partial \Omega$.

Finally,  it remains to prove that  the eigenfunctions $\mathbf{E}_1^{(n)}$ are not
surface-localized around the boundary $\partial \Omega$.
Without loss of generality, we only consider the TE models case and the TM models case can be proved in a similar manner.

From \eqref{eq:tau1}, one can deduce that
\begin{equation*}
\begin{aligned}
  \int_0^{R} j_n^2(k_n\epsilon_0 \rho) \rho^2 \, \mathrm{d}\rho
  &=\frac{1}{\epsilon_0^3}\int_0^{\epsilon_0 R} j_n^2(k_n \rho') \rho'^2 \, \mathrm{d}\rho'\\
  &=\frac{1}{\epsilon_0^3}\int_0^{\frac{r'_{n,1}}{k_n}} j_n^2(k_n \rho') \rho'^2 \, \mathrm{d}\rho'
  +\frac{1}{\epsilon_0^3}\int_{\frac{r'_{n,1}}{k_n}}^{\epsilon_0 R} j_n^2(k_n \rho') \rho'^2 \, \mathrm{d}\rho'\\
  &<\frac{2}{\epsilon_0^3}\int_{\frac{r'_{n,1}}{k_n}}^{\epsilon_0 R} j_n^2(k_n \rho') \rho'^2 \, \mathrm{d}\rho'\\
  &=2\int_{\frac{r'_{n,1}}{\epsilon_0 k_n}}^{ R} j_n^2(k_n\epsilon_0 \rho) \rho^2 \, \mathrm{d}\rho.
  \end{aligned}
\end{equation*}
By following similar arguments, for any $\tau\in (\frac{R}{\epsilon_0}, R)$, then $$\frac{r'_{n,1}}{k_n \epsilon_0}< \frac{R}{\epsilon_0}<\tau <R,$$ and so for $\tau\leq t<R$, one has
\begin{equation*}
\begin{split}
&\int_{\frac{r'_{n,1}}{k_n\epsilon_0}}^{ t} j_n^2(k_n\epsilon_0 \rho) \rho^2 \, \mathrm{d}\rho\\
&=\int_{\frac{r'_{n,1}}{k_n\epsilon_0}}^{ t}\frac{\rho\cos^2\left(\sqrt{k_n^2\epsilon_0^2\rho^2-\left(n+\frac{1}{2}\right)^2}
-\frac{\left(n+\frac{1}{2}\right)\pi}{2}+\left(n+\frac{1}{2}\right)\arcsin\left(\frac{n+\frac{1}{2}}{k_n\rho}\right)-\frac{\pi}{4}\right)}
{k_n\epsilon_0\sqrt{k_n^2 \epsilon_0^2\rho^2-(n+\frac{1}{2})^2}}\mathrm{d}\rho\\
&\approx\frac{1}{k_n^2\epsilon_0^2} \int_{\frac{r'_{n,1}}{ k_n\epsilon_0 }}^{ t} \frac{\rho}{\sqrt{\rho^2-\left(\frac{n+1/2}{k_n\epsilon_0}\right)^2}}\, \mathrm{d}\rho\\
&=\frac{1}{k_n^3\epsilon_0^3} \left( \sqrt{k_n^2\,\epsilon_0^2 \,t^2-\left(n+\frac{1}{2}\right)^2}
-\sqrt{r^{'2}_{n,1}-\left(n+\frac{1}{2} \right)^2}\right).
\end{split}
\end{equation*}
Combining the last two estimations, we obtain
\begin{equation*}
  \begin{aligned}
   \frac{\|\mathbf{E}_1^{(n)}\|_{L^2(\Omega_{\tau} )}^2}{\|\mathbf{E}_1^{(n)}\|^2_{L^2(\Omega)}}
   =& \frac{\int_0^\tau \int_0^{\pi} \int_0^{2 \pi} \Big|j_n(k_n\epsilon_0\rho)\,  \nabla_s Y_n^m(\theta,\phi)\wedge \hat{\bm{\rho}}\Big|^2\rho^2 \sin\theta\,\mathrm{d}\phi\mathrm{d}\theta\mathrm{d}\rho}
   {\int_0^R \int_0^{\pi} \int_0^{2 \pi} \Big| j_n(k_n\epsilon_0\rho)\, \nabla_s Y_n^m(\theta,\phi)\wedge \hat{\bm{\rho}}\Big|^2\rho^2 \sin\theta\,\mathrm{d}\phi\mathrm{d}\theta\mathrm{d}\rho}\\
   \geq&\frac{\int_{\frac{r'_{n,1}}{k_n\epsilon_0}}^{\tau} j_n^2(k_n\epsilon_0 \rho) \rho^2 \, \mathrm{d}\rho}
   {2\int_{\frac{r'_{n,1}}{k_n\epsilon_0}}^{ R} j_n^2(k_n\epsilon_0 \rho) \rho^2 \, \mathrm{d}\rho}\\
   =& \frac{1}{2}\frac{  \sqrt{k_n^2\,\epsilon_0^2 \,\tau^2-\left(n+\frac{1}{2}\right)^2}
-\sqrt{r^{'2}_{n,1}-\left(n+\frac{1}{2} \right)^2}}
{  \sqrt{k_n^2\,\epsilon_0^2 \,R^2-\left(n+\frac{1}{2}\right)^2}
-\sqrt{r^{'2}_{n,1}-\left(n+\frac{1}{2} \right)^2}}\\
\approx &  \frac{1}{2}\frac{\sqrt{\epsilon_0^2(\tau/R)^2-1}}
   {\sqrt{\epsilon_0^2-1}}>0 \quad  n\rightarrow \infty.
   \end{aligned}
  \end{equation*}
The proof is complete.
\end{proof}

\subsection{Bi-localized transmission eigenmodes}\label{sec:Bi-localized}
In this section, we also construct a sequence of transmission eigenvalues $\{k_n\}_{n\in \mathbb{N}_+}$ and  prove that both corresponding pair of eigenfunctions $\left(\mathbf{E}_1^{(n)}, \mathbf{E}_2^{(n)}\right)$ are surface-localized.

\begin{lem}\label{lem:eigenvalues2}
Let  $\Omega=\{\mathbf{x}\in \mathbb{R}^3: |\mathbf{x}|<R\in\mathbb{R}_+\}$ and $\epsilon_0>1$ be a constant. Let $E_\Omega'$ denote the set of transmission eigenvalues of \eqref{eq:tcase12r}. For a fixed value $s_0\in \mathbb{N}_+$,  there exists a subsequence $\{k_n\}_{n\in\mathbb{N}_+}\subset E_\Omega'$,  such that for sufficient large $n$, there holds
\begin{equation}\label{eq:k2}
k_n\in \left(\frac{r_{n,s_0}}{\epsilon_0 R}, \frac{r_{n,s_0+1}}{\epsilon_0 R}\right).
\end{equation}
Moreover, one has
\begin{equation}\label{eq:k2-estimate}
\frac{n+\frac{1}{2}}{\epsilon_0 R}<k_n \leq \frac{n+\frac{1}{2}}{R}, \quad n=1,2,3,\cdots.
\end{equation}
\end{lem}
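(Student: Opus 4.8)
The plan is to locate the eigenvalues as zeros of the characteristic functions $f_n^{\mathrm{TE}}$ and $f_n^{\mathrm{TM}}$ from \eqref{eq:TEf} and \eqref{eq:TMf}, following the same intermediate value theorem strategy as in the proof of Lemma~\ref{lem:eigenvalues}, but now evaluating them at the two endpoints $k=r_{n,s_0}/(\epsilon_0 R)$ and $k=r_{n,s_0+1}/(\epsilon_0 R)$ of the target interval \eqref{eq:k2}. The crucial observation is that at $k=r_{n,s}/(\epsilon_0 R)$ one has $k\epsilon_0 R=r_{n,s}$, hence $j_n(k\epsilon_0 R)=j_n(r_{n,s})=0$. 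Substituting into \eqref{eq:TEf} annihilates one of the two terms and gives
\begin{equation*}
f_n^{\mathrm{TE}}\!\left(\frac{r_{n,s}}{\epsilon_0 R}\right)=\frac{r_{n,s}}{R}\,j_n\!\left(\frac{r_{n,s}}{\epsilon_0}\right) j_{n+1}(r_{n,s}),
\end{equation*}
while the analogous substitution in \eqref{eq:TMf}, where both the $(1-\epsilon_0^2)$ term and the $\epsilon_0 j_n(k\epsilon_0 R)j_{n+1}(kR)$ term vanish, yields the same expression up to an overall minus sign. Consequently, for \emph{both} polarizations the product of the endpoint values equals
\begin{equation*}
\frac{r_{n,s_0}r_{n,s_0+1}}{R^2}\,j_n\!\left(\frac{r_{n,s_0}}{\epsilon_0}\right) j_{n+1}(r_{n,s_0})\, j_n\!\left(\frac{r_{n,s_0+1}}{\epsilon_0}\right) j_{n+1}(r_{n,s_0+1}),
\end{equation*}
so it suffices to prove this quantity is negative, after which the continuity of $f_n^{\mathrm{TE}},f_n^{\mathrm{TM}}$ and the intermediate value theorem produce a zero inside \eqref{eq:k2}.

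The next step is to settle the four sign questions. Since $r_{n,s_0}<r_{n,s_0+1}$ are consecutive positive zeros of $j_n$, the interlacing of the zeros of $j_n$ and $j_{n+1}$ (equivalently of $J_{n+1/2}$ and $J_{n+3/2}$ through \eqref{eq:Spherical-Bessel}) forces $j_{n+1}$ to have exactly one zero in $(r_{n,s_0},r_{n,s_0+1})$; hence $j_{n+1}(r_{n,s_0})$ and $j_{n+1}(r_{n,s_0+1})$ have opposite signs and their product is strictly negative. For the two $j_n$ factors I would show that both arguments $r_{n,s_0}/\epsilon_0$ and $r_{n,s_0+1}/\epsilon_0$ lie in $(0,r_{n,1})$ for $n$ large: by the lower bound in \eqref{eq:jms} each root satisfies $r_{n,s}>n+\tfrac12$, while $\epsilon_0>1$ gives $r_{n,s_0+1}/\epsilon_0\approx(n+\tfrac12)/\epsilon_0<n+\tfrac12<r_{n,1}$, the $\mathcal{O}((n+\tfrac12)^{1/3})$ corrections in \eqref{eq:jms} being negligible against the order-$n$ gap. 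Because $j_n$ is strictly positive on $(0,r_{n,1})$, both $j_n(r_{n,s_0}/\epsilon_0)$ and $j_n(r_{n,s_0+1}/\epsilon_0)$ are positive, so the displayed product is $(\text{positive})\times(\text{negative})<0$, as required. It is worth noting that this is genuinely simpler than Lemma~\ref{lem:eigenvalues}: there both Bessel arguments were oscillatory and a sign flip had to be engineered by tuning $\gamma_2$, whereas here the $\epsilon_0$-scaled argument is below the first zero and its positivity is automatic.

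Finally, the two-sided bound \eqref{eq:k2-estimate} follows directly from the containment \eqref{eq:k2} and the root estimates \eqref{eq:jms}. The lower bound uses $k_n>r_{n,s_0}/(\epsilon_0 R)>(n+\tfrac12)/(\epsilon_0 R)$, and the upper bound uses $k_n<r_{n,s_0+1}/(\epsilon_0 R)$ together with $r_{n,s_0+1}\le\epsilon_0(n+\tfrac12)$, which again holds for large $n$ since $r_{n,s_0+1}=(n+\tfrac12)(1+o(1))$ and $\epsilon_0>1$ is a fixed constant.

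I expect the main obstacle to be the careful verification that the arguments $r_{n,s}/\epsilon_0$ sit below $r_{n,1}$ and that the interlacing is invoked correctly, since these are precisely the two places where the hypothesis $\epsilon_0>1$ and the asymptotics \eqref{eq:jms} genuinely enter; one must confirm that in the fixed-$s_0$, large-$n$ regime all four Bessel arguments remain in their intended sign and monotonicity windows. Once the sign of the endpoint product is pinned down, the existence of $k_n$ and the bounds \eqref{eq:k2-estimate} are routine.
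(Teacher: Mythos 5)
Your proposal is correct and follows essentially the same route as the paper: evaluate $f_n^{\mathrm{TE}}$ and $f_n^{\mathrm{TM}}$ at $k=r_{n,s_0}/(\epsilon_0 R)$ and $k=r_{n,s_0+1}/(\epsilon_0 R)$ so that $j_n(k\epsilon_0 R)$ vanishes, use the interlacing of the zeros of $j_n$ and $j_{n+1}$ to get a sign change in the $j_{n+1}$ factors, check via \eqref{eq:jms} that $r_{n,s_0+1}/\epsilon_0\le n+\tfrac12$ keeps the remaining $j_n$ arguments below the first zero (the paper uses $r_{n,1}'$ and monotonicity where you use $r_{n,1}$ and positivity, which is equivalent), and conclude by the intermediate value theorem, with \eqref{eq:k2-estimate} following from the same root bounds. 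Your explicit verification that the TM endpoint value equals the TE one up to sign is a detail the paper leaves implicit but is accurate.
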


\begin{proof}
For any fixed $s_0\in \mathbb{N}_+$,  from \eqref{eq:jms} and \eqref{eq:root},  there exists a sufficiently large $n$ such that
\begin{equation}\label{eq:upbound}
  \frac{r_{n,s_0+1}}{\epsilon_0}\leq n+\frac{1}{2}\leq r_{n,1}'.
\end{equation}
By the monotonicity of the Bessel function before the first local maximal, we have
\begin{equation*}
  j_{n}(k)\geq 0, \quad k\in(0, r_{n,1}'].
\end{equation*}
Noting that the positive root of $j_{n}(|x|)$ are interlaced with those of $j_{n+1}(|x|)$ \cite{LZ07}, hence we have
\begin{equation*}
  j_{n+1}(r_{n, s_0})\cdot  j_{n+1}(r_{n, s_0+1})< 0.
\end{equation*}
Therefore, from the last two equations and \eqref{eq:TEf}, we can derive that

\begin{equation*}
\begin{aligned}
   &f_n^{\mathrm{TE}}\left(\frac{r_{n,s_0}}{\epsilon_0 R}\right) f_n^{\mathrm{TE}}\left(\frac{r_{n,s_0+1}}{\epsilon_0 R}\right)\\
  &=\frac{r_{n,s_0}r_{n,s_0+1}}{ R^2}
  j_{n}\left(\frac{r_{n,s_0}}{\epsilon_0}\right)j_{n+1}(r_{n,s_0})\,
   j_{n}\left(\frac{r_{n,s_0+1}}{\epsilon_0}\right)j_{n+1}(r_{n,s_0+1})\\
   &\leq \frac{r_{n,s_0}r_{n,s_0+1}}{R^2}
   j_{n}^2\left(r_{n,1}'\right)j_{n+1}(r_{n,s_0})j_{n+1}(r_{n,s_0+1})<0.
  \end{aligned}
\end{equation*}
Hence, there exists $k_n\in\left({r_{n,s_0}}/{(\epsilon_0 R)}, {r_{n,s_0+1}}/{(\epsilon_0 R)}\right)$
such that $f_n^{\mathrm{TE}}\left(k_n\right)=0$.
By a similar argument, one can also verify that $f_n^{\mathrm{TM}}\left(k_n\right)=0$
and it proves \eqref{eq:k2}.
Furthermore, from \eqref{eq:root} and \eqref{eq:upbound}, one can deduce \eqref{eq:k2-estimate} and it completes the proof.
\end{proof}

\begin{thm}\label{thm:main}
Let  $\Omega=\{\mathbf{x}\in \mathbb{R}^3: |\mathbf{x}|<R \in\mathbb{R}_+\}$ and $\epsilon_0>1$ be a constant. Let $(\mathbf{E}_1^{(n)}, \mathbf{E}_2^{(n)})$ be the pair of  transmission eigenfunctions associated with eigenvalue $k_n$ in \eqref{eq:eigspan01} for  Maxwell's transmission eigenvalue problem \eqref{eq:tcase12r}.   Then it holds that both eigenfunctions $\mathbf{E}_1^{(n)}$ and $\mathbf{E}_2^{(n)}$ are surface-localized around the boundary $\partial \Omega$.
\end{thm}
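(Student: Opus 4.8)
The plan is to run, for \emph{both} components, the same monotonicity-plus-WKB comparison that controlled $\mathbf{E}_2^{(n)}$ in the mono-localized theorem; the decisive difference is that the eigenvalues supplied by Lemma~\ref{lem:eigenvalues2} are smaller than those of Lemma~\ref{lem:eigenvalues}. By \eqref{eq:k2-estimate} one has $k_nR\le n+\tfrac12$, while $k_n\epsilon_0 R\in(r_{n,s_0},r_{n,s_0+1})$ lies only an $O((n+\tfrac12)^{1/3})$ amount above $n+\tfrac12$ with $s_0$ fixed. I would first record the eigenfunction formulas \eqref{eq:E1-TE}, \eqref{eq:E2-TE} (TE) and \eqref{eq:E1-TM}, \eqref{eq:E2-TM} (TM) for the sequence $\{k_n\}$ of Lemma~\ref{lem:eigenvalues2}, then locate the WKB turning points. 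For a profile $j_n(c\rho)$ the turning point is $\rho_c^\ast:=(n+\tfrac12)/c$. Since $k_nR\le n+\tfrac12\le r'_{n,1}$ by \eqref{eq:root}, the product formula \eqref{eq:Bessel-Deri} shows $\rho_{k_n}^\ast\ge R$, so $j_n(k_n\rho)$ is monotonically increasing on all of $(0,R)$; and since $(n+\tfrac12)/(k_n\epsilon_0 R)\to1$ we get $\rho_{k_n\epsilon_0}^\ast=:\rho_1^\ast\to R$, so for every fixed $\tau<R$ the profile $j_n(k_n\epsilon_0\rho)$ is likewise increasing on $(0,\tau)$ once $n$ is large.

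The core estimate is a monotone comparison. Fix $\tau<\tau'<R$ and suppose $j_n(c\rho)$ is increasing on $(0,\min(R,\rho_c^\ast))$, which holds for $c=k_n$ on $(0,R)$ and for $c=k_n\epsilon_0$ on $(0,\tau')$ once $\tau'<\rho_1^\ast$. Then I would bound the numerator of the localization ratio by $\tfrac{\tau^3}{3}j_n^2(c\tau)$ (maximum of the increasing profile at $\tau$) and the denominator from below by $\int_{\tau'}^{\min(R,\rho_c^\ast)}j_n^2(c\rho)\rho^2\,\mathrm{d}\rho\ge (\tau')^2\big(\min(R,\rho_c^\ast)-\tau'\big)\,j_n^2(c\tau')$, reducing everything to the quotient $j_n^2(c\tau)/j_n^2(c\tau')$. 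Writing $c\rho=(n+\tfrac12)z$ with $z<1$ and invoking \eqref{eq:Spherical-Bessel} and \eqref{eq:asympjn01}, this quotient behaves like $\big(g(z_\tau)/g(z_{\tau'})\big)^{2(n+1/2)}$ with $g(z):=z\,\mathrm{e}^{\sqrt{1-z^2}}/(1+\sqrt{1-z^2})$; the elementary strict monotonicity of $g$ on $(0,1)$ (its logarithmic derivative equals $1/z-z/(1+\sqrt{1-z^2})>0$) and the inequality $z_\tau<z_{\tau'}$ force exponential decay in $n$. For $\mathbf{E}_2^{(n)}$ both radii stay a fixed distance below the turning point ($z_\tau\to(\tau/R)/\epsilon_0$, $z_{\tau'}\to(\tau'/R)/\epsilon_0$), so $\|\mathbf{E}_2^{(n)}\|_{L^2(\Omega_\tau)}/\|\mathbf{E}_2^{(n)}\|_{L^2(\Omega)}\to0$ for the TE modes, and the TM modes follow after rewriting \eqref{eq:E2-TM} via $\partial_\rho j_n=(j_{n-1}-j_{n+1})/2$ and the orthogonality $\int_{\mathbb{S}^2}\nabla_s Y_n^m\cdot\nabla_s Y_{n'}^{m'}=n(n+1)\delta_{nn'}\delta_{mm'}$, exactly as in the earlier proof.

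The new ingredient is the treatment of $\mathbf{E}_1^{(n)}$. In the mono-localized theorem this component failed to localize precisely because its turning point sat at $\rho_1^\ast\approx R/\epsilon_0$, leaving a macroscopic oscillatory region $(\rho_1^\ast,R)$ of fixed length $R(1-1/\epsilon_0)$. Here, because the eigenvalue is smaller, $\rho_1^\ast\to R$, so the oscillatory layer collapses to the boundary. Choosing the auxiliary radius $\tau'\in(\tau,R)$ fixed guarantees $\tau'<\rho_1^\ast$ for large $n$, whence the lower-bound interval $(\tau',\rho_1^\ast)$ retains asymptotically positive length $R-\tau'$; the comparison of the previous paragraph then applies verbatim with $c=k_n\epsilon_0$, the quotient $j_n^2(k_n\epsilon_0\tau)/j_n^2(k_n\epsilon_0\tau')$ decaying exponentially since $z_\tau<z_{\tau'}<1$. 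The TM case of $\mathbf{E}_1^{(n)}$ is dispatched by the same recurrence/orthogonality reduction applied to \eqref{eq:E1-TM}.

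I expect the main obstacle to be bookkeeping rather than conceptual. One must verify that \emph{all} spherical Bessel functions appearing after expanding the TM fields — namely $j_{n-1},j_n,j_{n+1}$ at argument $k_n\epsilon_0\rho$ (resp.\ $k_n\rho$) — remain below their respective turning points, hence monotone on $(0,\tau)$, and that the polynomial prefactors together with the radial weight $n(n+1)/(k_n\rho)$ in \eqref{eq:E1-TM}--\eqref{eq:E2-TM} do not spoil the exponential gain. Since $k_n\epsilon_0\tau\approx(n+\tfrac12)(\tau/R)$ is bounded away from $n\pm\tfrac12$ for every fixed $\tau<R$, this is routine, and the exponential decay of the WKB quotient dominates every polynomial factor; consequently the localization ratios of both $\mathbf{E}_1^{(n)}$ and $\mathbf{E}_2^{(n)}$ tend to zero, which is the desired bi-localization.
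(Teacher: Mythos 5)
Your proposal is correct, and it in fact goes further than the paper's own proof. The paper argues only for $\mathbf{E}_2^{(n)}$ in the TE case: it observes that $f(\rho)=J_{n+1/2}^2(k_n\rho)\rho$ is convex on $[0,R]$ when $k_nR\le n+\tfrac12$, bounds $\int_0^R f$ from below by the area of the triangle under the tangent at $\rho=R$, controls the resulting logarithmic-derivative factor $J_{n+1/2}'(k_nR)/J_{n+1/2}(k_nR)$ by citing Lemma~2.3 of \cite{Deng2021}, and gets the exponential smallness of $J_{n+1/2}^2(k_n\tau)/J_{n+1/2}^2(k_nR)$ from Carlini's formula — the same mechanism as your quotient $\bigl(g(z_\tau)/g(z_{\tau'})\bigr)^{2(n+1/2)}$. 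Your denominator bound replaces the convexity/tangent-line device by monotonicity of $j_n(c\rho)$ on $(\tau',\min(R,\rho_c^\ast))$, which is more elementary and avoids the external lemma. More importantly, the paper's proof never addresses $\mathbf{E}_1^{(n)}$ at all, even though that is precisely the new content of this theorem compared with the mono-localized one; your observation that the turning point $\rho_1^\ast=(n+\tfrac12)/(k_n\epsilon_0)$ tends to $R$ because $k_n\epsilon_0R\in(r_{n,s_0},r_{n,s_0+1})$ with $s_0$ fixed — so the oscillatory layer that prevented localization of $\mathbf{E}_1^{(n)}$ in Theorem~2.2 collapses onto the boundary and the same sub-turning-point comparison applies with $c=k_n\epsilon_0$ — is exactly the argument needed to close that gap. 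The TM bookkeeping you flag ($j_{n\pm1}$ staying below their turning points, polynomial prefactors being dominated by the exponential gain) is routine and checks out.
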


\begin{proof}
Without loss of generality, we only consider the TE modes and the TM modes can be proved in a similar manner.
Let $\Omega_{\tau}:=\{\mathbf{x}: |\mathbf{x}|<\tau, \ \tau<R\}$ and $k_n$ be the eigenvalues defined in  \eqref{eq:eigspan01}.
 Using \eqref{eq:E2-TE} and \eqref{eq:Spherical-Bessel}, one has
\begin{equation*}
  \begin{aligned}
   \frac{\|\mathbf{E}_2^{(n)}\|_{L^2(\Omega_{\tau} )}^2}{\|\mathbf{E}_2^{(n)}\|^2_{L^2(\Omega)}}
   &= \frac{\int_0^\tau \int_0^{\pi} \int_0^{2 \pi} \Big|j_n(k_n\rho)\, \nabla_s Y_n^m(\theta,\phi)\wedge \hat{\bm{\rho}} \Big|^2\rho^2 \sin\theta\,\mathrm{d}\phi\mathrm{d}\theta\mathrm{d}\rho }
   {\int_0^R \int_0^{\pi} \int_0^{2 \pi} \Big|j_n(k_n\rho)\, \nabla_s Y_n^m(\theta,\phi)\wedge \hat{\bm{\rho}} \Big|^2\rho^2 \sin\theta\, \mathrm{d}\phi\mathrm{d}\theta\mathrm{d}\rho}\\
   &=\frac{\int_{0}^{\tau}  J_{n+\frac{1}{2}}^2(k_{n}\rho)\rho \, \mathrm{d}\rho}
   {\int_{0}^R  J_{n+\frac{1}{2}}^2(k_{n}\rho)\rho \, \mathrm{d}\rho}.
   \end{aligned}
  \end{equation*}
Set $f(\rho)=J_{n+\frac{1}{2}}^2(k_{n}\rho)\rho$, through a straightforward calculation, one can verify that $f(\rho)$ is a convex function on $[0, R]$ for $k_n\leq (n+1/2)/R$.
Therefore, the area of the triangle under tangent of $f(R)$ is smaller than $\int_0^R f(\rho) \mathrm{d}\rho$, that is,
\begin{equation*}
  \frac{R^2 J_{n+\frac{1}{2}}^3(k_{n}R) }{2 J_{n+\frac{1}{2}}(k_{n}R)+4 k_n R J_{n+\frac{1}{2}}^{'}(k_{n}R)}\leq \int_{0}^R  J_{n+\frac{1}{2}}^2(k_{n}\rho)\rho \, \mathrm{d}\rho.
\end{equation*}
Thus it holds that
\begin{equation*}
  \begin{aligned}
   \frac{\|\mathbf{E}_2^{(n)}\|_{L^2(\Omega_{\tau} )}^2}{\|\mathbf{E}_2^{(n)}\|^2_{L^2(\Omega)}}
   &\leq \frac{\tau^2  J_{n+\frac{1}{2}}^2(k_{n}\tau)\Big(2 J_{n+\frac{1}{2}}(k_{n}R)+4 k_n R J_{n+\frac{1}{2}}^{'}(k_{n}R)\Big)}
   {R^2 J_{n+\frac{1}{2}}^3(k_{n}R) }\\
   &\leq \tau^2 \left(\frac{ J_{n+\frac{1}{2}}^2(k_{n}\tau)}
   {J_{n+\frac{1}{2}}^2(k_{n}R)}\right)^2 \left( \frac{2}{R^2}+\frac{4k_n}{R}\cdot \frac{ J_{n+\frac{1}{2}}^{'}(k_{n}R)}
   {J_{n+\frac{1}{2}}(k_{n}R)}  \right).
   \end{aligned}
  \end{equation*}
By Lemma 2.3 in \cite{Deng2021}, one can show that there exists positive constants $C_3$ and $\gamma$ such that
\begin{equation*}
\frac{ J_{n+\frac{1}{2}}^{'}(k_{n}R)}{J_{n+\frac{1}{2}}(k_{n}R)}
  <C_3 \left(n+\frac{1}{2}\right)^{\gamma}.
\end{equation*}
In addition, using the Calini formula, one can  find that  there exists positive constants $C_4$ and $\delta(\tau, \epsilon_0)\in (0,1)$,  such that
\begin{equation*}
  \frac{ J_{n+\frac{1}{2}}^2(k_{n}\tau)}{J_{n+\frac{1}{2}}^2(k_{n}R)}<C_4 (1-\delta)^{n+\frac{1}{2}}.
\end{equation*}
Hence, combining the last three inequalities, one can obtain that
\begin{equation*}
  \frac{\|\mathbf{E}_2^{(n)}\|_{L^2(\Omega_{\tau} )}^2}{\|\mathbf{E}_2^{(n)}\|^2_{L^2(\Omega)}}\rightarrow 0, \quad \mathrm{as} \  n\rightarrow \infty.
\end{equation*}
The proof is complete.
\end{proof}

\begin{rem}
We would like to point out that there is no contradiction between the mono-localized eigenmodes and bi-localized modes, since the transmission eigenvalues for mono-localized eigenmodes satisfy $k_n>\frac{n+1/2}{R}$ while the transmission eigenvalues for bi-localized eigenmodes satisfy
$\frac{n+1/2}{\epsilon_0 R}<k_n<\frac{n+1/2}{R}$. In fact, for a fixed value $n\in \mathbb{N}_+$ and $\epsilon_0>1$,  the mono-localized eigenmodes  would have a better chance for occurring.
\end{rem}

\subsection{Numerics}
In this section,  we present several numerical examples to show geometrical properties of transmission eigenfunctions.
Here, we shall use the curl conforming finite element method approximation as proposed in  \cite{Monk2012} and briefly describe it in the sequel.   Multiplying the first two equation in \eqref{eq:tcase12}  by  a test functions $  \varphi\in H_0^1(\mathrm{curl}, \Omega)$ and integrating by parts,  one obtains
 \begin{equation}\label{eq:weakformula}
 \begin{aligned}
 & (\nabla \wedge \mathbf{E}_1,    \nabla \wedge  \varphi)- k^2 \varepsilon (\mathbf{E}_1,  \varphi)=0, \quad   & \forall  \varphi\in H_0^1(\mathrm{curl}, \Omega),\\
 & (\nabla \wedge \mathbf{E}_2,    \nabla \wedge  \varphi)- k^2(\mathbf{E}_2,  \varphi)=0, \quad  & \forall  \varphi\in H_0^1(\mathrm{curl}, \Omega).
   \end{aligned}
 \end{equation}
To enforce the boundary conditions $\nu \wedge(\nabla\wedge \mathbf{E}_1)=\nu \wedge(\nabla\wedge \mathbf{E}_2)$  weakly, we multiply it by a test function $\psi \in H^1(\mathrm{curl}, \Omega)$ and integrate by parts
\begin{equation}\label{eq:NeumannBoundary}
  (\nabla \wedge \mathbf{E}_1,    \nabla \wedge  \psi)- k^2 \varepsilon (\mathbf{E}_1,  \psi)
  = (\nabla \wedge \mathbf{E}_2,    \nabla \wedge  \psi)- k^2(\mathbf{E}_2,  \psi).
\end{equation}
 Hence,  the variational formulation for  \eqref{eq:tcase12r} is to
  find $(\mathbf{E}_1,  \mathbf{E}_2)\in H(\mathrm{curl},\Omega)\times H(\mathrm{curl},\Omega)$ satisfying  \eqref{eq:weakformula} and  \eqref{eq:NeumannBoundary},
 together with the essential boundary condition $\nu\wedge\mathbf{ E}_1 =\nu\wedge\mathbf{ E}_2$ on $\partial \Omega$.

Let $\mathcal{T}_h$  be a regular  tetrahedral mesh for  $\Omega$ and  $V_h$ be   the curl conforming element space of N\'ed\'elec
\begin{equation*}
  V_h=\{\mathbf{E}\in H(\mathrm{curl},\Omega): \mathbf{E}|_{K}(\bm x)=\bm \alpha + \bm \beta \wedge \bm x,\ K\in \mathcal{T}_h, \ \bm \alpha, \bm \beta\in (P_n)^3  \},
\end{equation*}
where $P_n$ denotes the space of homogeneous polynomials of order $n$.
 We also define a subspace $V_h^0=V\cap H_0(\mathrm{curl},\Omega)$.
 Let
  $\{\bm \eta_i: \, i=1,2,\cdots m  \}$ denote  a basis for $V_h^0$  and   $\{\bm \eta_i: \, i=1,2,\cdots m  \}\cup \{\bm \zeta_j: \, j=1,2,\cdots n \}$ denote   a basis for $V_h$, respectively.   To enforce the boundary condition $\nu \wedge \mathbf{E}_1=\nu \wedge \mathbf{E}_2$,  we set
  \begin{equation*}
   \mathbf{ E}_1^h=\sum_{i=1}^m u_1^{(i)}\bm \eta_i+ \sum_{j=1}^n  w^{(j)}\bm \zeta_j, \quad
    \mathbf{ E}_2^h=\sum_{i=1}^m u_2^{(i)}\bm \eta_i+ \sum_{j=1}^n  w^{(j)}\bm \zeta_j,
  \end{equation*}
 to be the finite element approximations of $ \mathbf{ E}_1$ and $ \mathbf{ E}_2$, respectively.  Thus,  the discrete form of
\eqref{eq:weakformula} and \eqref{eq:NeumannBoundary} can be written as
\begin{equation}\label{eq:Eigenvalue}
  AX=k^2 BX,
\end{equation}
where
\begin{equation*}
  X=[u_1^{(1)}, \cdots, u_1^{(m)}, u_2^{(1)}, \cdots, u_2^{(m)}, w^{(1)}, \cdots, w^{(n)}]\in \mathbb{C}^{2m+n},
\end{equation*}
and
\begin{equation*}
  A=\left[
      \begin{array}{ccc}
        S_{II} & 0 & S_{IB}  \\
        0& S_{II}  &S_{IB} \\
        S_{IB}^{\top} & -S_{IB}^{\top}  & 0 \\
      \end{array}
    \right],
    \quad
    B=\left[
      \begin{array}{ccc}
        M_{2,II} & 0 & M_{2,IB}  \\
        0& M_{1,II}  &M_{1,IB} \\
        M_{2,IB}^{\top} & -M_{1,IB}^{\top} & M_{2,BB}-M_{1,BB} \\
      \end{array}
    \right].
\end{equation*}
Here,   the  stiffness and mass matrices are given by
\begin{equation*}
\begin{aligned}
  &  S_{II}=(\nabla\wedge \bm \eta_i, \nabla\wedge\bm \eta_j),  &&  S_{IB}=(\nabla\wedge\bm \eta_i, \nabla\wedge \bm \zeta_j),
   &&  M_{2,II}=(\varepsilon \bm \eta_i, \bm \eta_j),    &&  M_{1,II}=(\bm \eta_i,\bm  \eta_j),\\
   &   M_{2,IB}=(\varepsilon \bm \eta_i, \bm \zeta_j),    &&  M_{1,IB}=(\bm \eta_i, \bm \zeta_j),  &&  M_{2,BB}=(\varepsilon \bm \zeta_i, \bm \zeta_j),   &&  M_{1,BB}=(\bm \zeta_i, \bm \zeta_j).  \\
   \end{aligned}
\end{equation*}
Here, the assembly of the above matrices is implemented by an open-source PDE solver
FreeFEM \cite{Hecht2012} with the first order curl conforming method. The eigenvalues and eigenvectors for the  non-symmetric eigenvalue problem \eqref{eq:Eigenvalue} are computed  by {\it sptarn} function in MATLAB, which is based on the Arnoldi algorithm with spectral transformation.

\begin{figure}
    \subfigure[$E_1^{(1)}$]{\includegraphics[width=0.32\textwidth]{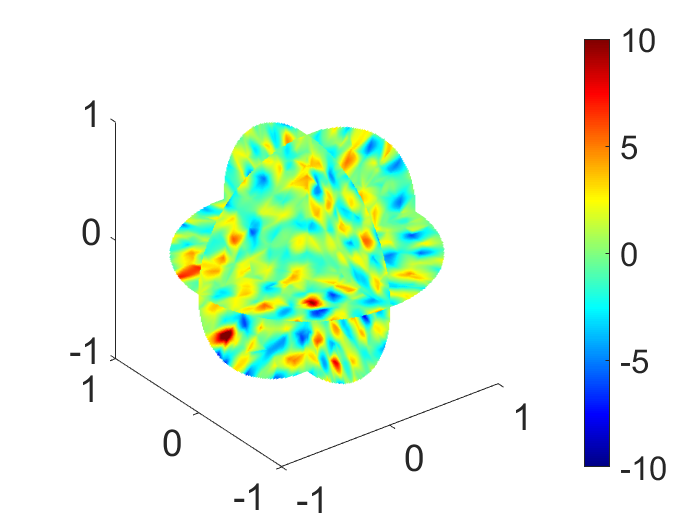}}
    \subfigure[$E_1^{(2)}$]{\includegraphics[width=0.32\textwidth]{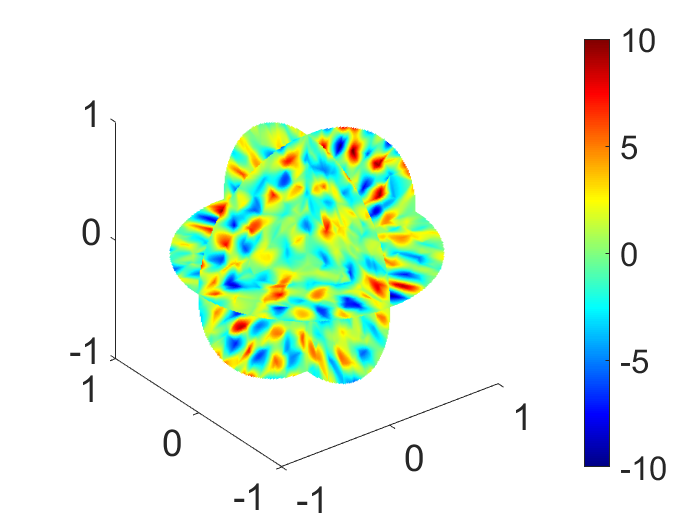}}
    \subfigure[$E_1^{(3)}$]{\includegraphics[width=0.32\textwidth]{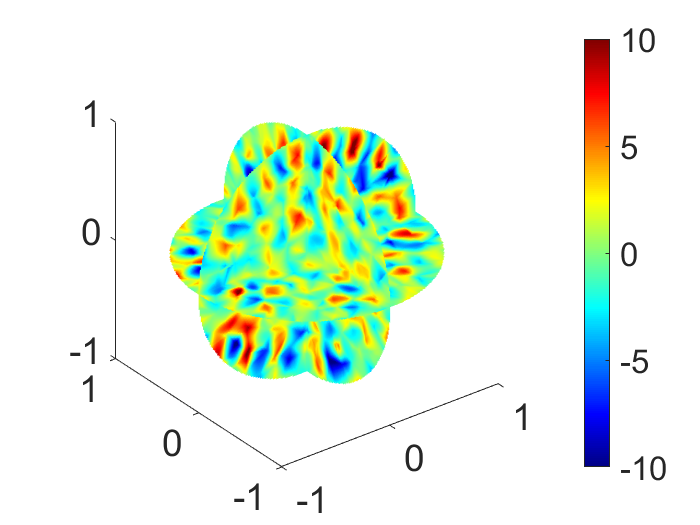}}\\
    \subfigure[$E_2^{(1)}$]{\includegraphics[width=0.32\textwidth]{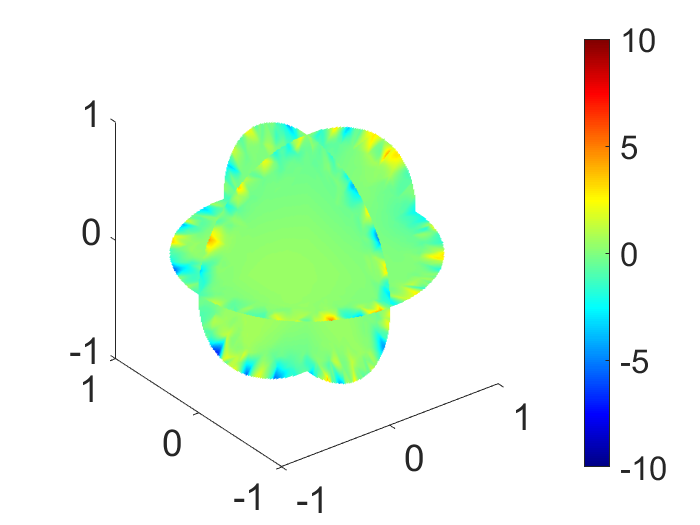}}
    \subfigure[$E_2^{(2)}$]{\includegraphics[width=0.32\textwidth]{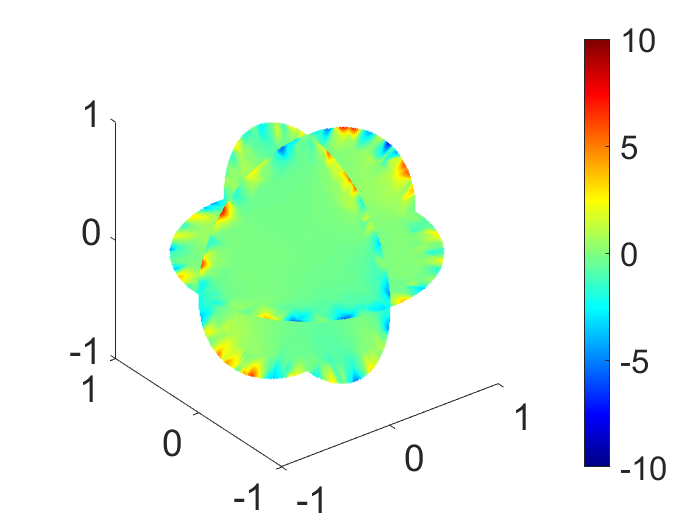}}
    \subfigure[$E_2^{(3)}$]{\includegraphics[width=0.32\textwidth]{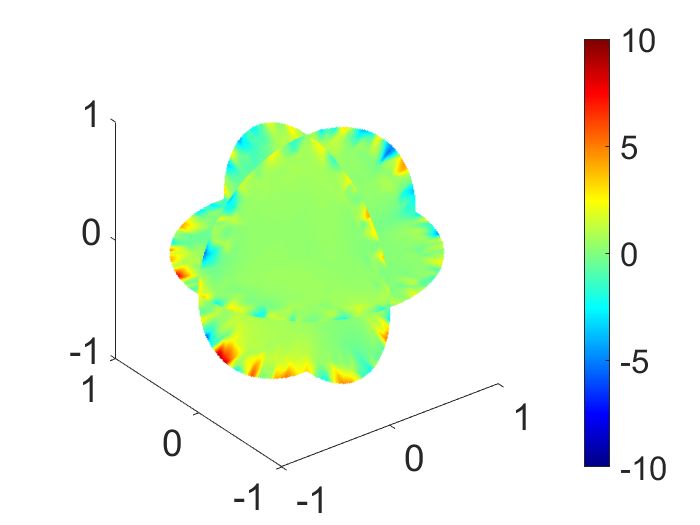}}
    \caption{\label{fig:ball-mono} Mono-localized transmission eigenmodes. Slice plots of eigenfunctions $\mathbf{E}_1$ and $\mathbf{E}_2$ for a unit ball,  where $k=3.3318$. The top row and bottom row  denote the three components of eigenfunctions of $\mathbf{E}_1$ and $\mathbf{E}_2$  respectively.}
\end{figure}

\begin{figure}
    \subfigure[$E_1^{(1)}$]{\includegraphics[width=0.32\textwidth]{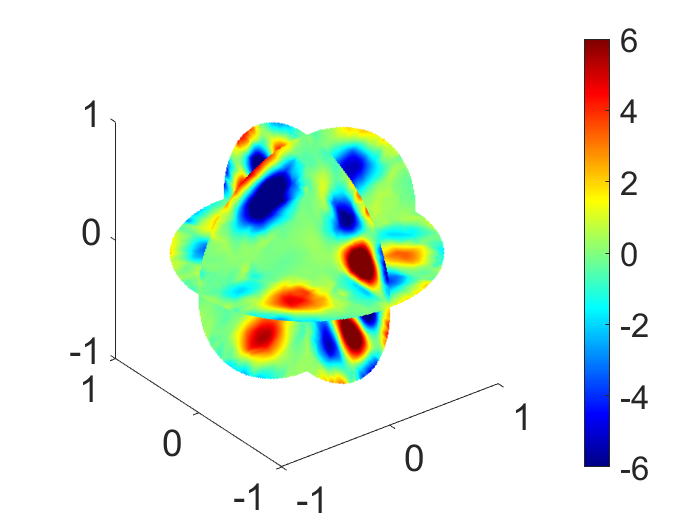}}
    \subfigure[$E_1^{(2)}$]{\includegraphics[width=0.32\textwidth]{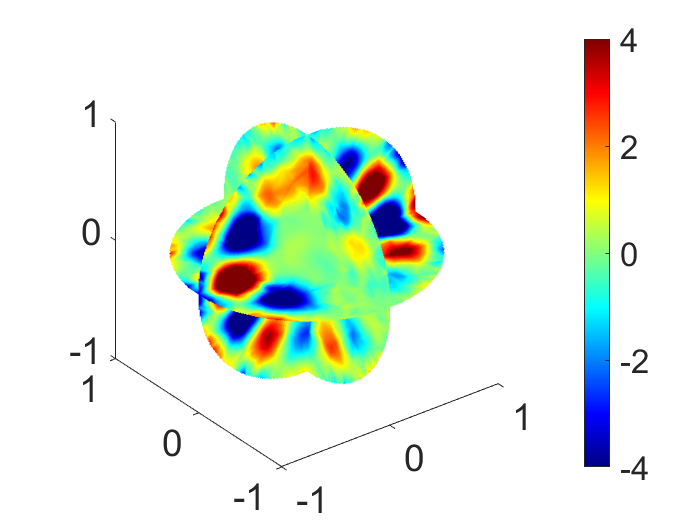}}
    \subfigure[$E_1^{(3)}$]{\includegraphics[width=0.32\textwidth]{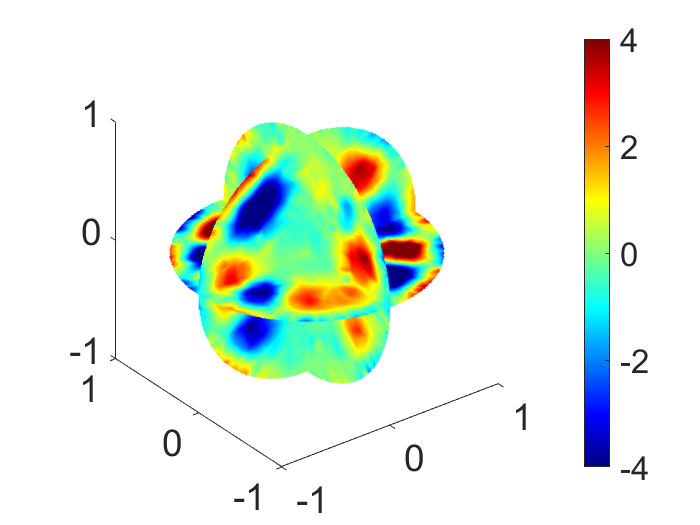}}\\
    \subfigure[$E_2^{(1)}$]{\includegraphics[width=0.32\textwidth]{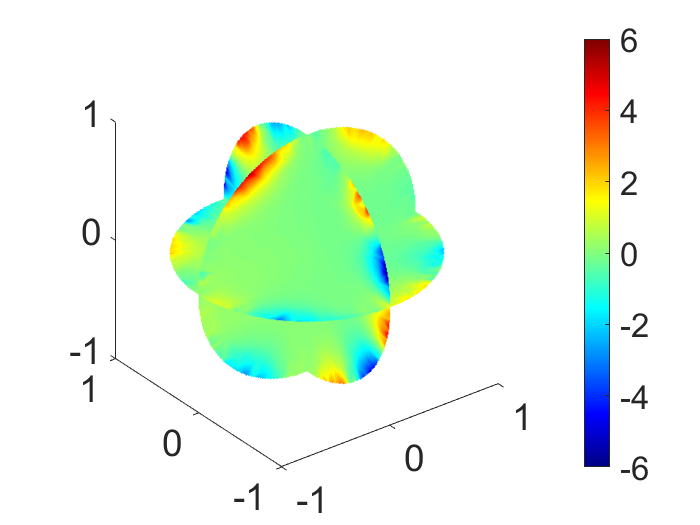}}
    \subfigure[$E_2^{(2)}$]{\includegraphics[width=0.32\textwidth]{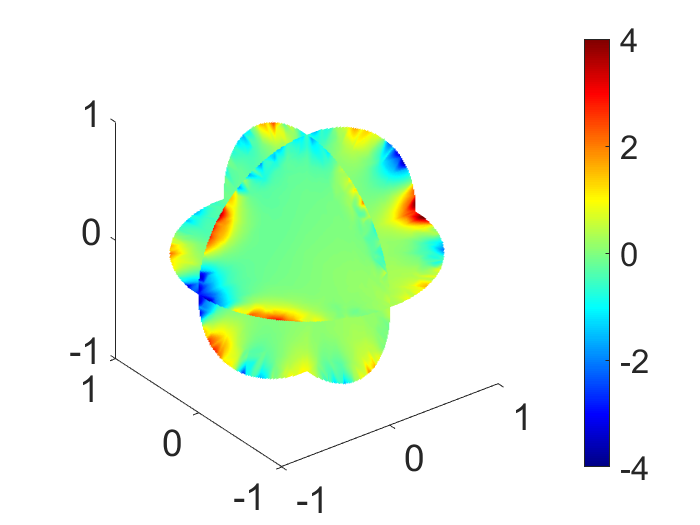}}
    \subfigure[$E_2^{(3)}$]{\includegraphics[width=0.32\textwidth]{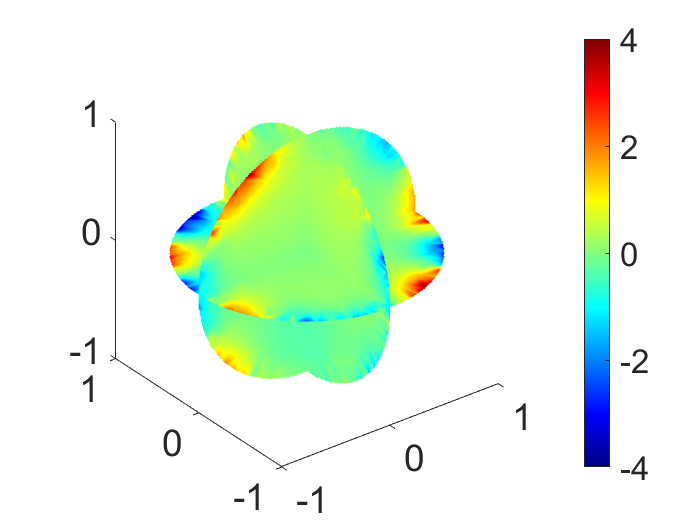}}
    \caption{\label{fig:ball-bi} Bi-localized transmission eigenmodes. Slice plots of eigenfunctions $\mathbf{E}_1$ and $\mathbf{E}_2$ for a unit ball, where $k=3.2296$. The top row and bottom row  denote the three components of eigenfunctions of $\mathbf{E}_1$ and $\mathbf{E}_2$ respectively.}
\end{figure}

\begin{figure}
    \subfigure[$E_1^{(1)}$]{\includegraphics[width=0.32\textwidth]{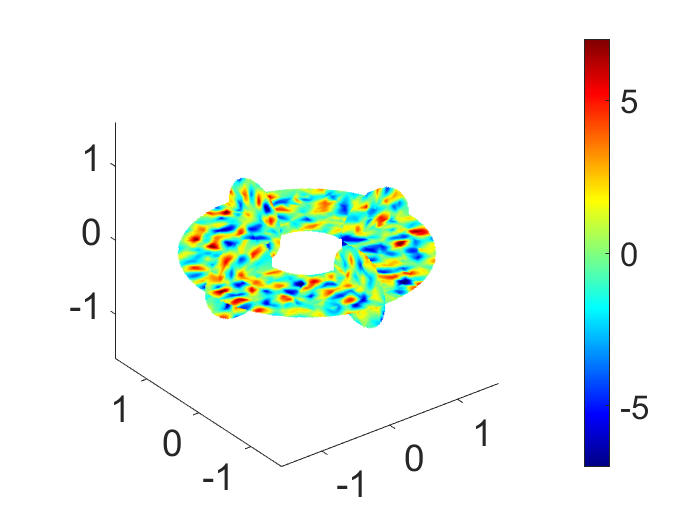}}
    \subfigure[$E_1^{(2)}$]{\includegraphics[width=0.32\textwidth]{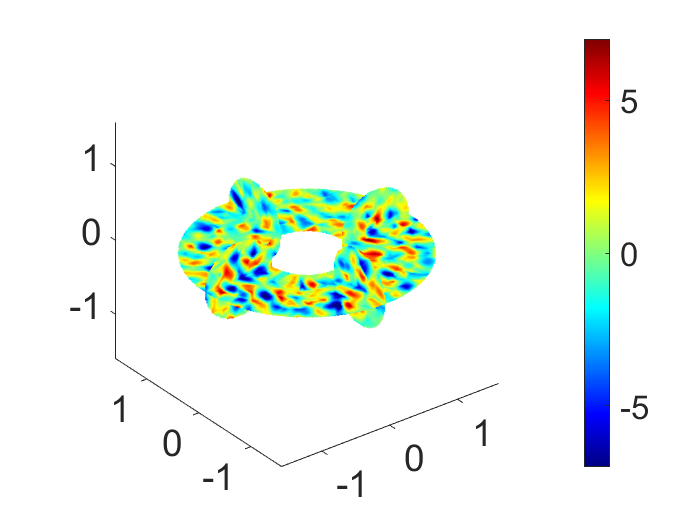}}
    \subfigure[$E_1^{(3)}$]{\includegraphics[width=0.32\textwidth]{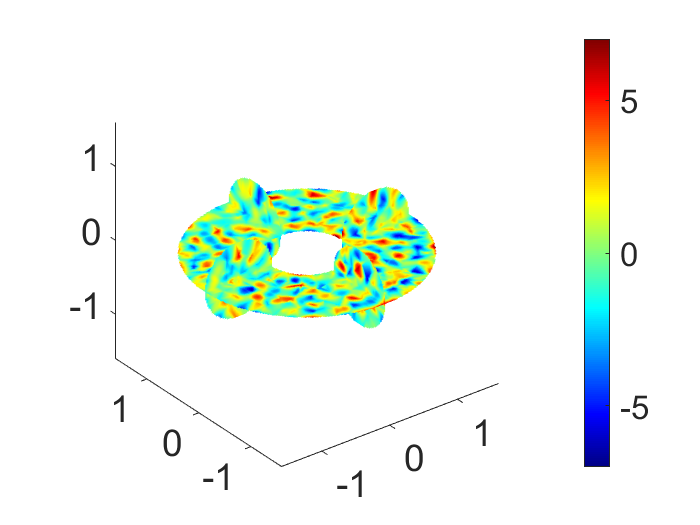}}\\
    \subfigure[$E_2^{(1)}$]{\includegraphics[width=0.32\textwidth]{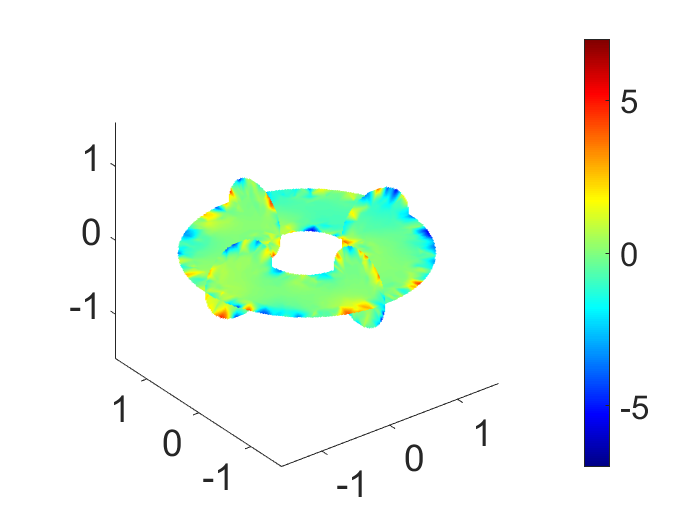}}
    \subfigure[$E_2^{(2)}$]{\includegraphics[width=0.32\textwidth]{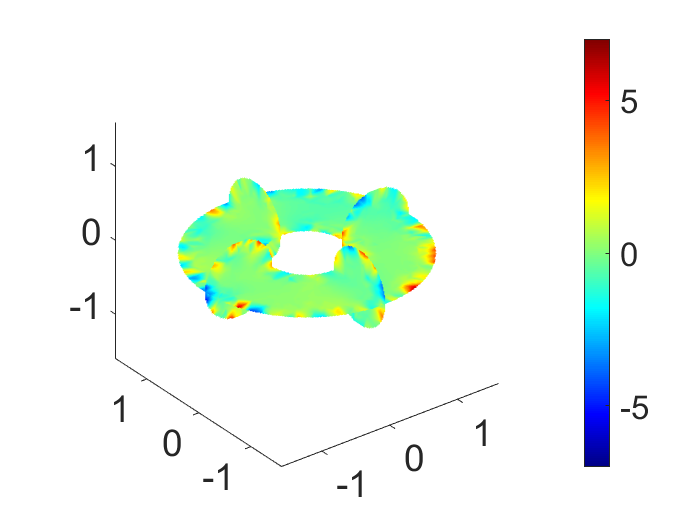}}
    \subfigure[$E_2^{(3)}$]{\includegraphics[width=0.32\textwidth]{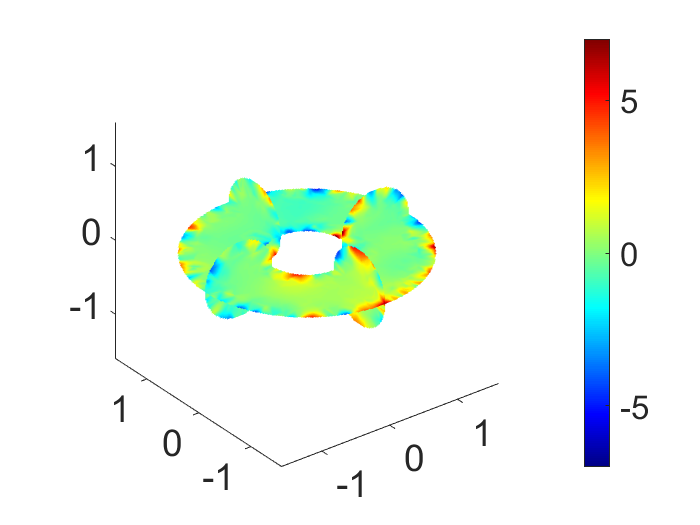}}
    \caption{\label{fig:torus-mono} Mono-localized transmission eigenmodes. Slice plots of eigenfunctions $\mathbf{E}_1$ and $\mathbf{E}_2$ for a torus domain, where $k=3.1782$. The top row and bottom row denote the three components of eigenfunctions of $\mathbf{E}_1$ and $\mathbf{E}_2$ respectively.}
\end{figure}

\begin{example}
In this example, we  consider a constant refractive index case and set $\varepsilon=64$. We first choose a unit ball as the domain $\Omega$. The mesh is generated by the open-source soft GMSH \cite{Geuzaine09} and  the number of the degree of freedom is $\mathrm{DoF}=23089$.
We use an interval $[10, 12]$ to search for transmission eigenvalues and the corresponding eigenfunctions.  Figure \ref{fig:ball-mono} shows the slice plots of transmission eigenfunctions associated with transmission eigenvalue $k=3.3318$.  It is clear that  the eigenfunctions are surface-localized around the boundary for  $\mathbf{E}_2$ but not surface-localized around the boundary for $\mathbf{E}_1$, which fulfil the mono-localized transmission eigenmodes.
Moreover, Fig.~\ref{fig:ball-bi} presents a pair of transmission eigenfunctions that satisfy the bi-localized transmission eigenmodes with $k=3.2296$. One can find that the localized behavior is obvious for $\mathbf{E}_2$ while not obvious for $\mathbf{E}_1$, which is because the eigenvalue $k$ is not sufficiently large. In fact, the transmission eigenfunctions of $\mathbf{E}_1$ almost vanish in the centre area of the unit ball and are localized near the boundary. In addition, we must emphasize that bi-localized cases are much fewer than mono-localized cases in numerics.

We now consider a more general domain, i.e., torus domain. The major radius is $1.5$ and the minor radius is $0.5$.  In order to resolve the model, we introduce finer meshes and the number of the degree of freedom for torus is $\mathrm{DoF}=52108$.  Interval $[11,12]$ is used for searching transmission eigenvalues and eigenfunctions.  All computational results only fulfil the mono-localized transmission eigenmodes as shown in Fig.~\ref{fig:torus-mono}.
\end{example}

\begin{example}
\begin{figure}
    \subfigure[$E_2^{(1)}$]{\includegraphics[width=0.32\textwidth]{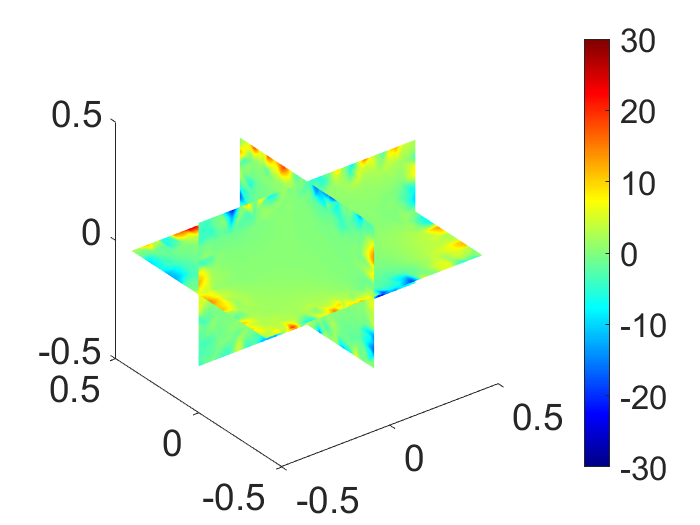}}
    \subfigure[$E_2^{(2)}$]{\includegraphics[width=0.32\textwidth]{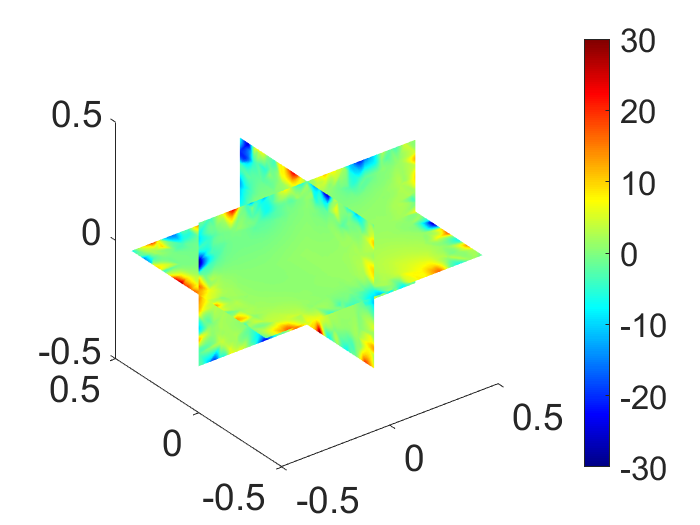}}
    \subfigure[$E_2^{(3)}$]{\includegraphics[width=0.32\textwidth]{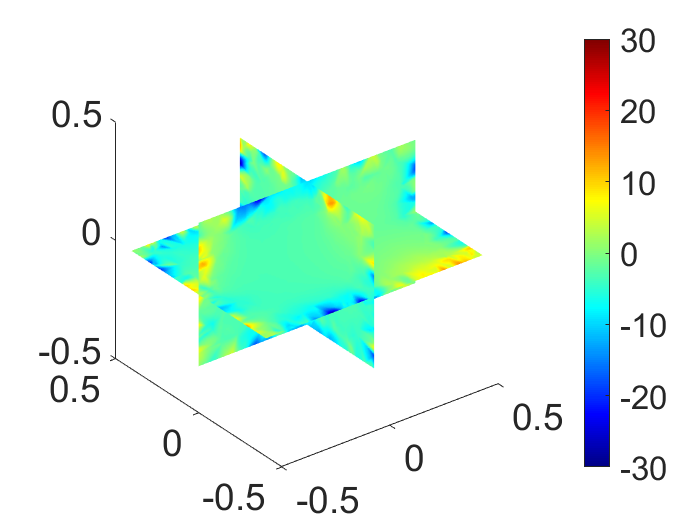}}\\
    \subfigure[plane $x=0$]{\includegraphics[width=0.32\textwidth]{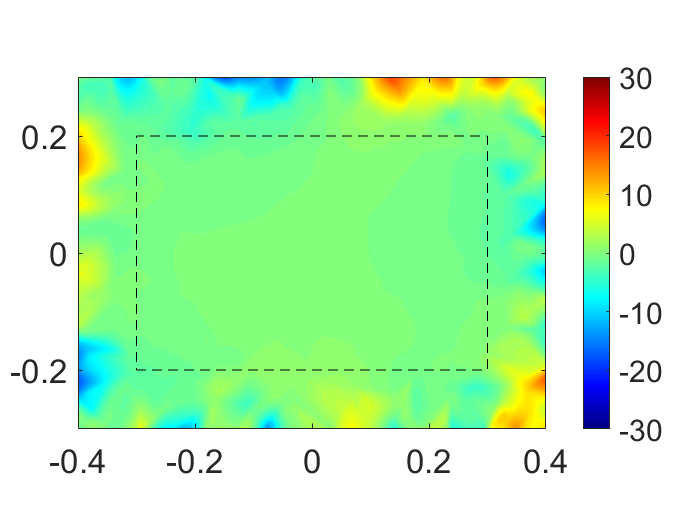}}
    \subfigure[plane $y=0$]{\includegraphics[width=0.32\textwidth]{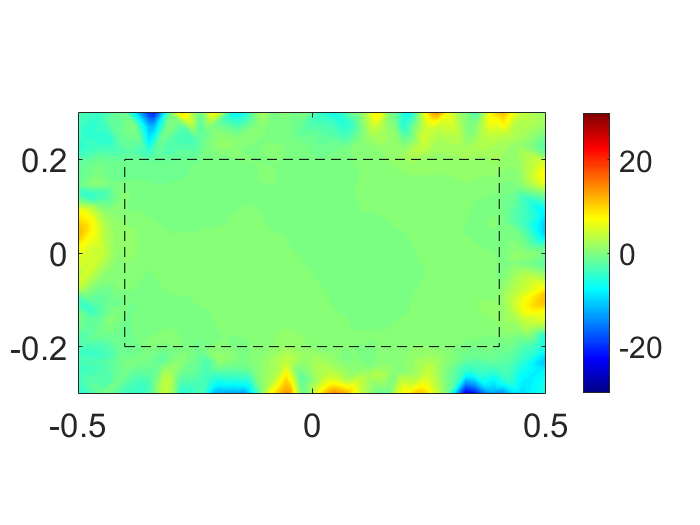}}
    \subfigure[plane $z=0$]{\includegraphics[width=0.32\textwidth]{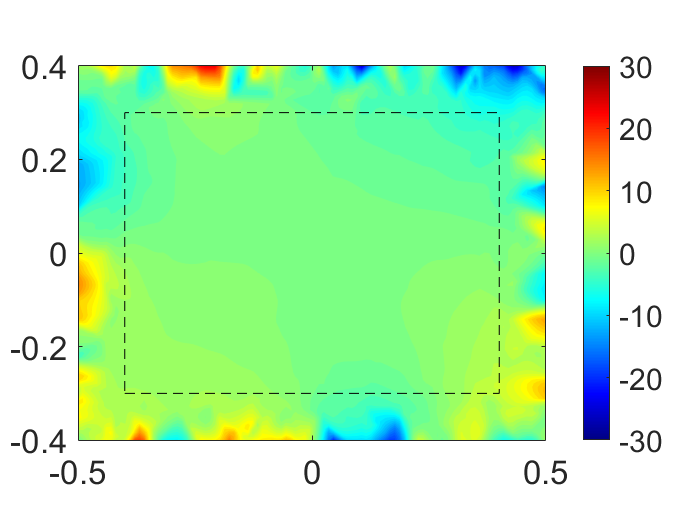}}\\
    \caption{\label{fig:cub-mono} Mono-localized transmission eigenmodes. Slice plots of eigenfunctions $\mathbf{E}_2$ for a cuboid domain, where $k=3.3448$. The top row denotes the three components of eigenfunctions of $\mathbf{E}_2$ and the bottom row denotes slice plots of $\mathbf{E}_2^{(1)}$ at three planes.}
\end{figure}
In this example, we consider a case with a piecewise constant refractive index. We construct a two-layer cuboid domain centered at the origin, where the edge lengths of the inner cuboid are $(0.8, 0.6, 0.4)$ and $(1, 0.8, 0.6)$ for the outer cuboid. Here, we set $\varepsilon=4$ inside the inner cuboid and $\varepsilon=400$ between the inner and outer cuboids. To provide a more accurate solution with less computational cost,  we establish a relatively rough mesh in the inner cuboid but a fine mesh outside the inner cuboid and the number of the degree of freedom is DoF $= 65394$. Fig. \ref{fig:cub-mono} presents the transmission eigenfunctions $\mathbf{E}_2$  for the mono-localized transmission eigenmodes with $k=3.3448$.
In the figures, the dotted black lines denote the contour of the inner cuboid, which is the interface of different refractive indices. One can observe that the transmission eigenfunctions are actually surface-localized around the boundary even if the refractive index is piecewise-constant, see Fig. \ref{fig:cub-mono} (d)(e)(f).


\end{example}

\section{Topological properties of Maxwell transmission eigenfunctions}

In what follows, with a bit abuse of notations, we let $\varepsilon, \mu$ be symmetric-positive-definite-matrix valued functions in $\Omega$, and consider the Maxwell system
\begin{equation}\label{eq:m1}
\nabla\wedge\mathbf{E}=\mathrm{i}k\mu\mathbf{H},\quad\nabla\wedge\mathbf{H}=-\mathrm{i}k\varepsilon\mathbf{E}\quad\mbox{in}\ \ \Omega.
\end{equation}
We recall the following transformation-invariant property of the Maxwell system.
\begin{lem}[Lemma 2.2 in \cite{LZ}]\label{lem:t1}
Consider a bi-Lipschitz and orientation-preserving coordinate transformation $\widetilde{\mathbf{x}}=\mathbf{F}(\mathbf{x}):\Omega\rightarrow\widetilde{\Omega}$. Let $D\mathbf{F}$ denote the Jacobian matrix of $\mathbf{F}$. Assume that $(\mathbf{E},\mathbf{H})\in H(\mathrm{curl}, \Omega)\times H(\mathrm{curl}, \Omega)$ are EM fields to \eqref{eq:m1}; then for the pull-back fields given by
\begin{equation*}\label{eq:p1}
\widetilde{\mathbf{Q}}=(\mathbf{F}^{-1})^*\mathbf{Q}:=(D\mathbf{F}^T)\mathbf{Q}\circ\mathbf{F}^{-1},\quad \mathbf{Q}=\mathbf{E}, \mathbf{H},
\end{equation*}
we have $(\mathbf{E}, \mathbf{H})\in H(\mathrm{curl}, \Omega)\times H(\mathrm{curl}, \Omega)$, which satisfy the Maxwell system
\begin{equation*}\label{eq:p2}
\widetilde{\nabla}\wedge\widetilde{\mathbf{E}}=\mathrm{i}k\tilde{\mu}\widetilde{\mathbf{H}},\quad \widetilde{\nabla}\wedge\widetilde{\mathbf{H}}=-\mathrm{i}k\widetilde{\varepsilon}\widetilde{\mathbf{E}},
\end{equation*}
where $\widetilde{\nabla}$ signifies the differentiation in the $\widetilde{\mathbf{x}}$-coordinate, and $\tilde{\varepsilon}$ and $\tilde{\mu}$ are the push-forwards of $\varepsilon$ and $\mu$ via $\mathbf{F}$, defined by
\begin{equation*}\label{eq:pf1}
\tilde{\zeta}=\mathbf{F}_*\zeta:=|D\mathbf{F}|^{-1} D\mathbf{F}\cdot\zeta\cdot D\mathbf{F}^T\circ\mathbf{F}^{-1},\quad\zeta:=\varepsilon, \mu.
\end{equation*}
\end{lem}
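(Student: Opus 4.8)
The plan is to read the transformation rule $\widetilde{\mathbf{Q}}=(\mathbf{F}^{-1})^*\mathbf{Q}$ as the componentwise expression of the pull-back of a differential $1$-form under $\mathbf{F}^{-1}$, and then to run the entire argument on the single structural fact that the exterior derivative commutes with pull-back. Concretely, I would identify $\mathbf{E}$ and $\mathbf{H}$ with the $1$-forms $E=\sum_i E_i\,\mathrm{d}x^i$ and $H=\sum_i H_i\,\mathrm{d}x^i$, so that, under the standard three-dimensional identification of $2$-forms with vector fields, the operator $\nabla\wedge$ acting on a vector field coincides with $\mathrm{d}$ acting on the associated $1$-form. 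In this language the two equations in \eqref{eq:m1} read $\mathrm{d}E=\mathrm{i}k\,(\mu\mathbf{H})^{\flat}$ and $\mathrm{d}H=-\mathrm{i}k\,(\varepsilon\mathbf{E})^{\flat}$, where the right-hand sides are interpreted as $2$-forms (flux densities). Because $\mu\mathbf{H}$ and $\varepsilon\mathbf{E}$ are naturally $2$-forms, the constitutive coefficients must carry the Hodge-star data, and the prescribed push-forward $\widetilde\zeta=|D\mathbf{F}|^{-1}D\mathbf{F}\,\zeta\,D\mathbf{F}^T\circ\mathbf{F}^{-1}$ is precisely the transformation law that renders these two identities form-invariant.

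The decisive computational step is to establish, for the covariant transformation of a $1$-form, that the curl transforms contravariantly as a $2$-form, with the cofactor (adjugate) weight of the Jacobian:
\begin{equation*}
\widetilde{\nabla}\wedge\widetilde{\mathbf{E}}=\frac{1}{|D\mathbf{F}|}\,(D\mathbf{F})\,(\nabla\wedge\mathbf{E})\circ\mathbf{F}^{-1}.
\end{equation*}
This is exactly the naturality $\mathrm{d}\,(\mathbf{F}^{-1})^*\mathbf{E}=(\mathbf{F}^{-1})^*\,\mathrm{d}\mathbf{E}$ written in coordinates, and I would verify it by the chain rule, writing $\partial/\partial\widetilde{x}_i=\sum_k (D\mathbf{F}^{-1})_{ki}\,\partial/\partial x_k$ and using $|D\mathbf{F}|\,(D\mathbf{F})^{-1}=\mathrm{adj}(D\mathbf{F})$. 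The point of the push-forward rule for $\mu$ is that $\widetilde\mu\widetilde{\mathbf{H}}$ is forced to carry the identical weight $|D\mathbf{F}|^{-1}(D\mathbf{F})$ applied to $(\mu\mathbf{H})\circ\mathbf{F}^{-1}$; hence the two sides of the first equation transform by the same rule, and matching them factor for factor yields $\widetilde{\nabla}\wedge\widetilde{\mathbf{E}}=\mathrm{i}k\,\widetilde{\mu}\widetilde{\mathbf{H}}$. The second Maxwell equation is handled identically, with $(\mu,\mathbf{H})$ replaced by $(\varepsilon,\mathbf{E})$ and an overall sign change. Orientation preservation guarantees $|D\mathbf{F}|>0$ almost everywhere, so the weight is well defined with no absolute-value ambiguity.

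The main obstacle is \emph{regularity}: $\mathbf{F}$ is only bi-Lipschitz, so by Rademacher's theorem $D\mathbf{F}$ exists merely almost everywhere and lies in $L^\infty$, while the fields belong only to $H(\mathrm{curl})$. Consequently the curl-transformation identity cannot be read pointwise and must be justified distributionally. The plan is to prove it weakly: pair $\widetilde{\nabla}\wedge\widetilde{\mathbf{E}}$ with an arbitrary $\bm{\varphi}\in C_c^\infty(\widetilde{\Omega})^3$, transport the integral back to $\Omega$ via the change-of-variables formula for Lipschitz maps, and recognize the resulting expression as the pairing of $\nabla\wedge\mathbf{E}$ against the pulled-back test field $\mathbf{F}^{*}\bm{\varphi}$, up to the Jacobian weights. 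The uniform bi-Lipschitz bounds ensure simultaneously that $\mathbf{F}^{*}\bm{\varphi}$ is an admissible test field, that $\widetilde{\mathbf{Q}}\in H(\mathrm{curl},\widetilde{\Omega})$ so every pairing is finite, and that the distributional curl of the transformed field is the object computed above. Once this weak identity is secured, the purely algebraic matching of the second paragraph carries over verbatim to the distributional level, completing the proof.
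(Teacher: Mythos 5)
The paper does not actually prove this lemma: it is imported verbatim as Lemma 2.2 of \cite{LZ} and used as a black box, so there is no in-house argument to compare yours against. Your proof is essentially the standard transformation-optics argument that the cited reference itself employs --- read $\mathbf{E},\mathbf{H}$ as $1$-forms, use naturality of the exterior derivative under pull-back to obtain the cofactor transformation law $\widetilde{\nabla}\wedge\widetilde{\mathbf{E}}=|D\mathbf{F}|^{-1}D\mathbf{F}\,(\nabla\wedge\mathbf{E})\circ\mathbf{F}^{-1}$, check that the prescribed push-forward of $\varepsilon,\mu$ produces exactly the same cofactor weight on the right-hand sides, and justify the curl identity distributionally because $\mathbf{F}$ is only bi-Lipschitz --- and it is correct and complete, including the weak-formulation step that most expositions gloss over. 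One caveat you should make explicit rather than implicit: the pull-back as typeset in the statement, $\widetilde{\mathbf{Q}}=(D\mathbf{F}^{T})\mathbf{Q}\circ\mathbf{F}^{-1}$, is not the covariant ($1$-form) transformation law, and with it taken literally the algebra does not close, since then $\widetilde{\mu}\widetilde{\mathbf{H}}=|D\mathbf{F}|^{-1}D\mathbf{F}\,\mu\,D\mathbf{F}^{T}D\mathbf{F}^{T}\mathbf{H}\circ\mathbf{F}^{-1}$ carries a spurious extra factor of $D\mathbf{F}^{T}$. Your reading of the rule as the genuine $1$-form pull-back, $\widetilde{\mathbf{Q}}=(D\mathbf{F}^{T})^{-1}\mathbf{Q}\circ\mathbf{F}^{-1}$, is the convention actually used in \cite{LZ} and is the one under which your cofactor identity and the subsequent factor-by-factor matching hold; the formula as printed here is missing an inverse, and your proof silently corrects it.
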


Using Lemma~\ref{lem:t1}, one can readily obtain the following transformation property of the transmission eigenvalue problem.

\begin{thm}\label{thm:t1}
Let $\Omega, \widetilde{\Omega}$ and $\mathbf{F}$ be the same as those in Lemma~\ref{lem:t1}. Assume that $\mathbf{Q}_j\in H(\mathrm{curl}, \Omega)$, $j=1, 2$ and $\mathbf{Q}=\mathbf{E}, \mathbf{H}$, satisfy the following transmission eigenvalue problem:
\begin{equation}\label{eq:tt1}
\left\{
\begin{aligned}
&\nabla\wedge\mathbf{E}_j-\mathrm{i}k\mu_j\mathbf{H}_j=\mathbf{0}&&\mbox{in}\ \Omega,\\
&\nabla\wedge\mathbf{H}_j+\mathrm{i}k\varepsilon_j\mathbf{E}_j=\mathbf{0} &&\mbox{in}\ \Omega,\\
&\nu\wedge\mathbf{E}_j=\nu\wedge\mathbf{H}_j && \mbox{on}\ \partial\Omega, \ \ j=1, 2,
\end{aligned}
\right.
\end{equation}
the the pull-back fields $\widetilde{\mathbf{Q}}_j=(\mathbf{F}^{-1})^*\mathbf{Q}_j$ are solutions to the following transmission eigenvalue problem:
\begin{equation}\label{eq:tt2}
\left\{
\begin{aligned}
&\nabla\wedge\widetilde{\mathbf{E}}_j-\mathrm{i}k\tilde\mu_j\widetilde{\mathbf{H}}_j=\mathbf{0}, && \mbox{in}\ \widetilde\Omega,\\
&\nabla\wedge\widetilde{\mathbf{H}}_j+\mathrm{i}k\tilde\varepsilon_j\widetilde{\mathbf{E}}_j=\mathbf{0}&& \mbox{in}\ \widetilde\Omega,\\
&\nu\wedge\widetilde{\mathbf{E}}_j=\nu\wedge\widetilde{\mathbf{H}}_j && \mbox{on}\ \partial\widetilde\Omega, \ \ j=1, 2,
\end{aligned}
\right.
\end{equation}
where $(\tilde\varepsilon_j, \tilde\mu_j)=\mathbf{F}_*(\varepsilon_j, \mu_j)$, $j=1, 2$.
\end{thm}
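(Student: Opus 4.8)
The plan is to deduce \cref{thm:t1} from \cref{lem:t1} by applying the latter separately to each of the two constituent Maxwell pairs and then showing that the transmission-type boundary condition is preserved under the covariant pull-back. The two interior systems in \eqref{eq:tt1} are decoupled, differing only through the material parameters $(\varepsilon_j,\mu_j)$, so the interior part of the conclusion will be an almost immediate consequence of \cref{lem:t1}; the genuine content lies in the behaviour of the tangential traces on $\partial\Omega$.

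First I would fix $j\in\{1,2\}$ and regard $(\mathbf{E}_j,\mathbf{H}_j)$ as a solution of the single Maxwell system \eqref{eq:m1} with parameters $(\varepsilon_j,\mu_j)$. Applying \cref{lem:t1} verbatim, the pull-back fields $\widetilde{\mathbf{E}}_j=(\mathbf{F}^{-1})^*\mathbf{E}_j$ and $\widetilde{\mathbf{H}}_j=(\mathbf{F}^{-1})^*\mathbf{H}_j$ lie in $H(\mathrm{curl},\widetilde{\Omega})$ and satisfy
\begin{equation*}
\widetilde{\nabla}\wedge\widetilde{\mathbf{E}}_j=\mathrm{i}k\tilde{\mu}_j\widetilde{\mathbf{H}}_j,\qquad \widetilde{\nabla}\wedge\widetilde{\mathbf{H}}_j=-\mathrm{i}k\tilde{\varepsilon}_j\widetilde{\mathbf{E}}_j\quad\text{in}\ \widetilde{\Omega},
\end{equation*}
with $(\tilde{\varepsilon}_j,\tilde{\mu}_j)=\mathbf{F}_*(\varepsilon_j,\mu_j)$. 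Carrying this out for $j=1$ and $j=2$ reproduces exactly the two interior systems of \eqref{eq:tt2} together with the prescribed pushed-forward parameters, so it remains only to transfer the boundary condition.

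The key observation for the boundary is that the map $\mathbf{Q}\mapsto(D\mathbf{F}^T)\mathbf{Q}\circ\mathbf{F}^{-1}$ is, in the language of exterior calculus, the pull-back of the $1$-form associated with $\mathbf{Q}$, and that $\mathbf{F}$ carries $\partial\Omega$ onto $\partial\widetilde{\Omega}$ since it is an orientation-preserving homeomorphism of $\overline{\Omega}$ onto $\overline{\widetilde{\Omega}}$. Because the pull-back of forms commutes with restriction to a submanifold, the tangential trace $\nu\wedge\mathbf{Q}|_{\partial\Omega}$ is transported to $\nu\wedge\widetilde{\mathbf{Q}}|_{\partial\widetilde{\Omega}}$ through the fixed invertible surface map $\mathbf{F}|_{\partial\Omega}$ acting on the tangential-trace spaces; in particular this transport is independent of $\mathbf{Q}$. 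Consequently, whenever two fields share the same tangential trace on $\partial\Omega$, their pull-backs share the same tangential trace on $\partial\widetilde{\Omega}$. Applying this to the boundary condition of \eqref{eq:tt1} gives $\nu\wedge\widetilde{\mathbf{E}}_j=\nu\wedge\widetilde{\mathbf{H}}_j$ on $\partial\widetilde{\Omega}$ for $j=1,2$, which is precisely the boundary condition in \eqref{eq:tt2}, completing the argument.

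The hard part, and the step I would treat most carefully, is justifying this covariant transport of tangential traces at the regularity available here: $\mathbf{F}$ is only bi-Lipschitz, so $D\mathbf{F}\in L^\infty$ and the traces live in a dual trace space of type $H^{-1/2}(\mathrm{div}_{\partial\Omega},\partial\Omega)$ rather than pointwise, so the clean differential-forms naturality must be recast weakly. I would therefore establish the trace identity in the weak sense starting from Green's formula
\begin{equation*}
\int_\Omega\big((\nabla\wedge\mathbf{Q})\cdot\bvarphi-\mathbf{Q}\cdot(\nabla\wedge\bvarphi)\big)\,\mathrm{d}\mathbf{x}=\int_{\partial\Omega}(\nu\wedge\mathbf{Q})\cdot\bvarphi\,\mathrm{d}s,
\end{equation*}
tested against a smooth field $\widetilde{\bvarphi}$ on $\widetilde{\Omega}$ pulled back to $\Omega$ by the same covariant rule. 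Using the change of variables $\widetilde{\mathbf{x}}=\mathbf{F}(\mathbf{x})$ and the algebraic identities for $(D\mathbf{F}^T)$ and $|D\mathbf{F}|^{-1}D\mathbf{F}(\cdot)D\mathbf{F}^T$ already underlying \cref{lem:t1}, the two volume integrals over $\widetilde{\Omega}$ convert into the corresponding ones over $\Omega$, so the boundary pairings on $\partial\widetilde{\Omega}$ and $\partial\Omega$ match under the correspondence $\widetilde{\bvarphi}\leftrightarrow\bvarphi$. Once this weak transformation rule for $\nu\wedge(\cdot)$ is in hand, the equality of tangential traces passes through verbatim and the proof is complete.
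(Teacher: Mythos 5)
Your proposal is correct and follows essentially the same route as the paper: apply Lemma~\ref{lem:t1} to each interior Maxwell pair separately, then verify that the tangential-trace (transmission) conditions are preserved under the covariant pull-back, which the paper likewise dispatches by invoking the argument of \cite{LZ}. Your weak-formulation justification of the trace transport via Green's formula is simply a more detailed spelling-out of the step the paper leaves to that reference.
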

\begin{proof}
Using Lemma~\ref{lem:t1}, one can readily have that the EM fields in \eqref{eq:tt1} and \eqref{eq:tt2} are related by the pull-back while the material parameters are related by the push-forward through the transformation $\mathbf{F}$. Hence, it is sufficient to show that the transmission conditions on $\partial\Omega$ and $\partial\widetilde\Omega$ are invariant under the transformation, which can be straightforwardly shown by following the argument in \cite{LZ} in proving Lemma~\ref{lem:t1}.
\end{proof}

Let us consider a specific case by letting $\widetilde\Omega$ be obtained through a translation of $\Omega$, namely $\widetilde\Omega=\mathbf{z}+\Omega$ with $\mathbf{z}\in\mathbb{R}^3$. In such a case, we call $\widetilde\Omega$ a copy of $\Omega$. Suppose $\mathbf{E}_j(\mathbf{x})$, $\mathbf{x}\in\Omega$ and $j=1, 2$, are transmission eigenfunctions to \eqref{eq:tcase12} associated with $(\Omega, \varepsilon)$. Using Theorem~\ref{thm:t1}, it is straightforward to verify that $\widetilde{\mathbf{E}}_j(\mathbf{x}-\mathbf{z})$, $\mathbf{x}\in\widetilde\Omega$ and $j=1, 2$, are transmission eigenfunctions associated with $(\widetilde\Omega, \tilde\varepsilon)$, where $\tilde\varepsilon(\mathbf{x})=\varepsilon(\mathbf{x}-\mathbf{z})$ for $\mathbf x\in\widetilde\Omega$. This fact shall be used in our subsequent study of the artificial electromagnetic mirage.

Next, we consider the case that the underlying domain $\Omega$ has multiply connected components. {To that end, we first recall that for a fixed transmission eigenvalue $k\in\mathbb{R}_+$ in \eqref{eq:tcase12}, the corresponding eigenspace is finite-dimensional \cite{CHreview}.} In what follows, we let $\mathcal{S}_{k, (\Omega, \varepsilon)}$ signify the eigenspace associated with the transmission eigenvalue $k$ and the medium $(\Omega, \varepsilon)$ in \eqref{eq:tcase12}.

\begin{thm}\label{thm:t2}
Consider the transmission eigenvalue problem \eqref{eq:tcase12} and assume that $(\Omega, \varepsilon)=\cup_{j=1}^l (\Omega_j, \varepsilon_j)$, where each $\Omega_j$ is a simply-connected component of $\Omega$. Then it holds that
\begin{equation*}\label{eq:t21}
\mathrm{dim}(\mathcal{S}_{k, (\Omega_j, \varepsilon_j)})=p_j \quad\mbox{and}\quad \mathcal{S}_{k, (\Omega_j, \varepsilon_j)}=\mathrm{Span}\{(\mathbf{E}_1^{\alpha_j}, \mathbf{E}_2^{\alpha_j})_{\alpha_j=1}^{p_j}\}
\end{equation*}
if and only if
\begin{equation*}\label{eq:t22}
\mathrm{dim}(\mathcal{S}_{k, (\Omega, \varepsilon)})=\sum_{j=1}^{l} p_j:=p\quad\mbox{and}\quad \mathcal{S}_{k, (\Omega, \varepsilon)}=\mathrm{Span}\{(\mathbf{E}_1^\alpha, \mathbf{E}_2^\alpha)_{\alpha=1}^p\},
\end{equation*}
where $(\mathbf{E}_1^\alpha, \mathbf{E}_2^\alpha)$ is of the form
\begin{equation*}\label{eq:t23}
(\mathbf{E}_1^\alpha, \mathbf{E}_2^\alpha)=(\mathbf{E}_1^{\alpha_j}, \mathbf{E}_2^{\alpha_j})\chi_{\Omega_j}+(\mathbf{0}, \mathbf{0})\chi_{\cup_{\beta=1, \beta\neq j}^l\Omega_\beta},
\end{equation*}
for a certain $1\leq j\leq l$.

\end{thm}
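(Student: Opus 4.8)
The plan is to exploit that, since the $\Omega_j$ are the connected components of $\Omega$, the transmission system \eqref{eq:tcase12} posed on $\Omega$ decouples completely into the $l$ independent systems posed on the individual $\Omega_j$. The differential relations in \eqref{eq:tcase12} are local, and the boundary $\partial\Omega$ is the disjoint union $\cup_{j=1}^l\partial\Omega_j$; hence a pair $(\mathbf{E}_1,\mathbf{E}_2)$ solves \eqref{eq:tcase12} on $\Omega$ if and only if each restriction $(\mathbf{E}_1|_{\Omega_j},\mathbf{E}_2|_{\Omega_j})$ solves the corresponding problem on $\Omega_j$. The entire statement will follow once this decoupling is promoted to a linear isomorphism of eigenspaces.

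Concretely, I would introduce the restriction map
\[
R:\ \mathcal{S}_{k,(\Omega,\varepsilon)}\longrightarrow\bigoplus_{j=1}^l\mathcal{S}_{k,(\Omega_j,\varepsilon_j)},\qquad (\mathbf{E}_1,\mathbf{E}_2)\longmapsto\big((\mathbf{E}_1|_{\Omega_j},\mathbf{E}_2|_{\Omega_j})\big)_{j=1}^l,
\]
together with its candidate inverse, the extension-by-zero map $E$ which sends a tuple of component eigenfunctions to the pair on $\Omega$ equal to the given data on $\Omega_j$ and to $(\mathbf{0},\mathbf{0})$ on $\cup_{\beta\neq j}\Omega_\beta$. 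That $R$ lands in the claimed target is immediate from the locality noted above; the substance of the argument is to show that $E$ is well defined and that $R$ and $E$ are mutually inverse.

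This is the step I expect to require the most care: I must verify that extension by zero produces a genuine element of $H(\mathrm{curl},\Omega)\times H(\mathrm{curl},\Omega)$ satisfying \eqref{eq:tcase12}. The relevant structural fact is that, as connected components of the open set $\Omega$, the $\Omega_j$ are relatively open and closed in $\Omega$ with mutually disjoint closures, so that $\Omega$ carries no internal interface; consequently $H(\mathrm{curl},\Omega)$ splits as $\bigoplus_{j=1}^l H(\mathrm{curl},\Omega_j)$ and both $\nabla\wedge$ and $\nabla\cdot$ act diagonally with respect to this splitting. Hence no spurious distributional curl or divergence arises across the components, the volume equations and the divergence constraints of \eqref{eq:tcase12} propagate componentwise, and the transmission conditions on $\partial\Omega=\cup_{j=1}^l\partial\Omega_j$ reduce exactly to those assumed on each $\partial\Omega_j$. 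It then follows that $E$ maps into $\mathcal{S}_{k,(\Omega,\varepsilon)}$ and that $R\circ E=\mathrm{id}$ and $E\circ R=\mathrm{id}$, so $R$ is a linear isomorphism.

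With the isomorphism $\mathcal{S}_{k,(\Omega,\varepsilon)}\cong\bigoplus_{j=1}^l\mathcal{S}_{k,(\Omega_j,\varepsilon_j)}$ in hand, the asserted equivalence reduces to bookkeeping, read in both directions. The dimension identity $\mathrm{dim}(\mathcal{S}_{k,(\Omega,\varepsilon)})=\sum_{j=1}^l p_j=p$ is precisely additivity of dimension under direct sums. For the basis statement, the image under $E$ of a basis $(\mathbf{E}_1^{\alpha_j},\mathbf{E}_2^{\alpha_j})_{\alpha_j=1}^{p_j}$ of the $j$-th summand is exactly the family of pairs supported on the single component $\Omega_j$ and vanishing on $\cup_{\beta\neq j}\Omega_\beta$, i.e. the functions displayed in the theorem; as $j$ ranges over $1,\dots,l$ these families jointly form a basis of the direct sum, hence a basis of $\mathcal{S}_{k,(\Omega,\varepsilon)}$, and conversely any basis of $\mathcal{S}_{k,(\Omega,\varepsilon)}$ composed of such singly-supported pairs restricts under $R$ to bases of the summands. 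Reading this correspondence in both directions yields the claimed equivalence.
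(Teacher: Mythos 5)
Your proposal is correct and follows essentially the same route as the paper: the paper's (much terser) proof likewise verifies that zero-extensions of component eigenfunctions solve \eqref{eq:tcase12} on $\Omega$ and that restrictions of eigenfunctions on $\Omega$ solve the problem on each $\Omega_j$, then concludes the span and dimension statements. Your version merely packages this as an explicit isomorphism $\mathcal{S}_{k,(\Omega,\varepsilon)}\cong\bigoplus_{j=1}^l\mathcal{S}_{k,(\Omega_j,\varepsilon_j)}$ and spells out the $H(\mathrm{curl})$ splitting that the paper leaves implicit.
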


\begin{proof}

First, it is straightforward to verify that $k$ is a transmission eigenvalue associated with $(\Omega, \varepsilon)$ if and only if $k$ is a transmission eigenvalue associated with each $(\Omega_j, \varepsilon_j)$ for $j=1, 2, \ldots, l$.

Next, by using \eqref{eq:tcase12}, it is directly verified that the non-trivial functions defined in \eqref{eq:t23} are a pair of transmission eigenfunctions associated with $(\Omega, \varepsilon)$. On the other hand, if $(\mathbf{E}_1, \mathbf{E}_2)$ are transmission eigenfunctions to \eqref{eq:tcase12}, it can be shown that $(\mathbf{E}_1|_{\Omega_j}, \mathbf{E}_2|_{\Omega_j})$ are transmission eigenfunctions associated with $(\Omega_j, \varepsilon_j)$. Hence, one has
\[
(\mathbf{E}_1|_{\Omega_j}, \mathbf{E}_2|_{\Omega_j})\in \mathcal{S}_{k, (\Omega_j, \varepsilon_j)}=\mathrm{Span}\{(\mathbf{E}_1^{\alpha_j}, \mathbf{E}_2^{\alpha_j})_{\alpha_j=1}^{p_j}\},
\]
which readily implies that $\mathcal{S}_{k, (\Omega, \varepsilon)}$ is spanned by the $p$ functions defined in \eqref{eq:t23}.

The proof is complete.

\end{proof}

Consider the special example again as above by letting $\Omega$ be simply-connected and $\widetilde\Omega$ be its copy. In order to further simplify the exposition, we let $\varepsilon$ be constant. According to Theorem~\ref{thm:t2}, the multiplicity of the eigenvalue $k$ associated with $(\Omega\cup\widetilde\Omega, \varepsilon)$ doubles that of $(\Omega, \varepsilon)$. Moreover, the corresponding eigenfunctions are given by
\begin{equation}\label{eq:c1}
(\mathbf{E}_1, \mathbf{E}_2)\chi_{\Omega}+(\mathbf{0}, \mathbf{0})\chi_{\widetilde\Omega},\quad (\mathbf{0}, \mathbf{0})\chi_{\Omega}+(\widetilde{\mathbf{E}}_1, \widetilde{\mathbf{E}}_2)\chi_{\widetilde\Omega},
\end{equation}
as well as the corresponding linear combination of the above two types of resonant modes. Here, $(\mathbf{E}_1,\mathbf{E}_2)$ are the resonant modes associated with $(\Omega, \varepsilon)$ and $(\widetilde{\mathbf{E}}_1,\widetilde{\mathbf{E}}_2)$ are the resonant modes associated with $(\widetilde{\Omega}, \varepsilon)$, and by Theorem~\ref{thm:t1}, $\widetilde{\mathbf{E}}_j(\mathbf{x})=\mathbf{E}_j(\mathbf{x}-\mathbf{z})$, $\mathbf{x}\in\widetilde\Omega$. This fact shall also be used in our subsequent study of the artificial mirage.

Finally, we would like to point out that Theorem~\ref{thm:t2} can easily extended to the case with more general transmission eigenvalue problems as considered in Theorem~\ref{thm:t1}. However, we shall not explore this point further.

\section{Artificial electromagnetic mirage}

In this section, we propose a novel and interesting application of the geometrical and topological properties establish in the previous sections for the transmission resonances.
The practical scenario can be described as follows.

Let $(\Omega, \varepsilon)$ be an optical object, where $\Omega$ is simply-connected and $\varepsilon$ is  constant. Our goal is to generate certain incident electromagnetic field $(\mathbf{E}^i, \mathbf{H}^i)$ impinging upon $(\Omega,\varepsilon)$ with the following properties:
\begin{itemize}
\item $(\mathbf{E}^i, \mathbf{H}^i)$ is a solution to \eqref{eq:in1};
\item the propagation of the EM fields is nearly non-interrupted by the presence of the physical object $(\Omega, \varepsilon)$;
\item the total wave field is localized on the boundary of a predefined set $\widetilde{\Omega}\subset\mathbb{R}^3\backslash\bar{\Omega}$, where $\widetilde{\Omega} = \Omega + \mathbf{z}$ is simply a translation of $\Omega$. This copy is referred to as an artificial mirage image of the optical object $(\Omega, \varepsilon)$.
\end{itemize}

Now we introduce the scheme to get the desired $(\mathbf{E}^i, \mathbf{H}^i)$ given $\Omega, \varepsilon$ and $\mathbf{z}$. First, we recall the Fourier extension of the EM fields, which is also known as the Herglotz approximation. For a tangential vector field $\mathbf{g}\in L^2(\mathbb{S}^2)^3$ on the unit sphere, we define
\begin{equation}\label{eq:h1}
\mathbf{E}_{\mathbf{g}}(\mathbf{x}):=\int_{\mathbb{S}^2} e^{\mathrm{i}k\mathbf{x}\cdot \mathbf{d}} \mathbf{g}(\mathbf{d})\, ds(\mathbf{d}),\quad \mathbf{H}_{\mathbf{g}}(\mathbf{x}):=\frac{1}{\mathrm{i}k}\nabla\wedge\mathbf{E}_{\mathbf{g}}(\mathbf{x})
\end{equation}
to be the Herglotz fields associated with $\mathbf{g}$. For any regular domain $\Sigma$ with a connected complement in $\mathbb{R}^3$, it is known that the solution $(\mathbf{E}, \mathbf{H})\in H(\mathrm{curl}, \Sigma)\times H(\mathrm{curl}, \Sigma)$ to the following system
\begin{equation}\label{eq:ee1}
\nabla\wedge\mathbf{E} -\mathrm{i}k\mathbf{H}=\mathbf{0},\quad \nabla\wedge\mathbf{H}+\mathrm{i}k\mathbf{E}=\mathbf{0}\quad\mbox{in}\ \Sigma,
\end{equation}
can be approximated by the Herglotz fields to an arbitrary accuracy; see \cite{Wec}. It is noted that the Herglotz approximation result in \cite{Wec} requires a bit stronger regularities on the EM fields in \eqref{eq:ee1}, which we shall always assume in our subsequent discussion.

Recall that $(\widetilde\Omega, \varepsilon)$ is an exact copy of $(\Omega, \varepsilon)$, where $\widetilde\Omega:=\mathbf{z}+\Omega$ and $\mathbf{z}\in\mathbb{R}^3$.  Consider the transmission eigenvalue problem \eqref{eq:tcase12} associated with $(\Omega, \varepsilon)\cup (\widetilde\Omega, \varepsilon)$. By Theorem~\ref{thm:t2},  and in particular the example discussed after the theorem, we know that the corresponding transmission eigenfunctions are given by \eqref{eq:c1}. By the geometrical results in Theorem \ref{thm:main} in Section 2, we know that there exist infinitely many eigenfunctions associated with $(\Omega, \varepsilon)$ which are surface-localized around $\partial\Omega$. Hence, we can choose one of such resonant modes such that $\mathbf{E}_2$ is surface-localized around $\partial\Omega$ and so $\widetilde{\mathbf{E}}_2$ is surface-localized around $\partial\widetilde{\Omega}$, (see Fig. \ref{fig:mirage} for a schematic illustration).
\begin{figure}
 {\includegraphics[width=0.7\textwidth]{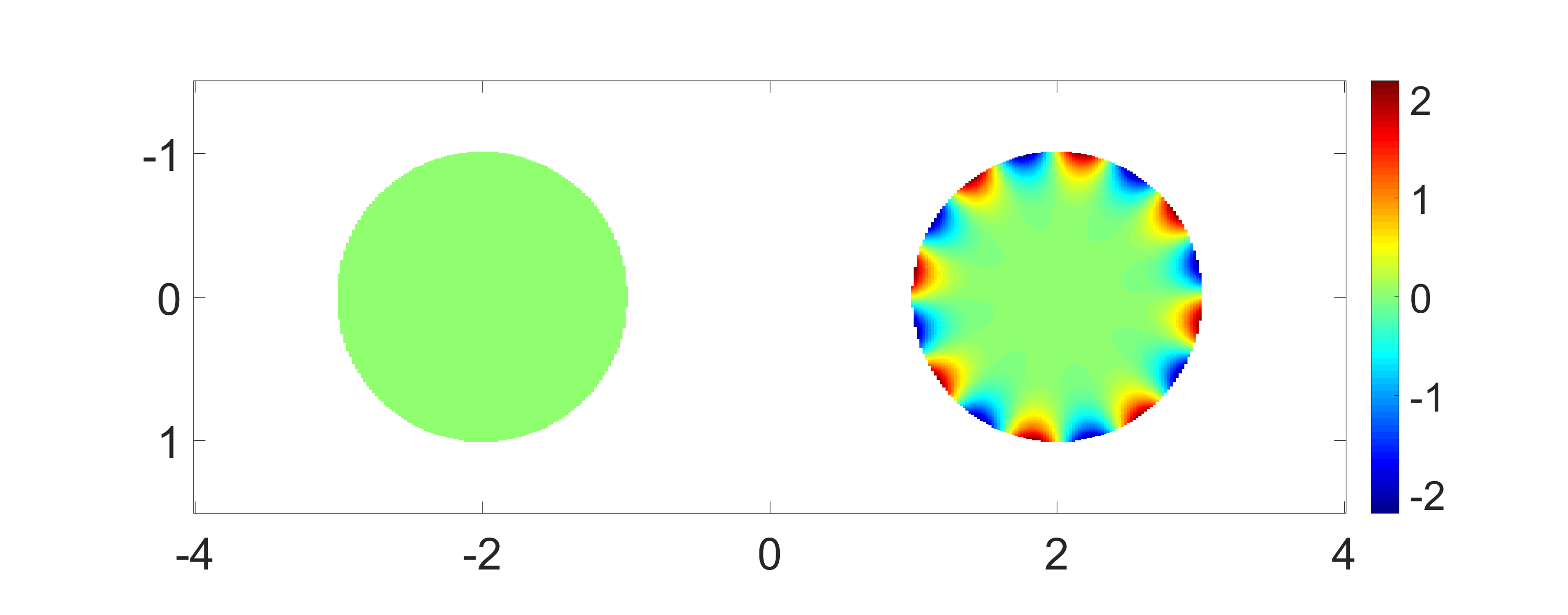}}
    \caption{\label{fig:mirage} Contour plot of the transmission eigenfunction that is vanished on the left domain but localized on the right domain.}
\end{figure}
Next, we consider the second resonant mode in \eqref{eq:c1} and let ${\mathbf{E}}_{\mathbf{g}}^i$ be the Herglotz field associated with $\mathbf{0}\cdot\chi_\Omega+\widetilde{\mathbf{E}}_2\cdot\chi_{\widetilde\Omega}$, i.e.
\begin{equation}\label{eq:h2}
\mathbf{E}^i_{\mathbf{g}}\approx \mathbf{0}\cdot\chi_\Omega+\widetilde{\mathbf{E}}_2\cdot\chi_{\widetilde\Omega}\quad\mbox{in}\ \ \Omega\cup\widetilde\Omega.
\end{equation}
Let $\mathbf{H}_{\mathbf{g}}^i$ be given according to the second formula in \eqref{eq:h1} associated with $\mathbf{E}_{\mathbf{g}}^i$. Clearly, $(\mathbf{E}_{\mathbf{g}}^i, \mathbf{H}_{\mathbf{g}}^i)$ are entire solutions to \eqref{eq:in1}, and can be used as incident fields, namely
\begin{equation}\label{eq:h2.5}
(\mathbf{E}^i, \mathbf{H}^i)=(\mathbf{E}_{\mathbf{g}}^i, \mathbf{H}_{\mathbf{g}}^i).
\end{equation}
Next, we consider the electromagnetic scattering problem \eqref{eq:maxwell1} associated with the optical object $(\Omega, \varepsilon)$ and incident field $(\mathbf{E}^i,\mathbf{H}^i)$. It is straightforward to deduce from \eqref{eq:in1} and \eqref{eq:maxwell1} that the corresponding scattering field $(\mathbf{E}^s, \mathbf{H}^s)$ satisfies
\begin{equation}\label{eq:maxwell12}
\left\{
\begin{aligned}
&~~\nabla\wedge\mathbf{E}^s-\mathrm{i}k\mathbf{H}^s=\mathbf{0} && \mbox{in}\ \mathbb{R}^3,\\
&~~\nabla\wedge\mathbf{H}^s+\mathrm{i}k\varepsilon\mathbf{E}^s=\mathrm{i}k(1-\varepsilon)\mathbf{E}^i  &&\mbox{in}\ \mathbb{R}^3,\\
&\lim_{|{\mathbf x}| \rightarrow \infty}(\mathbf{H}^{s} \wedge {\mathbf x}-|{\mathbf x} | \mathbf{E}^{s})={\mathbf 0}.
\end{aligned}
\right.
\end{equation}
By \eqref{eq:h2}, we see that
\begin{equation*}\label{eq:h3}
\mathrm{i}k(1-\varepsilon)\mathbf{E}^i\approx \mathbf{0}\ \ \mbox{in}\ \mathbb{R}^3.
\end{equation*}
Therefore, by the well-posedness of the forward scattering system \eqref{eq:maxwell12}, we readily conclude that
\begin{equation*}\label{eq:h4}
\mathbf{E}^s, \mathbf{H}^s\approx \mathbf{0}\quad\mbox{in}\ \ \mathbb{R}^3\backslash\overline{\Omega}.
\end{equation*}
That is, for the total fields $\mathbf{E}$ and $\mathbf{H}$, one has
\begin{equation}\label{eq:h5}
(\mathbf{E}, \mathbf{H})\approx (\mathbf{E}_{\mathbf{g}}^i, \mathbf{H}_{\mathbf{g}}^i)\quad\mbox{in}\ \ \mathbb{R}^3\backslash\overline{\Omega}.
\end{equation}
Finally, by \eqref{eq:h2} and \eqref{eq:h5}, we clearly have that the total electric field $\mathbf{E}$ is localized around $\partial\widetilde\Omega$. That is, a ``shining" area, i.e. $\partial\widetilde\Omega$, is produced and the electromagnetic fields inside $\widetilde\Omega$ are nearly zero.

To summarize, given an optical object $(\Omega, \varepsilon)$, if one uses the incident fields designed in \eqref{eq:h2} and \eqref{eq:h2.5} to illuminate  the object, an artificial mirage copy $(\widetilde\Omega, \varepsilon)$ will be produced. In what follows, we shall provide a few examples to illustrate the artificial mirage effect. However, in order to obtain eminent surface-localizing behaviour of the transmission eigenfunctions, one needs to compute the transmission eigenfunctions associated to \eqref{eq:tcase12} with either very large $\varepsilon$ or very large $k$. This requires extensive and heavy costs in numerically computing the three-dimensional transmission eigenvalue problem \eqref{eq:tcase12} as well as the corresponding scattering system \eqref{eq:maxwell1}. Due to limited computational resources, we could only afford to conduct the two-dimensional simulations, which correspond to the TM (transverse magnetic) scattering.

\begin{figure}
    \subfigure[]{\includegraphics[width=0.48\textwidth]{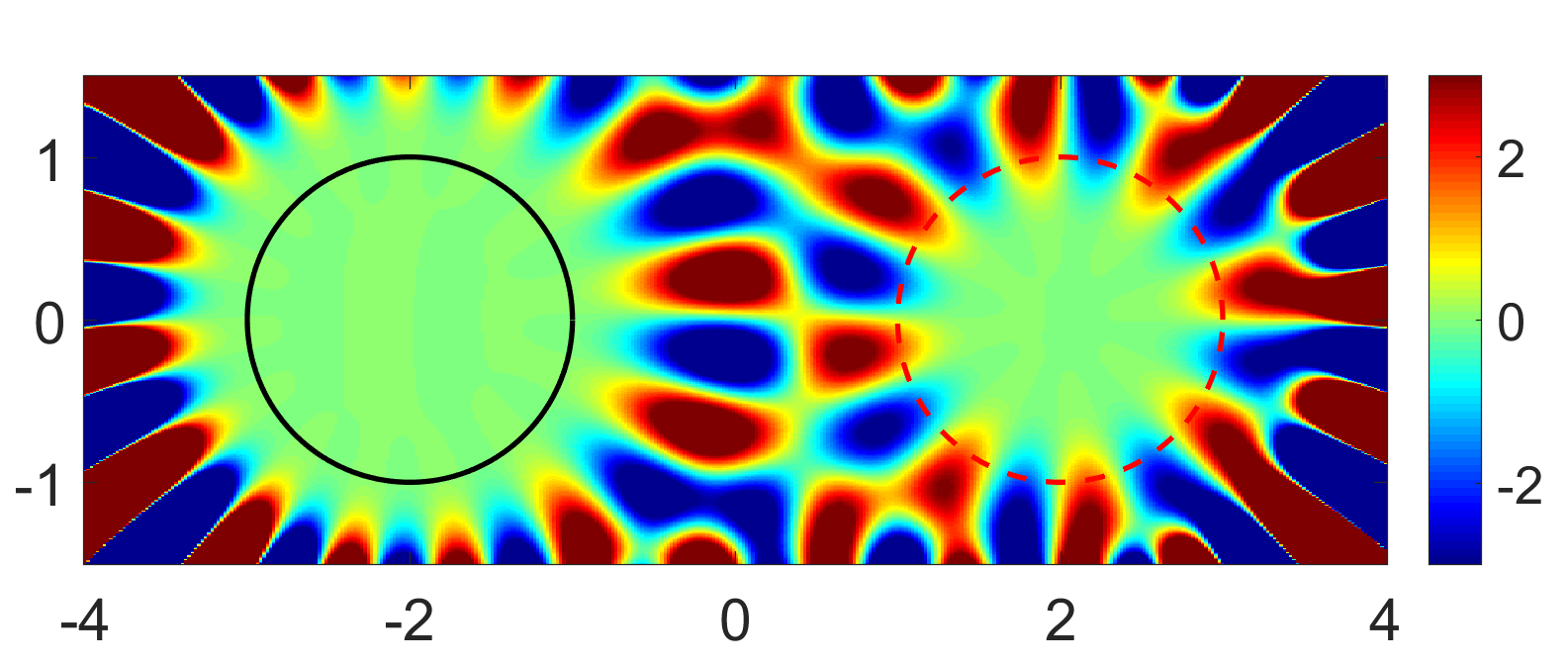}}
    \subfigure[]{\includegraphics[width=0.48\textwidth]{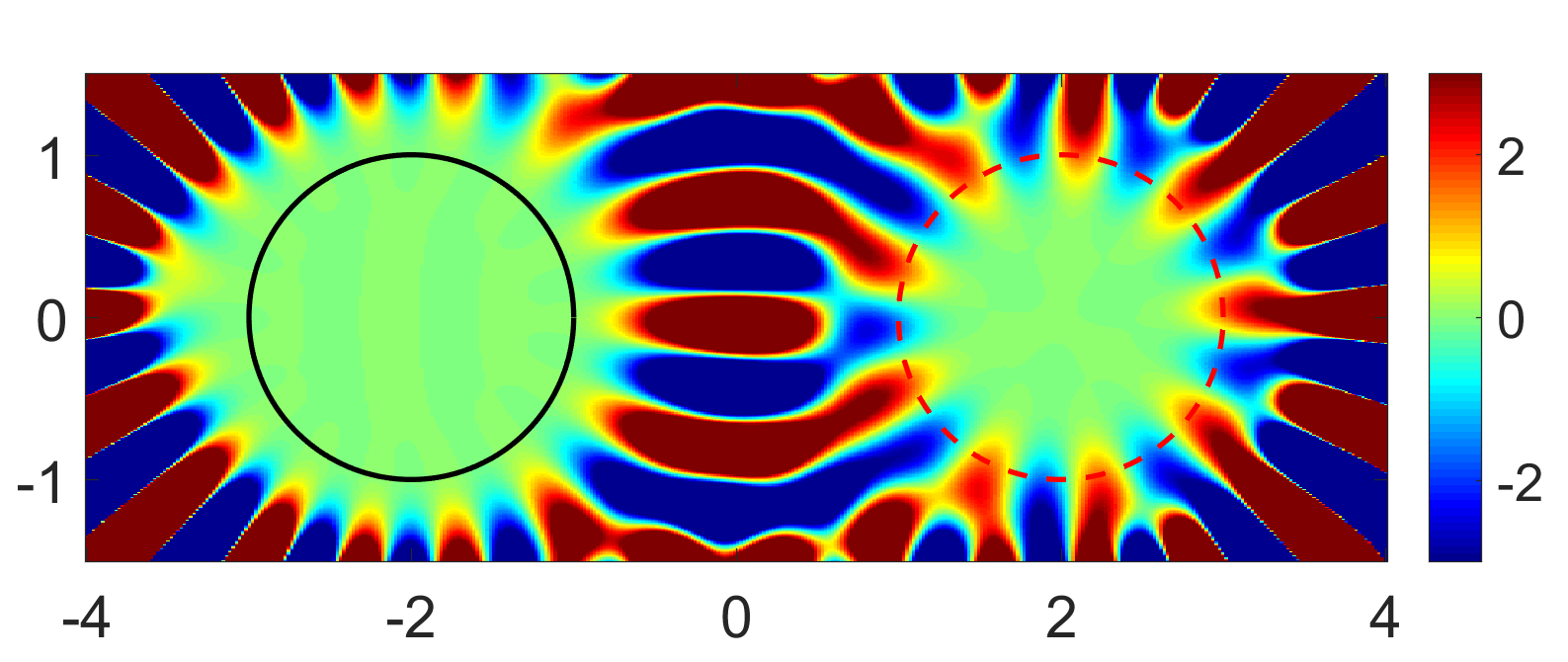}}\\
    \subfigure[]{\includegraphics[width=0.48\textwidth]{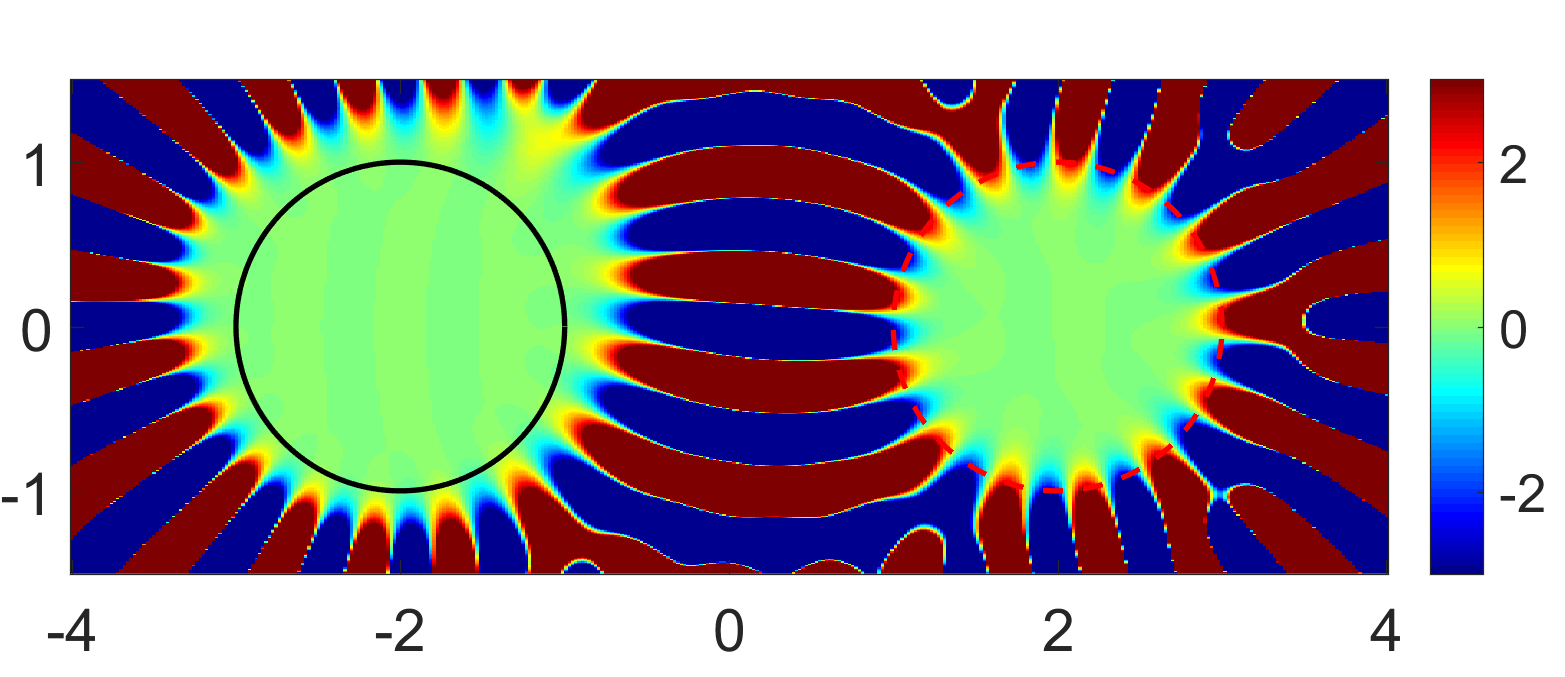}}
    \subfigure[]{\includegraphics[width=0.48\textwidth]{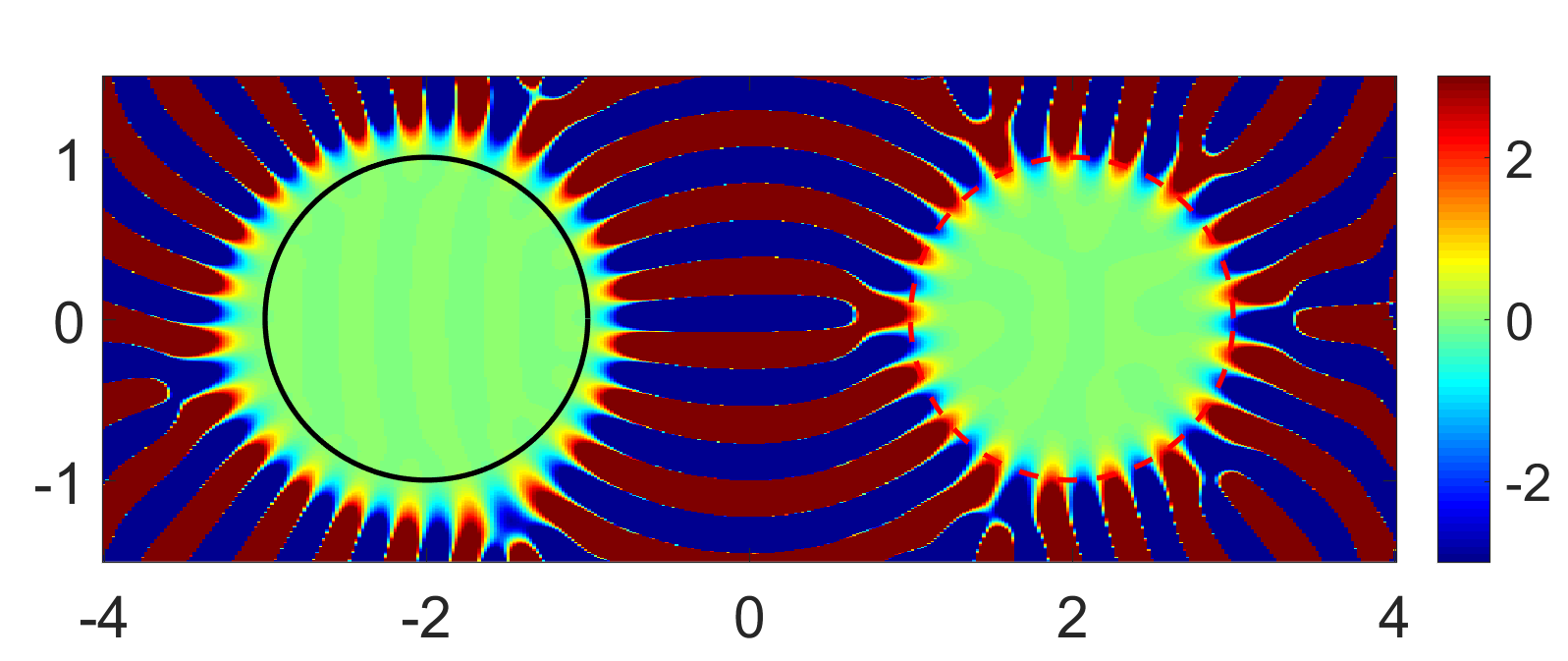}}
    \caption{\label{fig:two-circle1} Contour plots of the wave field for a disk with different frequencies. (a)  $k=7.38$, (b) $k=8.66$, (c) $k=10.08$, (d) $k=14.09$.}
\end{figure}

\begin{example}
In this example, we present a mirage phenomenon for a unit disk centering at $(-2,0)$.
 The electric permittivity is given by $\epsilon_0=4$ in the disk and $\epsilon_0=1$ outside the disk.
 Figure \ref{fig:two-circle1} shows total field with different  frequencies, i.e., transmission eigenvalues, where the solid black line denotes the exact disk and the dotted red denotes the virtual disk.
  One can easily observe that the wave field is surface-localized around the boundary virtual disk, and the virtual object could be easily identified. By comparing the results in Figure \ref{fig:two-circle1}, we can observe that the total field has a better localized result when the wavenumber is larger.

\end{example}

\begin{example}
In this example, we are concerned with the mirage phenomenon for a more general case.
Let $\Omega$ be the triangle shaped domain which is described by the parametric representation
\begin{equation*}
  \mathbf{x}(t)=0.5(2+0.2\cos 3t)(\cos t, \ \sin t) \quad t\in[0,\, 2\pi],
\end{equation*}
 where the center is given by $(-2,0)$ and permittivity $\epsilon_0=4$ for $\mathbf{x}\in \Omega$.
 Figure \ref{fig:two-triangles} presents the wave field  around the exact domain (solid black line) and  virtual domain (dotted red line). It demonstrates very good imaging performance of  the mirage for general convex domain.

\begin{figure}
    \subfigure[]{\includegraphics[width=0.48\textwidth]{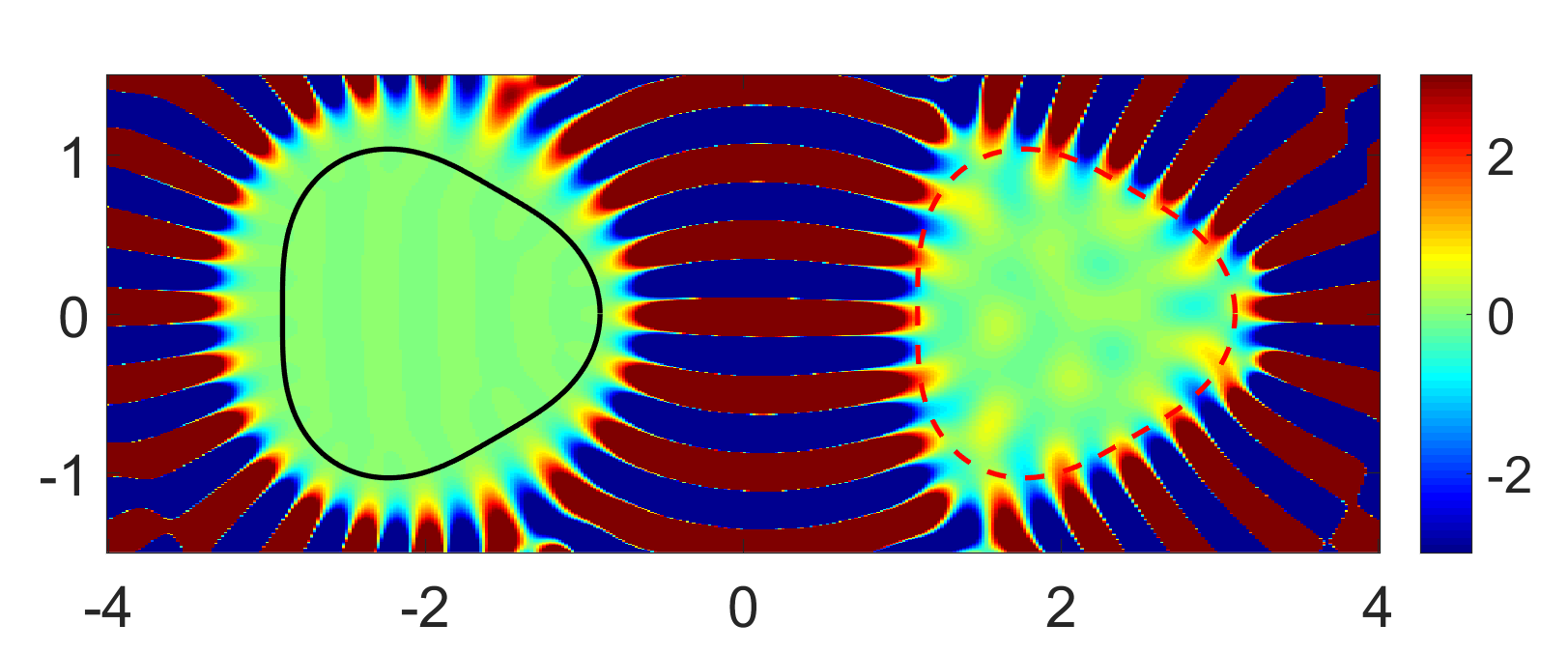}}
    \subfigure[]{\includegraphics[width=0.48\textwidth]{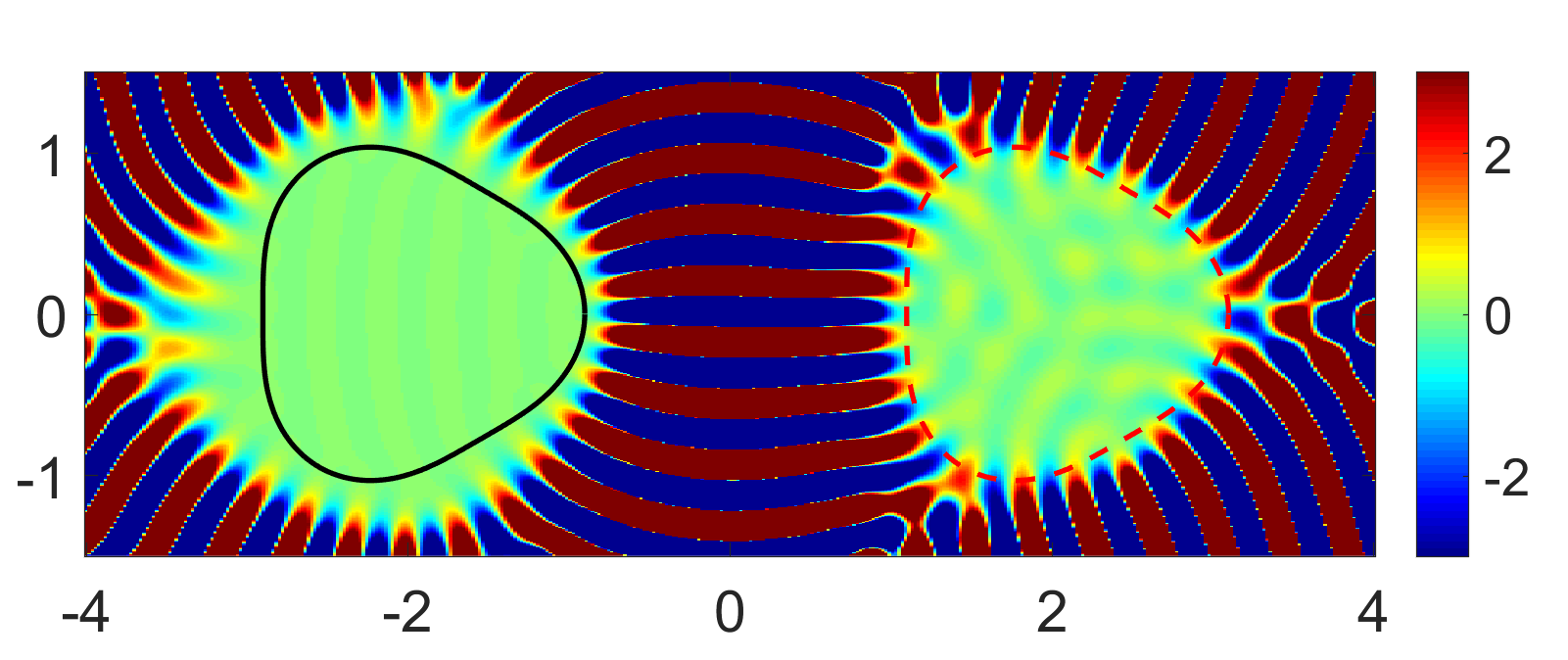}}
    \caption{\label{fig:two-triangles} Contour plots of the wave field for a triangle-shaped domain with different frequencies. (a) $k=13.25$, (b) $k=16.84$.} 
\end{figure}

\end{example}

\section*{Acknowledgment}
The work of Y. Deng was supported by NSF grant of China No. 11971487 and NSF grant
of Hunan No. 2020JJ2038. The work of H. Liu was supported by a startup grant from City
University of Hong Kong and Hong Kong RGC General Research Funds (projects 12301218,
12302919 and 12301420). The work of X. Wang was supported by the Hong Kong Scholars
Program grant under No. XJ2019005 and NSF grant of China(No. 11971133 and No. 12001140). The work of W. Wu was supported by the tier-2 startup grant from Hong Kong Baptist University and Hong Kong RGC General Research Funds (projects 12302219 and 12300520).


\end{document}